\definecolor{my1}{cmyk}{0,.6,0,0}
\definecolor{my2}{cmyk}{.3,.0,.0,.0}
\newcommand{\Gb}{\ensuremath{\mathbf{G}}\xspace}
\newcommand{\Hb}{\ensuremath{\mathbf{H}}\xspace}
\newcommand{\Kb}{\ensuremath{\mathbf{K}}\xspace}
\newcommand{\Lb}{\ensuremath{\mathbf{L}}\xspace}
\newcommand{\Qb}{\ensuremath{\mathbf{Q}}\xspace}
\newcommand{\Vb}{\ensuremath{\mathbf{V}}\xspace}
\newcommand{\nat}{\ensuremath{\mathbb{N}}\xspace}
\newcommand{\Cs}{\ensuremath{\mathcal{C}}\xspace}
\newcommand{\Ds}{\ensuremath{\mathcal{D}}\xspace}
\newcommand{\Gs}{\ensuremath{\mathcal{G}}\xspace}
\newcommand{\Is}{\ensuremath{\mathcal{I}}\xspace}
\newcommand{\Ps}{\ensuremath{\mathcal{P}}\xspace}
\newcommand{\Fs}{\ensuremath{\mathcal{F}}\xspace}
\newcommand{\Rs}{\ensuremath{\mathcal{R}}\xspace}
\newcommand{\quotienting}{quotient-closed\xspace}
\newcommand{\vari}{\quotienting Boolean algebra\xspace}
\newcommand{\pvari}{quotient-closed lattice\xspace}
\newcommand{\pvaris}{quotient-closed lattices\xspace}
\newcommand{\at}{\ensuremath{\textup{AT}}\xspace}
\newcommand{\bool}[1]{\ensuremath{{Bool\/}(#1)}\xspace}
\newcommand{\pol}[1]{\ensuremath{{Pol\/}(#1)}\xspace}
\newcommand{\bpol}[1]{\ensuremath{{BPol\/}(#1)}\xspace}
\newcommand{\copol}[1]{\ensuremath{{co\/}\textup{-}{\it Pol\/}(#1)}\xspace}
\newcommand{\cocl}[1]{\ensuremath{{co\/}\textup{-}#1}}
\newcommand{\bswd}{\ensuremath{\mathcal{B}\Sigma_{2}(<)}\xspace}
\newcommand{\bswsd}{\ensuremath{\mathcal{B}\Sigma_{2}(<,+1)}\xspace}
\newcommand{\imprint}{imprint\xspace}
\newcommand{\imprints}{imprints\xspace}
\newcommand{\Imprints}{Imprints\xspace}
\newcommand{\tame}{multiplicative\xspace}
\newcommand{\Ratms}{Rating maps\xspace}
\newcommand{\ratms}{rating maps\xspace}
\newcommand{\ratm}{rating map\xspace}
\newcommand{\Nice}{Nice\xspace}
\newcommand{\nice}{nice\xspace}
\newcommand{\mratm}{multiplicative rating map\xspace}
\newcommand{\mratms}{multiplicative rating maps\xspace}
\newcommand{\Mratms}{Multiplicative rating maps\xspace}
\newcommand{\prin}[2]{\ensuremath{\Is[#1](#2)}\xspace}
\newcommand{\opti}[2]{\ensuremath{\Is_{#1}\left[#2\right]}\xspace}
\newcommand{\popti}[3]{\ensuremath{\Ps_{#1}^{#2}[#3]}\xspace}
\newcommand{\pocopti}{\popti{\pol{\Cs}}{\alpha}{\rho}}
\newcommand{\bpopti}{\opti{\bpol{\Cs}}{\rho}}
\newcommand{\cbpopti}{\popti{\bpol{\Cs}}{\Cs}{\rho}}
\newcommand{\rbpol}[1]{\ensuremath{\Rs^\rho_{#1}}\xspace}
\newcommand{\rbpols}{\ensuremath{\rbpol{S}}\xspace}
\newcommand{\veps}{\ensuremath{\varepsilon}\xspace}
\newcommand{\typ}[2]{\ensuremath{[#1]_{#2}}\xspace}
\newcommand{\ctype}[1]{\typ{#1}{\Cs}}
\newcommand{\cmult}{\ensuremath{\mathbin{\scriptscriptstyle\bullet}}\xspace}
\newcommand{\upset}[2][\Cs]{\ensuremath{{\mathord\uparrow_{#1}\,#2}}}
\newcommand{\dclos}{\ensuremath{\mathord{\downarrow}}\xspace}
\newcommand{\dclosp}[1]{\ensuremath{\mathord{\downarrow_{#1}}}\xspace}
\newcommand{\dclosr}{\dclosp{R}}
\newcommand{\equc}{\ensuremath{\sim_\Cs}\xspace}
\newcommand{\canoc}{\ensuremath{\leqslant_\Cs}\xspace}
\newcommand{\canod}{\ensuremath{\leqslant_\Ds}\xspace}
\newcommand{\canog}{\ensuremath{\leqslant_\Gs}\xspace}
\newcommand{\inv}{\ensuremath{^{-1}}}
\newcommand{\sclac}{\ensuremath{{A^*}/{\sim_{\Cs}}}\xspace}
\tikzstyle{nor}=[minimum size=0.35cm,draw,rounded rectangle,inner sep=2pt]
\tikzstyle{nod}=[minimum size=0.35cm,draw,circle,inner sep=2pt]
\tikzstyle{nok}=[minimum size=0.45cm,draw,circle,inner sep=0pt]
\tikzstyle{nof}=[minimum size=0.35cm,draw,circle,double,double
\tikzstyle{port}=[minimum size=0.35cm,draw,thick,rectangle,inner sep=2pt]
\tikzstyle{nop}=[minimum size=0.35cm,draw,thick,rectangle,inner sep=1pt,rotate=90]
\tikzstyle{nol}=[minimum size=0.35cm,draw,rounded rectangle,inner sep=1pt,rotate=90]
\tikzstyle{ar}=[line width=0.5pt,->,double]
\tikzstyle{siar}=[line width=1.5pt,->]
\tikzstyle{ars}=[line width=1.5pt,->,double]
\newcommand{\brataux}[2]{\ensuremath{\xi_{#1}[#2]}\xspace}
\newcommand{\bratauxd}{\brataux{\Ds}{\rho}}
\newcommand{\bratauxbc}{\brataux{\bpol{\Cs}}{\rho}}
\newcommand{\bratauxbd}{\brataux{\bool{\Ds}}{\rho}}
\newcommand{\lrataux}[3]{\ensuremath{\zeta_{#1}^{#2}[#3]}\xspace}
\newcommand{\lratauxd}{\lrataux{\Ds}{\alpha}{\rho}}
\newcommand{\quasi}[1]{\ensuremath{\mu_{#1}}\xspace}
\newcommand{\quasir}{\quasi{\rho}}
\newcommand{\quasit}{\quasi{\tau}}
\theoremstyle{plain}
\newtheorem{theorem}[thm]{Theorem}
\newtheorem{corollary}[thm]{Corollary}
\newtheorem{proposition}[thm]{Proposition}
\newtheorem{lemma}[thm]{Lemma}
\newtheorem{fct}[thm]{Fact}
\newtheorem*{claim}{Claim}
\theoremstyle{definition}
\newtheorem{example}[thm]{Example}
\newtheorem{remark}[thm]{Remark}
\newtheorem*{notation}{Notation}
\title{Separation for dot-depth two}
\author{Thomas~Place}
\address{LaBRI, Universit\'e de Bordeaux, Institut Universitaire de France}
\author{Marc~Zeitoun}
\email{tplace@labri.fr}
\email{mz@labri.fr}
\keywords{Words, regular languages, concatenation hierarchies, first-order logic, quantifier alternation, membership, separation} 
\subjclass{F.4.1,F.4.3}
\thanks{Supported by the Agence Nationale de la Recherche, DeLTA project (ANR-16-CE40-0007).}
\begin{document}
\begin{abstract}
  The dot-depth hierarchy of Brzozowski and Cohen classifies the star-free languages of finite words. By a theorem of McNaughton and Papert, these are also the first-order definable languages. The dot-depth rose to prominence following the work of Thomas, who proved an exact correspondence with the quantifier alternation hierarchy of first-order logic: each level in the dot-depth hierarchy consists of all languages that can be defined with a prescribed number of quantifier blocks. One of the most famous open problems in automata theory is to settle whether the membership problem is decidable for each level: is it possible to decide whether an input regular language belongs to this level?

  Despite a significant research effort, membership by itself has only been solved for low levels. A recent breakthrough was achieved by replacing membership with a more general problem: \emph{separation}. Given two input languages, one has to decide whether there exists a third language in the investigated level containing the first language and disjoint from the second. The motivation for looking at separation is threefold: (1) while more difficult, it is more rewarding, as solving it requires a better understanding; (2) being more general, it provides a more convenient framework, and (3) all recent membership algorithms are actually reductions to separation for lower levels.

  We present a separation algorithm for dot-depth two. A key point is that while dot-depth two is our most prominent application, our theorem is more general. We consider a family of hierarchies, which includes the dot-depth: concatenation hierarchies. They are built through a generic construction process: one first chooses an initial class of languages, the basis, which serves as the lowest level in the hierarchy. Further levels are built by applying generic operations. Our main theorem states that for any concatenation hierarchy whose basis consists of finitely many languages, separation is decidable for level one. In the special case of the dot-depth, this can be lifted to level two using previously known results.
\end{abstract}

\maketitle

\section{Introduction}
\label{sec:intro}
\noindent\textbf{Concatenation hierarchies.}
Many fundamental problems about regular languages raised in the~70s~\cite{jep-openreg35} led to considerable advances, not only in automata theory but also in logic and algebra, thanks to the discovery of deep connections between these areas. Even if some of these questions are now well understood, a few others remain wide open, despite a wealth of research work spanning several decades. This is the case for the fascinating dot-depth problem~\cite{jep-dd45}, which has two elementary formulations: a language-theoretic one and a logical one. The language-theoretic one is the older of the two. It takes its roots in a theorem of Schützenberger~\cite{sfo} (see also~\cite{dkrh-textbook,dgfo}), which gives an algorithm to decide whether a regular language is star-free, \emph{i.e.}, can be expressed using union, complement and concatenation, but without the Kleene~star operator. This celebrated result was highly influential for three reasons:

\begin{itemize}
\item First, Schützenberger precisely formalized the objective of ``understanding the expressive power of a formalism'' through a decision problem called \emph{membership}, which asks whether an input regular language belongs to the class under~study.
\item Next, he developed a methodology for tackling it, which he applied to membership for the class of star-free languages.
\item Finally, McNaughton and Papert~\cite{mnpfo} established that star-free languages are exactly the first-order definable~ones.
\end{itemize}
This work highlighted the robustness of the notion of regularity, underlining the ties between automata theory and logic, and revealing new links with algebra. It also established membership as the reference problem for investigating classes of languages.

\medskip
Schützenberger's theorem led Brzozowski and Cohen to define the dot-depth hierarchy~\cite{BrzoDot}, an infinite classification~\cite{BroKnaStrict} of all star-free languages counting the number of alternations between concatenations and complements needed to define them. This definition is a particular instance of a generic construction process, which was formalized later and named \emph{concatenation hierarchies}. Any such hierarchy has a single parameter: a ``level $0$ class of languages'' (its \emph{basis}). Then, one uses two operations, polynomial and Boolean closure, to build two kinds of classes: half levels 1/2, 3/2, 5/2\dots and full levels 1, 2, 3\dots Given a class of languages~\Cs, its \emph{polynomial closure} \pol\Cs is the least class of languages containing~\Cs that is closed under union and marked concatenation ($K,L\mapsto KaL$, where $a$ is a letter). Its \emph{Boolean closure} \bool\Cs is the least class containing \Cs and closed under union and complement. For any full level $n$, the next half and full levels are built as follows:
\begin{itemize}
\item Level $n+\frac12$ is the polynomial closure of level~$n$.
\item Level $n+1$ is the Boolean closure of level $n+\frac12$.
\end{itemize}
Thus, a concatenation hierarchy is fully determined by its basis. Here, we are interested in hierarchies with a \emph{finite} basis, \emph{i.e.}, consisting of a finite number of regular languages.

\medskip
The most prominent hierarchies of this kind in the literature are the dot-depth itself, as well as the Straubing-Thérien hierarchy~\cite{StrauConcat,TheConcat}. They acquired this status when it was discovered~\cite{ThomEqu,PPOrder} that each of them coincides with the quantifier alternation hierarchy within an appropriate variant of first-order logic. These two variants have the same overall expressiveness but slightly different signatures (which impacts the properties that one can define at a given level of their quantifier alternation~hierarchies).

These correspondences motivated a research program to solve membership for all levels of both hierarchies, thus also characterizing the alternation hierarchies of first-order logic. However, progress has been slow. Until recently, the classes that were solved for both variants are only level 1/2~\cite{arfi87,pwdelta}, level~1~\cite{simon75,knast83} and level 3/2~\cite{arfi87,pwdelta,gssig2}. See~\cite{dgk-smallfragments} for a survey. Following these results, membership for level~2 remained open for a long time and was named the ``dot-depth 2~problem''.

\medskip\noindent\textbf{Separation.} Recently~\cite{pzqalt,pseps3j}, solutions were found for levels 2, 5/2 and 7/2. The key ingredient is a new problem stronger than membership: \emph{separation}. Rather than asking whether an input language belongs to the class \Cs under investigation, the \Cs-separation problem takes as input \emph{two} languages, and asks whether there exists a third one \emph{from \Cs} containing the first and disjoint from the second. While the interest in separation is recent, it has quickly replaced membership as the central question. A first practical reason is that separation proved itself to be a key ingredient in obtaining all recent membership results. See~\cite{PZ:Siglog15,pzgenconcat} for an overview. A striking example is provided by a crucial theorem of~\cite{pzqalt}. It establishes a generic reduction from \pol\Cs-separation to \Cs-membership which holds for any class~\Cs. Combined with a separation algorithm for level 3/2 and a little extra work, this yields a membership algorithm~for~level~5/2.

However, the main reason is deeper. The primary motivation for considering such problems is to thoroughly understand the classes under investigation. In this respect, while harder, separation is also far more rewarding than membership. On one hand, a membership algorithm for a class \Cs only applies to languages of \Cs: it can detect them and build a description witnessing membership. On the other hand, a separation algorithm for \Cs is universal: it applies to \textbf{\emph{any}} language. Indeed, one may view separation as an approximation problem: given an input pair $(L_{1},L_{2})$ one wants to over-approximate $L_{1}$ by a language in~\Cs, and $L_{2}$ serves to specify what a satisfying approximation is. This is why we look at separation: it yields a more robust understanding of the classes of languages than membership does.

The state of the art for separation is the following: it was shown to be decidable for levels 1/2, 1, 3/2 and 5/2 in the Straubing-Thérien hierarchy~\cite{martens,pvzmfcs13,pzqalt,pseps3j}. These results can be lifted to dot-depth using a generic transfer theorem~\cite{pzsucc2}. Notice the gap between levels 3/2 and 5/2: no algorithm is known for level~2. This is explained by the fact that obtaining separation algorithms presents very different challenges for half levels and for full levels. Indeed, it turns out that most separation algorithms rely heavily on closure under marked concatenation, which holds for half levels by definition, but not for full levels.

\medskip\noindent
\textbf{Contributions.} Our main result is a separation algorithm for level~2 in the Straubing-Thérien hierarchy. Furthermore, by the aforementioned transfer theorem~\cite{pzsucc2}, this can be lifted to separation for dot-depth~2. A crucial point is that this separation result is actually an instance of a generic theorem, which applies to any \emph{finite} class~\Cs satisfying a few standard properties (namely closure under Boolean operations and quotients). It states that for such a class \Cs, the class $\bool{\pol\Cs}$ has decidable separation. This has two important consequences:
\begin{itemize}
\item In \emph{any} hierarchy whose basis is such a class, level~1 has decidable separation.
\item In the specific case of the Straubing-Thérien hierarchy, this extends to level 2, since it is also level 1 of another finitely based concatenation hierarchy~\cite{pin-straubing:upper}.
\end{itemize}
This generic result complements other recent results in a natural way. It has been shown in~\cite{pseps3j} that \pol{\Cs}- and \pol{\bool{\pol\Cs}}-separation are decidable for any finite class~\Cs satisfying the aforementioned hypotheses. Combined with our results, this implies that for finitely based concatenation hierarchies, separation is decidable for all levels up to~3/2.

Our proof argument exploits a theorem of~\cite{pseps3j} about the simpler class \pol{\Cs}. However, the techniques that we use here are very different from the ones used in~\cite{pseps3j}, which rely heavily on the fact that \pol{\Cs} and \pol{\bool{\pol\Cs}} are closed under concatenation (again, this is not the case for \bool{\pol\Cs}).

Being generic, our approach yields separation algorithms for a whole family of classes. Moreover, it serves to pinpoint the key hypotheses which are critical in order to solve separation for dot-depth 2. Let us also stress that we obtain new direct proofs that separation is decidable for level 1 in both the dot-depth and Straubing-Thérien hierarchies. This is of particular interest for dot-depth 1 since the previous solution was indirect, as it relied on a transfer result from~\cite{pzsucc2}.

Finally, a key point is that all our results are stated and proved using a general framework that was recently introduced in~\cite{pzcovering2}. It is designed to handle the separation problem and to present the solutions in an elegant manner. In fact, this framework considers a third decision problem, which is even more general than separation: \emph{covering}. Given a class~\Cs, the \Cs-covering problem takes two objects as input, a single regular language $L$ and a finite set of regular languages \Lb. It asks whether there exists a \Cs-cover of $L$ (\emph{i.e.}, a finite set of languages in \Cs whose union includes $L$) such that no language in this \Cs-cover intersects all languages in \Lb. It is simple to show that separation is the special case of covering when the set \Lb is a singleton. Our main theorem actually states that \bool{\pol{\Cs}}-covering is decidable for any finite class \Cs satisfying mild hypotheses.

\medskip\noindent\textbf{Organization.} The paper is organized as follows. Section~\ref{sec:prelims} gives preliminary definitions. In Section~\ref{sec:concat}, we introduce concatenation hierarchies and state our generic theorem: \bool{\pol\Cs}-covering is decidable for every finite \vari \Cs. The remainder of the paper is devoted to presenting the algorithm. In Section~\ref{sec:covers}, we recall the framework which was designed in~\cite{pzcovering2} to handle the covering problem.  We use this framework in Section~\ref{sec:bpol} to formulate our algorithm for \bool{\pol{\Cs}}-covering. The remaining sections are devoted to proving that this algorithm is correct.

\medskip

This is the journal version of~\cite{pzboolpol}. The main result (\emph{i.e.}, that \bool{\pol\Cs}-separation is decidable for every finite \vari \Cs) has been generalized to \bool{\pol\Cs}-covering. Moreover, both its presentation and its proof have been completely reworked. Our results are now formulated using the general framework introduced in~\cite{pzcovering2}, designed to handle separation and covering problems.

\section{Preliminaries}
\label{sec:prelims}
In this preliminary section, we provide standard definitions for the basic objects investigated in the paper. Moreover, we present the separation and covering problems.

\subsection{Words, languages and classes}

For the whole paper, we fix an arbitrary finite alphabet $A$. Recall that $A^*$ denotes the set of all words over $A$, including the empty word~\veps. We let $A^{+}=A^{*}\setminus\{\veps\}$. If $u,v \in A^*$ are words, we write $u \cdot v$ or $uv$ the word obtained by concatenating $u$ and $v$.

A subset of $A^*$ is called a \emph{language}. We shall denote the singleton language $\{u\}$ by $u$. It is standard to generalize the concatenation operation to languages: given $K,L \subseteq A^*$, we write $KL$ for the language $KL = \{uv \mid u \in K \text{ and } v \in L\}$. Moreover, we shall also consider \emph{marked concatenation}, which is less standard. Given $K,L \subseteq A^*$, a marked concatenation of $K$ with $L$ is a language of the form $KaL$ for some letter $a \in A$.

A \emph{class of languages} is simply a set of languages. In the paper, we work with robust classes, \emph{i.e.}, which satisfy standard closure properties:
\begin{itemize}
\item A \emph{lattice} is a class closed under finite union and finite intersection, and containing the languages $\emptyset$ and $A^*$.
\item A \emph{Boolean algebra} is a lattice closed under complement.
\item Finally, a class is \emph{quotient-closed} when for every language $L$ of the class and every word $w \in A^*$, the following two languages also belong to the class:
  \[
    w^{-1}L \stackrel{\text{def}}= \{u \in A^* \mid wu \in L\} \quad \text{and} \quad Lw^{-1} \stackrel{\text{def}}= \{u \in A^* \mid uw \in L\}.
  \]
\end{itemize}
In this paper, we often use the letter ``\Ds'' to denote a lattice and the letter ``\Cs'' to denote a Boolean algebra (we also use \Cs to denote any generic class). The results of the paper apply to classes which are (at least) \pvaris. Moreover, they only apply to subclasses of the class of \emph{regular languages}. These are the languages that can be equivalently defined by monadic second-order logic, finite automata or finite monoids. Let us briefly recall the definition based on monoids, which we shall use.

A \emph{monoid} is a set $M$ equipped with an associative multiplication (usually denoted by ``$\cdot$'') which has a neutral element (usually denoted by ``$1_M$''). Recall that an \emph{idempotent} within a semigroup $S$ is an element $e \in S$ such that $ee = e$. Observe that $A^*$ is a monoid when equipped with word concatenation as the multiplication (the neutral element is \veps).  Hence, given a monoid $M$, we may consider morphisms $\alpha: A^* \rightarrow M$. Given such a morphism, we say that a language $L \subseteq A^*$ is \emph{recognized} by $\alpha$ when there exists $F \subseteq M$ such that $L = \alpha\inv(F)$. It is well-known that the regular languages are exactly those which can be recognized by a morphism $\alpha: A^* \rightarrow M$, where $M$ is a \textbf{finite} monoid.

\begin{remark}
  Whenever we consider a morphism $\alpha: A^* \to M$ in the paper, the monoid $M$ will be finite. For the sake of avoiding clutter, this will be implicit from now on.
\end{remark}

\subsection{Separation and covering}

In this paper, we investigate specific classes of languages which are part of concatenation hierarchies (introduced in Section~\ref{sec:concat}). We do so by relying on two decision problems: separation and covering. The former is standard while the latter was introduced in~\cite{pzcovering2}. Both of them are parameterized by an arbitrary class of languages~\Cs. Let us start with the definition of separation.

\medskip
\noindent
{\bf Separation.} Given three languages $K,L_1,L_2$, we say that $K$ \emph{separates} $L_1$ from $L_2$ if $L_1 \subseteq K \text{ and } L_2 \cap K = \emptyset$. Given a class of languages \Cs, we say that $L_1$ is \emph{\Cs-separable} from~$L_2$ if some language in \Cs separates $L_1$ from~$L_2$. Observe that when \Cs is not closed under complement, the definition is not symmetrical: $L_1$ could be \Cs-separable from $L_2$ while $L_2$ is not \Cs-separable from $L_1$. The separation problem associated to a given class \Cs is as~follows:

\medskip

\begin{tabular}{rl}
  {\bf INPUT:}  &  Two regular languages $L_1$ and $L_2$. \\
  {\bf OUTPUT:} &  Is $L_1$ \Cs-separable from $L_2$?
\end{tabular}

\medskip
\noindent
Separation is meant to be used as a mathematical tool in order to investigate classes of languages. Intuitively, obtaining a \Cs-separation algorithm requires a solid understanding of~\Cs.

\begin{remark}
  The \Cs-separation problem generalizes another classical decision problem: \Cs-membership, which asks whether a single regular language $L$ belongs to \Cs. Indeed, $L \in \Cs$ if and only if $L$ is \Cs-separable from $A^* \setminus L$.
\end{remark}

\noindent
{\bf Covering.} Our second problem is more general and was introduced in~\cite{pzcovering2}. Given a language $L$, a \emph{cover of $L$} is a \emph{\bf finite} set of languages \Kb such that:
\[
  L \subseteq \bigcup_{K \in \Kb} K.
\]
We speak of \emph{universal cover} to mean a cover of $A^*$. Moreover, if \Cs is a class, a \Cs-cover of $L$  is a cover \Kb of $L$ such that all $K \in \Kb$ belong to \Cs.

\medskip
Covering takes as input a language $L$ and a \emph{finite set of languages} $\Lb$. A cover \Kb of $L$ is \emph{separating} for $\Lb$ if for every $K\in\Kb$, there exists $L' \in \Lb$ that satisfies $K \cap L' = \emptyset$. In other words, no language of~$\Kb$ may intersect all languages of~$\Lb$.

Figure~\ref{fig:bgen:separatingcover} shows two covers of a language~$L$: $\{K_{1},K_{2}\}$ (on the left of the picture) and $\{K'_{1},K'_{2}\}$ (on the right). Let $\Lb=\{L_{1},L_{2}\}$. The cover  $\{K_{1},K_{2}\}$ of $L$ is separating for $\Lb$, since $K_{1}\cap L_{2}=\emptyset$ and $K_{2}\cap L_{1}=\emptyset$. On the other hand, the cover $\{K'_{1},K'_{2}\}$ is not separating for $\Lb$, as $K'_{1}$ intersects both $L_{1}$ and~$L_{2}$.

\begin{figure}[!ht]
  \begin{center}
    \begin{tikzpicture}
      \draw[thick] (0,0) circle (0.7cm) node (l1) {$L_1$};
      \draw[thick] (3,0) circle (0.7cm) node (l2) {$L_2$};
      \draw[thick] ($(0,0)!1!-60:(3,0)$) circle (0.7cm) node (l3) {$L$};

      \coordinate (m1) at ($(l1)+(1.0,0.0)$);
      \coordinate (m2) at ($(l2)+(1.0,0.0)$);
      \coordinate (m3) at ($(l3)+(1.0,0.0)$);

      \begin{pgfonlayer}{background}
        \def\fstlang{($(l3)$) to[closed,curve
          through={
            ($(l3)!1.5!160:(m3)$)
            .. ($(l3)!1.3!225:(m3)$)
            .. ($(l3)!0.8!290:(m3)$)
          }]
          ($(l3)$)}

        \def\seclang{($(l3)!0.8!-90:(m3)$) to[closed,curve
          through={
            ($(l3)!0.5!180:(m3)$)
            .. ($(l1)!0.3!90:(m1)$)
            .. ($(l2)!0.3!90:(m2)$)
            .. ($(l2)!1!0:(m2)$)
            .. ($(l3)!1!0:(m3)$)
          }]
          ($(l3)!0.8!-90:(m3)$)}

        \begin{scope}[draw opacity=0.7,fill opacity=0.3]
          \draw[thick,red] \fstlang;
          \draw[thick,blue] \seclang;
          \fill[red] \fstlang;
          \fill[blue] \seclang;
        \end{scope}

        \node at ($(l3)-(1.0,0.0)$) {$K'_2$};
        \node at ($(l3)!1/2!30:(l2)$) {$K'_1$};
      \end{pgfonlayer}

      \begin{scope}[xshift=-7.5cm]

        \draw[thick] (0,0) circle (0.7cm) node (l1) {$L_1$};
        \draw[thick] (3,0) circle (0.7cm) node (l2) {$L_2$};
        \draw[thick] ($(0,0)!1!-60:(3,0)$) circle (0.7cm) node (l3) {$L$};

        \coordinate (m1) at ($(l1)+(1.0,0.0)$);
        \coordinate (m2) at ($(l2)+(1.0,0.0)$);
        \coordinate (m3) at ($(l3)+(1.0,0.0)$);

        \begin{pgfonlayer}{background}
          \def\fstlang{($(l1)!1.3!160:(m1)$) to[closed,curve
            through={
              ($(l1)!1!90:(m1)$)
              .. ($(l2)!1!90:(m2)$)
              .. ($(l2)!1.3!20:(m2)$)
              .. ($(l2)!1/2!(l1)$)
            }]
            ($(l1)!1.3!160:(m1)$)}
          \def\seclang{($(l3)!0.8!280:(m3)$) to[closed,curve
            through={
              ($(l3)!0.8!210:(m3)$)
              .. ($(l1)!0.7!170:(m1)$)
              .. ($(l1)!0.3!90:(m1)$)
              .. ($(l1)!0.7!10:(m1)$)
              .. ($(l3)!1!90:(m3)$)
            }]
            ($(l3)!0.8!280:(m3)$)}

          \def\trdlang{($(l3)!0.8!-100:(m3)$) to[closed,curve
            through={
              ($(l3)!0.8!-30:(m3)$)
              .. ($(l2)!0.7!10:(m2)$)
              .. ($(l2)!0.3!90:(m2)$)
              .. ($(l2)!0.7!170:(m2)$)
              .. ($(l3)!1!90:(m3)$)
            }]
            ($(l3)!0.8!-100:(m3)$)}

          \begin{scope}[draw opacity=0.7,fill opacity=0.3]
            \draw[thick,blue] \seclang;
            \draw[thick,green] \trdlang;
            \fill[blue] \seclang;
            \fill[green] \trdlang;
          \end{scope}

          \node at ($($(l3)!1/2!(l1)$)-(0.5,0.0)$) {$K_1$};
          \node at ($($(l2)!1/2!(l3)$)+(0.5,0.0)$) {$K_2$};
        \end{pgfonlayer}

\end{scope}
    \end{tikzpicture}
  \end{center}
  \caption{Two covers of $L$. The left one is separating for $\{L_1,L_2\}$ and the right one is not.}
  \label{fig:bgen:separatingcover}
\end{figure}

Finally, given a class \Cs, we say that the pair $(L_,\Lb)$ is \Cs-coverable when there exists a \Cs-cover of $L$ which is separating for $\Lb$.

\medskip\noindent
The \Cs-covering problem is now defined as follows:

\medskip

\begin{tabular}{rl}
  {\bf INPUT:}  &  A regular language $L$ and a finite set of regular languages $\Lb$.\\
  {\bf OUTPUT:} &  Is $(L,\Lb)$ \Cs-coverable?
\end{tabular}

\medskip

Note that in the covering problem, we are interested in the existence of covers which are \emph{separating} (but notice that we do not keep this precision in the name of the problem itself, to lighten the terminology). It is straightforward to prove that covering generalizes separation provided that the class is a lattice, as stated in the following lemma (see Theorem~3.5 in~\cite{pzcovering2} for the proof).

\begin{lemma}\label{lem:septocove}
  Let \Ds be a lattice and let $L_1,L_2$ be two languages. Then $L_1$ is \Ds-separable from~$L_2$ if and only if $(L_1,\{L_2\})$ is \Ds-coverable.
\end{lemma}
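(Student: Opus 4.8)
The plan is to prove the two implications separately, using the observation that a cover of $L_1$ by a single language $K$ is essentially the same data as a language $K$ containing $L_1$.

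\textbf{From separation to covering.} Suppose $L_1$ is \Cs-separable from $L_2$, witnessed by $K \in \Cs$ with $L_1 \subseteq K$ and $K \cap L_2 = \emptyset$. I would take the singleton set $\Kb = \{K\}$. Since $K \in \Cs$ and $L_1 \subseteq K$, it is a \Cs-cover of $L_1$. It is separating for $(L_1,\{L_2\})$ because its only member $K$ satisfies $K \cap L_2 = \emptyset$, and $L_2 \in \{L_2\}$. Hence $(L_1,\{L_2\})$ is \Cs-coverable. This direction does not even require \Cs to be a lattice.

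\textbf{From covering to separation.} Conversely, suppose $(L_1,\{L_2\})$ is \Cs-coverable, witnessed by a separating \Cs-cover $\Kb = \{K_1,\dots,K_n\}$. Since $\Lb_2 = \{L_2\}$ is a singleton, the separating condition forces each $K_i$ to satisfy $K_i \cap L_2 = \emptyset$, i.e. $K_i \subseteq A^* \setminus L_2$. Set $K = \bigcup_{i=1}^n K_i$. Because \Cs is a lattice, it is closed under finite union, so $K \in \Cs$ (note the empty case is covered since $\emptyset \in \Cs$). We have $L_1 \subseteq \bigcup_i K_i = K$ because $\Kb$ is a cover of $L_1$, and $K \cap L_2 = \bigcup_i (K_i \cap L_2) = \emptyset$. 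Thus $K$ separates $L_1$ from $L_2$, so $L_1$ is \Cs-separable from $L_2$.

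\textbf{Main obstacle.} There is essentially no obstacle here; the only subtlety is remembering where the hypotheses are used. Closure under finite union (part of being a lattice) is exactly what is needed to collapse a finite cover into a single separator in the second implication, and the condition $\emptyset \in \Cs$ handles the degenerate case $L_1 = \emptyset$ where the cover may be empty. The first implication is unconditional. I would present the argument in two short paragraphs matching the two directions above.
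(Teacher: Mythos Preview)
Your proof is correct and is exactly the standard argument for this equivalence. The paper does not actually include a proof of this lemma; it defers to~\cite{pzcovering2} (Theorem~3.5 there), so there is nothing further to compare, but your two-direction argument is the expected one.
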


In the paper, we shall not work with covering directly. Instead, we use a framework introduced in~\cite{pzcovering2}, which is designed to formulate and handle these problems in a more convenient manner. We recall this framework in Section~\ref{sec:covers}.

\subsection{Finite lattices}

In the paper, we work with classes built from an arbitrary finite lattice (\emph{i.e.}, one that contains finitely many languages) using generic two operations: polynomial closure and Boolean closure (see Section~\ref{sec:concat}). Let us present standard results about such~classes.

\medskip
\noindent
{\bf Canonical preorder relations.} Consider a finite lattice \Ds. It is classical to associate a \emph{canonical preorder relation over $A^*$} to \Ds. Given $w,w' \in A^*$, we write $w \canod w'$ if and only if the following holds:
\[
  \text{For all $L \in \Ds$,} \quad w \in L \ \Rightarrow\ w' \in L.
\]
It is immediate from the definition that \canod is transitive and reflexive, making it a preorder.

\begin{example}\label{ex:cowatpre}
  Consider the class \Ds consisting of all unions of intersections of languages of the form $A^*aA^*$, for $a\in A$. By definition, this class is a finite lattice. The associated preorder $\leqslant_\Ds$ is defined by $w \leqslant_\Ds w'$ if and only if every letter occurring in $w$ also occurs in~$w'$.
\end{example}
\noindent
We shall use several results about the relation \canod. We omit the proofs, which are simple and available in~\cite{pseps3j}.

\medskip
The first lemma is where we use the hypothesis that \Ds is finite. We say that a language $L \subseteq A^*$ is an \emph{upper set} (for \canod) when for any two words $u,v \in A^*$, if $u \in L$ and $u \canod v$, then $v \in L$. Furthermore, given $u \in A^*$, we let $\upset[\Ds]{u} \subseteq A^*$ be the least upper set containing~$u$: $\upset[\Ds]{u} = \{v \in A^* \mid u \canod v\}$.

\begin{lemma}\label{lem:canosatur}
  Let $\Ds$ be a finite lattice. Then, for any $L \subseteq A^*$, we have $L \in \Ds$ if and only if~$L$ is an upper set for \canod. In particular, \canod has finitely many upper sets.
\end{lemma}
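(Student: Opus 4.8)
The plan is to prove the two implications of the equivalence separately, and then read off the final sentence (finiteness of the set of upper sets) from the argument for the nontrivial direction.

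The forward implication, $L \in \Cs \Rightarrow L$ is an upper set, is immediate and uses no finiteness hypothesis: if $u \in L$ and $u \canoc v$, then applying the defining condition of $\canoc$ to the particular language $L \in \Cs$ yields $v \in L$. For the converse I would first show that each set $\upset[\Cs]{u}$ already belongs to $\Cs$. Writing $\Cs_u = \{L \in \Cs \mid u \in L\}$, unwinding the definition of $\canoc$ shows that $v$ satisfies $u \canoc v$ precisely when $v$ lies in every member of $\Cs_u$, i.e. $\upset[\Cs]{u} = \bigcap_{L \in \Cs_u} L$ (with the convention that this equals $A^*$ when $\Cs_u = \emptyset$). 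Since $\Cs$ is finite, $\Cs_u$ is finite, so this is a finite intersection of members of $\Cs$; as $\Cs$ is a lattice containing $A^*$, it follows that $\upset[\Cs]{u} \in \Cs$.

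Next, for an arbitrary upper set $L$, I would verify the identity $L = \bigcup_{u \in L} \upset[\Cs]{u}$. The inclusion $\supseteq$ holds because $L$ is an upper set, so $\upset[\Cs]{u} \subseteq L$ whenever $u \in L$; the inclusion $\subseteq$ holds by reflexivity of $\canoc$, since $u \in \upset[\Cs]{u}$. Now the key observation is that $\upset[\Cs]{u}$ depends only on the subset $\Cs_u \subseteq \Cs$, and there are only finitely many such subsets because $\Cs$ is finite. Hence $\{\upset[\Cs]{u} \mid u \in A^*\}$ is a finite family of languages, each lying in $\Cs$ by the previous paragraph, and the union $L = \bigcup_{u \in L}\upset[\Cs]{u}$ is in fact a finite union of members of $\Cs$ (and equals $\emptyset \in \Cs$ when $L = \emptyset$). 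Since $\Cs$ is closed under finite union, $L \in \Cs$. The last assertion follows as well: every upper set is the union of some subfamily of the finite family $\{\upset[\Cs]{u} \mid u \in A^*\}$, so there are only finitely many upper sets.

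There is no real obstacle in this argument; the only thing that needs care is tracking exactly where finiteness of $\Cs$ is invoked, namely twice: once to make the intersection defining $\upset[\Cs]{u}$ finite (so that $\upset[\Cs]{u}\in\Cs$ by lattice closure under finite intersection), and once to guarantee that only finitely many distinct sets $\upset[\Cs]{u}$ occur (so that the reconstruction $L = \bigcup_{u\in L}\upset[\Cs]{u}$ is a finite union).
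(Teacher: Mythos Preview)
Your proof is correct. The paper itself omits the proof of this lemma, stating only that it is simple and available in~\cite{pseps3j}, so there is nothing to compare against; the argument you give is the standard one and would be an appropriate proof to include.
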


Let us complete these definitions with a few additional useful results. First, as we observed for \at in Example~\ref{ex:ateq}, when the finite lattice is actually a Boolean algebra, it turns out that its canonical preorder is an equivalence relation. If \Cs is such a Boolean algebra, we shall denote this equivalence relation by \equc (instead of \canoc).

\begin{lemma}\label{lem:canoequiv}
  Let $\Cs$ be a finite Boolean algebra. Then, for any alphabet $A$, the canonical preorder \canoc is an equivalence relation \equc, which admits the following direct definition:
  \[
    \text{$w \equc w'$ if and only if for all $L \in \Cs$,\quad $w \in L \ \Leftrightarrow\ w' \in L$}.
  \]
  Thus, for any $L \subseteq A^*$, we have $L \in \Cs$ if and only if $L$ is a union of \equc-classes. In particular, \equc has finite index.
\end{lemma}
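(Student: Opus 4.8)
The plan is to derive everything from Lemma~\ref{lem:canosatur}, which already characterises the members of a finite lattice as exactly its upper sets; the only new ingredient available in the present statement is that \Cs is closed under complement. The crux will be to show that, under this extra hypothesis, the preorder \canoc is in fact \emph{symmetric}. Once this is done, the remaining assertions are routine bookkeeping.

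To establish symmetry I would argue as follows. Suppose $w \canoc w'$ and let $L \in \Cs$ be such that $w' \in L$; the goal is $w \in L$. Since \Cs is a Boolean algebra, $A^* \setminus L \in \Cs$. If $w \notin L$, then $w \in A^* \setminus L$, and applying $w \canoc w'$ to the language $A^* \setminus L$ would force $w' \in A^* \setminus L$, contradicting $w' \in L$. Hence $w \in L$, which proves $w' \canoc w$. Combined with the reflexivity and transitivity noted just before the statement, this makes \canoc an equivalence relation, which I will denote \equc. The announced direct definition is then immediate: $w \equc w'$ unfolds to $w \canoc w'$, i.e. ``$w \in L \Rightarrow w' \in L$ for every $L \in \Cs$'', while symmetry ($w' \canoc w$) supplies the converse implication for every $L \in \Cs$; conversely, the two-sided condition trivially entails $w \canoc w'$.

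For the description of \Cs, I would observe that since \canoc is now symmetric, a language is an upper set for \canoc if and only if it is saturated by \equc, that is, a union of \equc-classes; so Lemma~\ref{lem:canosatur} gives at once that $L \in \Cs$ iff $L$ is a union of \equc-classes. For finite index, consider the map sending $w \in A^*$ to $\{L \in \Cs \mid w \in L\}$: by the direct definition, $w \equc w'$ precisely when $w$ and $w'$ have the same image, and since \Cs is finite there are only finitely many possible images; hence \equc has finite index. (Alternatively, one may simply invoke the finiteness of the set of upper sets asserted in Lemma~\ref{lem:canosatur}.)

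I do not expect a genuine obstacle here: the sole piece of real content is the symmetry argument, which is a one-line use of closure under complement, and everything else follows immediately from Lemma~\ref{lem:canosatur} and the finiteness of \Cs.
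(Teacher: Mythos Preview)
Your argument is correct. The symmetry step via closure under complement is exactly the right idea, and the remaining points (the direct description of \equc, the identification of upper sets with \equc-saturated sets, and the finite-index argument via the map $w \mapsto \{L \in \Cs \mid w \in L\}$) all follow as you say from Lemma~\ref{lem:canosatur} and the finiteness of \Cs.

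As for comparison: the paper does not actually prove Lemma~\ref{lem:canoequiv} in the text; it explicitly omits the proofs of these preliminary facts about \canoc, pointing instead to~\cite{pseps3j}. So there is no in-paper argument to compare against. Your proof is the standard one and would be perfectly suitable here.
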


\begin{example}\label{ex:ateq}
  Consider the class \at of all Boolean combinations of languages $A^*aA^*$, for $a \in A$ (``\at'' stands for ``alphabet testable'': $L \in \at$ if and only if membership of a word $w$ in $L$ depends only on the letters occurring in $w$). Clearly, \at is a finite Boolean algebra---it is the Boolean closure of the class from Example~\ref{ex:cowatpre} (see Section~\ref{sec:closure-operations} for the definition of Boolean closure). In that case, $\leqslant_\at$ is an equivalence relation $\sim_\at$: $w \sim_\at w'$ if and only if $w$ and $w'$ have the same alphabet (\emph{i.e.}, contain the same set of letters).
\end{example}

Another important and useful property is that when \Ds is \quotienting, the canonical preorder \canod is compatible with word concatenation.

\begin{lemma}\label{lem:canoquo}
  A finite lattice \Ds is \quotienting if and only if its associated canonical preorder \canod is compatible with word concatenation. That is, for any words $u,v,u',v'$,
  \[
    u \canod u' \quad \text{and} \quad v \canod v' \quad \Rightarrow \quad uv \canod u'v'.
  \]
\end{lemma}

\medskip
\noindent
{\bf \Cs-compatible morphisms.} We use these notions to define special monoid morphisms. We fix a finite \vari \Cs for the definition.

\smallskip
By Lemma~\ref{lem:canoequiv}, \equc has finite index and the languages in \Cs are exactly the unions of \equc-classes. Moreover, since \Cs is \quotienting, we know from Lemma~\ref{lem:canoquo} that \equc is a congruence for word concatenation. It follows that the quotient set ${A^*}/{\equc}$ is a finite monoid and the map $w \mapsto \ctype{w}$ is a morphism from $A^*$ to ${A^*}/{\equc}$. Consider an arbitrary morphism $\alpha: A^* \to M$. We say that $\alpha$ is \emph{\Cs-compatible} when, for every $s \in M$, there exists a \equc-class denoted by \ctype{s} such that $\alpha\inv(s) \subseteq \ctype{s}$. In other words, $\ctype{w} = \ctype{s}$ for every $w \in A^*$ such that $\alpha(w) = s$.

\begin{remark}
  Let $\alpha:A^{*}\to M$ be a \Cs-compatible morphism. Given $s \in M$, the $\sim_{\Cs}$-class \ctype{s} is determined by $\alpha$ when $\alpha\inv(s) \neq \emptyset$ ($\ctype{s} = \ctype{w}$ for any $w \in \alpha\inv(s)$). If $\alpha\inv(s) = \emptyset$, we may choose any \equc-class as \ctype{s}. When we consider a \Cs-compatible morphism, we implicitly assume that the map $s \mapsto \ctype{s}$ is fixed.
\end{remark}

We prove that we may assume without loss of generality that all the morphisms we work with are \Cs-compatible: any regular language is recognized by a \Cs-compatible morphism.

\begin{lemma}\label{lem:compat}
  Let \Cs be a finite \vari. Given a regular language $L \subseteq A^*$, one can compute a \Cs-compatible morphism $\alpha: A^* \to M$ recognizing $L$.
\end{lemma}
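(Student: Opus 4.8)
The plan is to build $\alpha$ as a product of two morphisms. First, since $L$ is regular, it is recognized by \emph{some} morphism $\beta\colon A^*\to N$ into a finite monoid $N$; this $\beta$ (and a subset $F\subseteq N$ with $L=\beta\inv(F)$) can be computed, \emph{e.g.} from a DFA for $L$ by the standard transition-monoid construction. Second, we invoke the finite congruence \equc supplied by Lemma~\ref{lem:canoequiv}: because \Cs is a finite \vari, Lemmas~\ref{lem:canoequiv} and~\ref{lem:canoquo} tell us that \equc has finite index and is compatible with concatenation, so $C\stackrel{\text{def}}=A^*/{\equc}$ is a finite monoid and $\eta\colon w\mapsto\ctype{w}$ is a morphism $A^*\to C$. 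The key point is that $\eta$ is computable: the classes of \equc are exactly the minimal upper sets $\upset[\Cs]{u}$ intersected appropriately — more concretely, for $w,w'\in A^*$ we have $w\equc w'$ iff $w$ and $w'$ lie in the same languages of \Cs, and since \Cs is finite with effectively given members, one can compute, for each word, the finite ``profile'' recording membership in each $L\in\Cs$; this yields an effective description of $C$ and of $\eta$.

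Now set $M = \eta(A^*)\times\beta(A^*)\subseteq C\times N$, a finite monoid under the componentwise product, and let $\alpha\colon A^*\to M$ be $\alpha(w)=(\eta(w),\beta(w))$. This is a morphism by construction. It recognizes $L$: indeed $\alpha\inv(C\times F)=\beta\inv(F)=L$. It is \Cs-compatible: given $s=(c,n)\in M$, pick any $w\in A^*$ with $\alpha(w)=s$ (if $\alpha\inv(s)=\emptyset$ there is nothing to constrain, and we may assign \ctype{s} arbitrarily per the Remark); then for every $w'\in\alpha\inv(s)$ we have $\eta(w')=\eta(w)=c$, \emph{i.e.} $\ctype{w'}=\ctype{w}$, so $\alpha\inv(s)$ is contained in the single \equc-class $\ctype{s}\stackrel{\text{def}}=\ctype{w}$. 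Finally everything is effective: $\beta$ is computed from a representation of $L$, $\eta$ from the finite list of languages in \Cs, and $M$ together with the map $s\mapsto\ctype{s}$ from these two.

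The only genuine content — and the step I would expect to need the most care — is arguing that \equc, equivalently the morphism $\eta$ onto $C=A^*/{\equc}$, is \emph{effectively} available from the data defining \Cs. This rests squarely on the hypothesis that \Cs is a \emph{finite} \vari: finiteness gives finitely many languages whose membership profiles determine \equc-classes (Lemma~\ref{lem:canoequiv}), quotient-closure gives that \equc is a congruence (Lemma~\ref{lem:canoquo}), so the quotient is a finite monoid; computability then follows because each $L\in\Cs$ is regular and given effectively, so the profile map is computable and its (finite) image, with its monoid structure, can be enumerated. Once $C$ and $\eta$ are in hand, forming the product with $\beta$ and checking recognizability and \Cs-compatibility is the routine verification sketched above.
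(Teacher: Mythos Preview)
Your proof is correct and follows essentially the same approach as the paper: form the product of a morphism $\beta$ recognizing $L$ with the canonical morphism $w\mapsto\ctype{w}$ onto the finite quotient $A^*/{\equc}$, and observe that the resulting product morphism both recognizes $L$ and is \Cs-compatible by construction. The paper's proof is terser (it takes $M=N\times(A^*/{\equc})$ directly and leaves the \Cs-compatibility verification to the reader), while you add a worthwhile discussion of why $\eta$ is effectively computable from the finite data of~\Cs; but the mathematical content is the same.
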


\begin{proof}
  Since $L$ is regular, we can compute a finite monoid $N$ and a morphism $\beta: A^* \to N$ (not necessarily \Cs-compatible) recognizing $L$. Since we know that the quotient set ${A^*}/{\equc}$ is a finite monoid, the Cartesian product $M = N \times ({A^*}/{\equc})$ is a finite monoid for the componentwise multiplication. It now suffices to define the morphism $\alpha: A^* \to M$ by $\alpha(w) = (\beta(w),\ctype{w})$ for any $w \in A^*$. Clearly, $\alpha$ is a morphism which recognizes $L$ and one can verify that it is \Cs-compatible.
\end{proof}

\section{Closure operations and main theorem}
\label{sec:concat}
In this section, we define the family of classes that we investigate in the paper and present a few results about them. Then, we state our main theorem and discuss its consequences.

\subsection{Closure operations}
\label{sec:closure-operations}

Consider a class \Cs. The \emph{Boolean closure} of \Cs, denoted by \bool{\Cs}, is defined as the least Boolean algebra containing \Cs. The next lemma follows from the definitions (this amounts to verifying that quotients commute with Boolean operations, \emph{e.g.}, that for all languages $K,L\subseteq A^*$ and for all words $w$, we have $w^{-1}(K\cup L) = w^{-1}K\cup w^{-1}L $ and $w^{-1}(A^*\setminus L) = A^*\setminus(w^{-1} L)$, and symmetrically for the other quotient operation).

\begin{lemma}\label{lem:boolclos}
  Let \Ds be a \pvari. Then, \bool{\Ds} is a \vari.
\end{lemma}

The second operation that we shall consider is slightly more involved. Given a class~\Cs,  the \emph{polynomial closure} of \Cs, denoted by \pol{\Cs}, is the least class containing \Cs which is closed under both union and marked concatenation:
\[
  \text{for all $K,L \in \pol{\Cs}$ and $a \in A$}, \quad K \cup L \in \pol{\Cs} \text{ and } KaL \in \pol{\Cs}.
\]
While this is not obvious from the definition, when the input class \Cs is a \pvari, its polynomial closure \pol{\Cs} is a \pvari as well. The difficulty is to establish that \pol{\Cs} is closed under intersection. This was originally proved by Arfi~\cite{arfi87}, assuming additionally that \Cs is closed under complement. The result was then extended to \pvaris by Pin~\cite{jep-intersectPOL} (see also~\cite{pzgenconcat}).

\begin{theorem}[Arfi~\cite{arfi87}, Pin~\cite{jep-intersectPOL}]\label{thm:polclos}
  Let \Ds be a \pvari. Then, \pol{\Ds} is a \pvari closed under concatenation and marked concatenation.
\end{theorem}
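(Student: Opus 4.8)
The plan is to establish the three assertions in turn: that $\pol{\Cs}$ is a lattice, that it is quotient-closed, and that it is closed under concatenation and marked concatenation. The last of these is nearly immediate from the definition: $\pol{\Cs}$ is by construction closed under marked concatenation, and closure under (unmarked) concatenation is a consequence of closure under intersection and marked concatenation once we know $\pol{\Cs}$ contains the languages $\{\veps\}$ and $A^*$ — indeed, since $\Cs$ is a lattice we have $\emptyset, A^* \in \Cs \subseteq \pol{\Cs}$, and one can obtain $\{\veps\}$ as well (it is $A^* \setminus \bigcup_{a \in A} A^* a A^*$, but we want to stay inside $\pol{\Cs}$, so instead one checks $KL = \bigcup_{a\in A}(Ka^{-1})aL \cup (\text{a term for } \veps \in K)$, splitting on whether the first word of the right factor is empty; this reduces unmarked concatenation to marked concatenation plus quotients plus union). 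So the real content is that $\pol{\Cs}$ is closed under finite union (trivial, by definition), contains $\emptyset$ and $A^*$ (inherited from $\Cs$), is closed under quotients, and — the crux — is closed under finite intersection.

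For \textbf{closure under quotients}, I would argue by induction on the construction of a language $L \in \pol{\Cs}$ as a polynomial expression over $\Cs$, using the identities for quotients of unions and of marked concatenations. The base case is that $\Cs$ is quotient-closed. For the inductive step on union, quotients commute with union. For marked concatenation $KaL$, one computes, for a letter $b$,
\[
  b^{-1}(KaL) = (b^{-1}K)\,aL \ \cup\
  \begin{cases} L & \text{if } b = a \text{ and } \veps \in K,\\ \emptyset & \text{otherwise,}\end{cases}
\]
and more generally $w^{-1}(KaL)$ unwinds by peeling letters off the front of $w$; symmetrically for right quotients, peeling off the back and eventually hitting $L$, then the marking letter $a$, then $K$. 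Each piece produced lies in $\pol{\Cs}$ by the induction hypothesis together with closure under union, so $\pol{\Cs}$ is quotient-closed.

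For \textbf{closure under intersection} — which I expect to be the main obstacle, and which is precisely the hard theorem of Arfi — I would aim to show that the intersection of two marked-concatenation-and-union expressions over $\Cs$ can again be written as such an expression. By distributivity of intersection over union it suffices to intersect two ``monomials'' $L_0 a_1 L_1 a_2 \cdots a_k L_k$ and $L'_0 b_1 L'_1 b_2 \cdots b_\ell L'_\ell$ with all $L_i, L'_j \in \Cs$ (using that $\Cs$ is itself closed under intersection for the depth-zero case). The natural strategy is an induction on $k + \ell$: a word in the intersection must factor in a way compatible with both monomial decompositions, and one does a case analysis on the relative positions of the first markers $a_1$ and $b_1$. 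The key algebraic input that makes the case analysis close up is that $\Cs$ is quotient-closed and a lattice, so that ``prefix of a $\Cs$-language ending just before a given letter'' type constructs stay in $\Cs$, and that marked concatenation can absorb a letter on either side. This bookkeeping is genuinely intricate — it is the technical heart of the polynomial-closure theory — so rather than reproduce it I would cite Arfi~\cite{arfi87} for the original argument (under the extra hypothesis that $\Cs$ is a Boolean algebra) and \cite{jep-intersectPOL,pzgenconcat} for the strengthening that only requires $\Cs$ to be a quotient-closed lattice, and present the rest of the proof (the lattice axioms, quotient-closure, and the reduction of concatenation to marked concatenation) as the routine verifications sketched above.
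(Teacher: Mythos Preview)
The paper does not prove this theorem at all: it is stated as a known result, with the remark that the difficulty lies in closure under intersection, and is attributed to Arfi~\cite{arfi87} (under the additional hypothesis that \Cs is a Boolean algebra) with the strengthening to \pvaris cited to~\cite{jep-intersectPOL,pzgenconcat}. Your proposal is therefore in line with the paper's treatment --- you too ultimately cite the intersection argument rather than reproduce it --- but you go further by sketching the routine verifications (quotient-closure via induction on the polynomial expression, and the reduction of unmarked concatenation to marked concatenation plus quotients via $KL = \bigcup_{a\in A}(Ka^{-1})aL \cup L_\veps$ where $L_\veps$ is $L$ or $\emptyset$ according to whether $\veps\in K$). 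Those sketches are correct, and your identification of closure under intersection as the only nontrivial point exactly matches the paper's commentary.
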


In the paper, we consider classes of the form \bool{\pol{\Cs}} built by applying polynomial closure and Boolean closure successively to some arbitrary \vari~\Cs. For the sake of avoiding clutter, we shall write \bpol{\Cs} for \bool{\pol{\Cs}}. Note that by Lemma~\ref{lem:boolclos} and Theorem~\ref{thm:polclos}, we have the following corollary.

\begin{corollary}\label{cor:bpolclos}
Let \Ds be a \pvari. Then, \bpol{\Ds} is a \vari.
\end{corollary}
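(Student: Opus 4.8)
The plan is simply to chain the two results that immediately precede the statement; no new argument is required. First I would apply Theorem~\ref{thm:polclos} to the hypothesis: since \Cs is a \pvari, its polynomial closure \pol{\Cs} is again a \pvari (in fact it is even closed under concatenation and marked concatenation, but here only the lattice axioms and quotient-closure of \pol{\Cs} are needed). Thus \pol{\Cs} satisfies exactly the hypothesis of Lemma~\ref{lem:boolclos}.

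Second, I would invoke Lemma~\ref{lem:boolclos} with \pol{\Cs} playing the role of ``\Cs'': the Boolean closure of a \pvari is a \vari. Since \bpol{\Cs} is by definition \bool{\pol{\Cs}}, this yields precisely that \bpol{\Cs} is a \vari, which is the assertion of the corollary.

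The only point worth recording is that the two cited statements compose without friction: the output guaranteed by Theorem~\ref{thm:polclos} (a \pvari) is exactly the input demanded by Lemma~\ref{lem:boolclos}, so there is nothing further to check — in particular, closure of \bpol{\Cs} under intersection follows from closure of \pol{\Cs} under intersection (the nontrivial content of Arfi's theorem) together with the fact that Boolean closure trivially preserves quotient-closure, as already noted in the discussion preceding Lemma~\ref{lem:boolclos}. Accordingly, there is no genuine obstacle in proving the corollary itself; all the real difficulty is contained in Theorem~\ref{thm:polclos}, whose delicate part is establishing that \pol{\Cs} is closed under intersection.
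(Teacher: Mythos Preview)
Your proposal is correct and follows exactly the route the paper takes: the corollary is stated immediately after noting it follows from Lemma~\ref{lem:boolclos} and Theorem~\ref{thm:polclos}, with no further argument given. Your chaining of these two results is precisely what is intended.
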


A key remark is that in general, classes built with Boolean closure (such as \bpol{\Cs}) are \emph{not} closed under concatenation. This contrasts with polynomial closure in which closure under (marked) concatenation holds by definition. This is an issue, as most of our techniques designed for handling separation and covering rely heavily on concatenation. We cope with this problem by using the following weak concatenation principle, which holds for any class that is the Boolean closure of another class which is itself closed under concatenation.

\begin{lemma}\label{lem:hintro:boolconcat}
  Let \Ds be a lattice closed under concatenation. Consider $L,L' \in \Ds$ and let $\Kb,\Kb'$ be \bool{\Ds}-covers of $L$ and $L'$ respectively. Then, there exists a \bool{\Ds}-cover~\Hb of~$LL'$ such that for every $H \in \Hb$, we have $H \subseteq KK'$ for some $K \in \Kb$ and $K' \in \Kb'$.
\end{lemma}

\begin{proof}
  Every language in $\Kb \cup \Kb'$ is a Boolean combination of languages in \Ds and $L,L' \in \Ds$. Therefore, there exists a finite lattice $\Gs \subseteq \Ds$ satisfying the two following properties:
  \begin{enumerate}
  \item $L,L' \in \Gs$ and,
  \item every language $K \in \Kb \cup \Kb'$ belongs to $\bool{\Gs}$.
  \end{enumerate}
  We define \Fs as the least lattice such that $HH' \in \Fs$ for every $H,H' \in \Gs$. Since  \Gs is finite, so is~\Fs. Moreover, we know that $\Gs \subseteq \Ds$ and \Ds is a lattice closed under concatenation. Therefore, $\Fs \subseteq \Ds$. It follows that \bool{\Fs} is a finite Boolean algebra such that $\bool{\Fs} \subseteq \bool{\Ds}$.

  Consider the canonical equivalence $\sim_{\bool{\Fs}}$ associated to \bool{\Fs} (it relates words that belong to the same languages of \bool{\Fs}, see Section~\ref{sec:prelims}). Since $L,L'\in \Gs$, it is immediate by definition of \Fs that we have $LL' \in \Fs \subseteq \bool{\Fs}$. Therefore, Lemma~\ref{lem:canoequiv} implies that $LL'$ is a union of $\sim_{\bool{\Fs}}$-classes. We let \Hb be the set consisting of all these $\sim_{\bool{\Fs}}$-classes. By definition, \Hb is a \bool{\Fs}-cover of $LL'$ and therefore a \bool{\Ds}-cover as well since $\bool{\Fs} \subseteq \bool{\Ds}$. It remains to prove that for every $H \in \Hb$, we have $H \subseteq KK'$ for some $K \in \Kb$ and $K' \in \Kb'$. We fix $H \in \Hb$ for the proof. We use the following fact.

  \begin{fct}\label{fct:hintro:boolconcat}
    Consider a \emph{finite} language $G \subseteq H$. Then, there exists $K \in \Kb$ and $K' \in \Kb'$ such that $G \subseteq KK'$.
  \end{fct}

  Let us first admit Fact~\ref{fct:hintro:boolconcat} and apply it to finish the main proof. For every $n \in \nat$, we let $G_n \subseteq H$ be the finite language containing all words of length at most $n$ in $H$. Clearly, we have,
  \[
    H = \bigcup_{n \in \nat} G_n \quad \text{and} \quad G_n \subseteq G_{n+1} \text{ for all $n \in \nat$}.
  \]
  For every $n \in \nat$, Fact~\ref{fct:hintro:boolconcat} yields $K_n \in \Kb$ and $K_n' \in \Kb'$ such that $G_n \subseteq K_nK_n'$. Since \Kb and $\Kb'$ are finite sets, there exist $K \in \Kb$ and $K' \in \Kb'$ such that $K_n = K$ and $K'_{n}=K'$ for infinitely many $n$. Since $G_n \subseteq G_{n+1}$ for all $n$, it follows that $G_n \subseteq KK'$ for every $n \in \nat$. Finally, since $H = \bigcup_{n \in \nat} G_n$, this implies $H \subseteq KK'$, finishing the proof.

  \medskip

  It remains to prove Fact~\ref{fct:hintro:boolconcat}. Consider a finite language $G \subseteq H$ and let $G = \{w_1,\dots,w_n\}$. We exhibit $K \in \Kb$ and $K' \in \Kb'$ such that $G \subseteq KK'$.

  By definition, $H$ is a $\sim_{\bool{\Fs}}$-class included in $LL'$. This implies that $w_1,\dots,w_n \in LL'$ and $w_1 \sim_{\bool{\Fs}} \cdots \sim_{\bool{\Fs}} w_n$. Using these equivalences, we first prove the following claim which involves the canonical preorder \canog associated to the finite lattice \Gs.

  \begin{claim}
    For every $u,v \in A^*$ such that $w_n = uv$, there exist $u_1,\dots,u_n,v_1,\dots,v_n \in A^*$ such that $w_i = u_iv_i$ for every $i \leq n$, $u \canog u_1 \canog \cdots \canog u_n$ and $v \canog v_1 \canog \cdots \canog v_n$.
  \end{claim}

  \begin{proof}
    We prove the existence of $u_1,v_1 \in A^*$ such that $w_1 = u_1v_1$, $u \canog u_1$ and $v \canog v_1$ using the hypothesis that $w_n = uv$ and $w_n \sim_{\bool{\Fs}} w_1$. One may then iterate the argument to build $u_2,\dots,u_n \in A^*$ and $v_2,\dots,v_n \in A^*$, using the fact that $w_1 \sim_{\bool{\Fs}} \cdots \sim_{\bool{\Fs}} w_n$.

    Consider the languages $U = \upset[\Gs] u$ and $V = \upset[\Gs] v$ (the upper sets of $u$ and $v$ for $\canog$). By Lemma~\ref{lem:canosatur}, we have $U,V \in \Gs$. Therefore, we have $UV \in \Fs$ by definition of \Fs. Clearly, $w_n = uv \in UV$. Therefore, $w_n \sim_{\bool{\Fs}} w_1$ implies that $w_1 \in UV$. This yields a decomposition $w_1 = u_1v_1$ with $u_1 \in U$ and $v_1 \in V$. By definition of $U,V$, this implies $u \canog u_1$ and $v \canog v_1$, finishing the proof.
  \end{proof}

  Since $w_n \in LL'$, it admits at least one decomposition $w_n = uv$ with $u \in L$ and $v \in L'$. Therefore, we may apply the claim: there exist $u_{1,1},\dots,u_{n,1},v_{1,1},\dots,v_{n,1} \in A^*$ such that $w_i = u_{i,1}v_{i,1}$ for every $i \leq n$, $u \canog u_{1,1} \canog \cdots \canog u_{n,1}$ and $v \canog v_{1,1} \canog \cdots \canog v_{n,1}$. Since $w_n=u_{n,1}v_{n,1}$, one can repeat the argument any number of times, say $m$, to obtain words $u_{i,j}, v_{i,j}$ for $1\leqslant i\leqslant n$ and $1\leqslant j\leqslant m$, such that for all such $i,j$, we have $w_i =u_{i,j} v_{i,j}$, and:
\begin{alignat*}{2}
  u &\canog u_{1,1} \canog \cdots \canog u_{n,1}  \qquad\text{and}\qquad &v &\canog v_{1,1} \canog \cdots \canog v_{n,1},\\
  u_{n,1} &\canog u_{1,2} \canog \cdots \canog u_{n,2}  \qquad\text{and}\qquad &v_{n,1} &\canog v_{1,2} \canog \cdots \canog v_{n,2},\\
  &\ \cdots\qquad&&\\
  u_{n,m-1} &\canog u_{1,m} \canog \cdots \canog u_{n,m}  \quad\ \,\text{and}\qquad& v_{n,m-1} &\canog v_{1,m} \canog \cdots \canog v_{n,m}.
\end{alignat*}

Moreover, since $w_1$ is a finite word, it admits finitely many decompositions $w_1 = u_1v_1$ with $u_1,v_1 \in A^*$. Therefore, the pigeonhole principle yields two integers $j<k$ such that $u_{1,j} = u_{1,k}$ (and therefore $v_{1,j} = v_{1,k}$), which gives words $u_1,\dots,u_n,v_1,\dots,v_n \in A^*$ such that $w_i = u_iv_i$ for every $i \leq n$ and,
  \[
    u \canog u_1 \canog \cdots \canog u_n \canog u_1 \qquad v \canog v_1 \canog \cdots \canog v_n \canog v_1.
  \]
  One may verify from the definitions that for every $x,y \in A^*$, $x \canog y$ and $y \canog x$ imply that $x \sim_{\bool{\Gs}} y$. Therefore, the above implies that,
  \begin{equation}
    u_1 \sim_{\bool{\Gs}} \cdots \sim_{\bool{\Gs}} u_n \qquad v_1 \sim_{\bool{\Gs}} \cdots \sim_{\bool{\Gs}} v_n.\label{eq:1}
  \end{equation}
  Since $u \in L$, $v \in L'$ and $L,L' \in \Gs$ by definition of \Gs, we get that $u_1 \in L$ and $v_1 \in L'$ by definition of $\canog$. Therefore, since \Kb and $\Kb'$ are covers of $L$ and $L'$ respectively, there exist $K \in \Kb$ and $K' \in \Kb'$ such that $u_1 \in K$ and $v_1 \in K'$.

  By definition of \Gs, both $K$ and $K'$ belong to $\bool{\Gs}$. It follows from~\eqref{eq:1} that $u_1,\dots,u_n \in K$ and $v_1,\dots,v_n \in K'$. Altogether, we obtain that $G = \{u_1v_1,\dots,u_nv_n\} \subseteq KK'$, finishing the proof.
\end{proof}

Finally, we shall need the following variant of Lemma~\ref{lem:hintro:boolconcat}, which considers marked concatenation instead of standard concatenation. The proof is identical to the one of Lemma~\ref{lem:hintro:boolconcat} and left to the reader.

\begin{lemma}\label{lem:hintro:boolmconcat}
  Let \Ds be a lattice closed under marked concatenation. Consider $L,L' \in \Ds$ and $a \in A$, and let $\Kb,\Kb'$ be \bool{\Ds}-covers of $L$ and $L'$ respectively. There exists a \bool{\Ds}-cover \Hb of $LaL'$ such that for every $H \in \Hb$, we have $H \subseteq KaK'$ for some $K \in \Kb$ and $K' \in \Kb'$.
\end{lemma}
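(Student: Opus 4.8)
The plan is to follow the proof of Lemma~\ref{lem:hintro:boolconcat} line by line, systematically replacing the standard concatenation $HH'$ by the marked concatenation $HaH'$ for the fixed letter $a$. First I would fix a finite lattice $\Cs \subseteq \Ds$ with $L,L' \in \Cs$ and every $K \in \Kb \cup \Kb'$ in $\bool{\Cs}$; this is possible exactly as before, each such $K$ being a Boolean combination of finitely many languages of \Ds. The only change is the auxiliary lattice: I let \Fs be the least lattice containing $HaH'$ for all $H,H' \in \Cs$. Since \Cs is finite and \Ds is a lattice closed under \emph{marked} concatenation with $\Cs \subseteq \Ds$, the lattice \Fs is finite and satisfies $\Fs \subseteq \Ds$, so $\bool{\Fs}$ is a finite Boolean algebra contained in $\bool{\Ds}$. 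As $L,L' \in \Cs$, we have $LaL' \in \Fs \subseteq \bool{\Fs}$, so by Lemma~\ref{lem:canoequiv} the language $LaL'$ is a union of $\sim_{\bool{\Fs}}$-classes; I take \Hb to be the set of these classes, which is automatically a $\bool{\Fs}$-cover of $LaL'$, hence a $\bool{\Ds}$-cover.

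It then remains to show that each $H \in \Hb$ satisfies $H \subseteq KaK'$ for some $K \in \Kb$ and $K' \in \Kb'$. As before I would first prove the finitary version, that for every finite $G \subseteq H$ there are $K \in \Kb$ and $K' \in \Kb'$ with $G \subseteq KaK'$, and then deduce $H \subseteq KaK'$ by the same compactness argument: write $H$ as the increasing union of the finite languages $G_n$ of its words of length at most $n$, pick $(K_n,K_n')$ with $G_n \subseteq K_n a K_n'$, and use finiteness of \Kb and \Kb' to extract a single pair working for all $n$. For the finitary version, fix $G = \{w_1,\dots,w_n\} \subseteq H$; then the $w_i$ all lie in $LaL'$ and $w_1 \sim_{\bool{\Fs}} \cdots \sim_{\bool{\Fs}} w_n$. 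The marked analogue of the internal Claim states: for every factorization $w_n = uav$ there are $u_1,\dots,u_n,v_1,\dots,v_n$ with $w_i = u_i a v_i$ for all $i$, $u \canoc u_1 \canoc \cdots \canoc u_n$ and $v \canoc v_1 \canoc \cdots \canoc v_n$. Its proof is the same one-step argument iterated along the chain $w_1 \sim_{\bool{\Fs}} \cdots \sim_{\bool{\Fs}} w_n$: with $U = \upset[\Cs]{u}$ and $V = \upset[\Cs]{v}$, Lemma~\ref{lem:canosatur} gives $U,V \in \Cs$, so $UaV \in \Fs$ by construction; from $w_n = uav \in UaV$ and $w_n \sim_{\bool{\Fs}} w_i$ we obtain $w_i \in UaV$, which yields a factorization $w_i = u_i a v_i$ with $u \canoc u_i$ and $v \canoc v_i$. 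Choosing a factorization $w_n = uav$ with $u \in L$, $v \in L'$ (which exists since $w_n \in LaL'$), iterating the Claim, and applying the pigeonhole principle to the \emph{finitely many} factorizations $w_1 = u_1 a v_1$ of the fixed word $w_1$, one gets words with $w_i = u_i a v_i$, $u \canoc u_1 \canoc \cdots \canoc u_n \canoc u_1$ and $v \canoc v_1 \canoc \cdots \canoc v_n \canoc v_1$. Then $u_1 \in L$ and $v_1 \in L'$ by definition of \canoc, so some $K \in \Kb$ contains $u_1$ and some $K' \in \Kb'$ contains $v_1$; since $x \canoc y \canoc x$ implies $x \sim_{\bool{\Cs}} y$ and $K,K' \in \bool{\Cs}$, all the $u_i$ lie in $K$ and all the $v_i$ lie in $K'$, so $G = \{u_1 a v_1,\dots,u_n a v_n\} \subseteq KaK'$.

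I do not expect a real obstacle: the argument is structurally identical to that of Lemma~\ref{lem:hintro:boolconcat}. The only point worth a sanity check is that the relevant splittings are now taken at a marked occurrence of the letter $a$ rather than at an arbitrary cut point; this only shrinks the set of factorizations of $w_1$, so the pigeonhole step still applies, and the single crucial closure property $UaV \in \Fs$ holds precisely because \Ds, and hence \Fs, is closed under marked concatenation with $a$.
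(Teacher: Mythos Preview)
Your proposal is correct and is exactly the approach the paper intends: the paper explicitly states that the proof of Lemma~\ref{lem:hintro:boolmconcat} is identical to that of Lemma~\ref{lem:hintro:boolconcat} and is left to the reader, and you have carried out precisely that adaptation, with the only change being the replacement of $HH'$ by $HaH'$ in the definition of \Fs and the corresponding use of marked factorizations throughout.
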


\subsection{Main theorem}

We may now state the main theorem of the paper: whenever \Cs is a \emph{finite} \vari, \bpol{\Cs}-separation and \bpol{\Cs}-covering are both decidable.

\begin{theorem}\label{thm:bool}
  Let \Cs be a finite \vari. Then, separation and covering are decidable for \bpol{\Cs}.
\end{theorem}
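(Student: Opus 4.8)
The plan is to deduce everything from the covering framework of~\cite{pzcovering2}. Since \bpol{\Cs} is a \vari by Corollary~\ref{cor:bpolclos}, Lemma~\ref{lem:septocove} shows that \bpol{\Cs}-separation is the particular case of \bpol{\Cs}-covering where the second input is a singleton, so it suffices to decide covering. The framework (recalled in Section~\ref{sec:covers}) reduces \bpol{\Cs}-covering to a purely algebraic question about one morphism: given inputs $L_1$ and $\Lb_2$, use Lemma~\ref{lem:compat} to compute a \Cs-compatible morphism $\alpha\colon A^*\to M$ recognizing all of them, attach to $\alpha$ its canonical \ratm $\rho$, and note that coverability of $(L_1,\Lb_2)$ is read off the finite object \bpopti, the \bpol{\Cs}-optimal $\rho$-\imprint. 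So the whole theorem reduces to showing that \bpopti is \emph{computable} from $\alpha$.

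I would compute \bpopti by a least fixpoint (saturation) procedure producing a candidate set $S$. The base elements come from the analogous object for the smaller class \pol{\Cs}, which is computable by the theorem of~\cite{pseps3j}; here I would crucially use that \pol{\Cs} is a \pvari closed under concatenation and marked concatenation (Theorem~\ref{thm:polclos}), since that is exactly the hypothesis under which the weak concatenation principles apply. The closure rules then encode that \bpol{\Cs} is the Boolean closure of \pol{\Cs}: a rule closing $S$ under the multiplication of $M$ and under idempotent powers, and a rule implementing Lemma~\ref{lem:hintro:boolconcat} together with its marked variant Lemma~\ref{lem:hintro:boolmconcat}, which let one glue \bpol{\Cs}-covers of two languages of \pol{\Cs} into a \bpol{\Cs}-cover of their (marked) concatenation while keeping control of \imprints.

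Correctness of the procedure then splits into two inclusions between $S$ and \bpopti. The inclusion $\bpopti\subseteq S$ (soundness of the description) is the easier one: by induction on the fixpoint derivation, each rule is turned into an explicit construction of a \bpol{\Cs}-cover realizing the corresponding element---the base case is the \pol{\Cs}-cover supplied by~\cite{pseps3j}, the multiplication rule uses only that \bpol{\Cs} is closed under Boolean operations (Corollary~\ref{cor:bpolclos}), and the concatenation rule is precisely Lemmas~\ref{lem:hintro:boolconcat} and~\ref{lem:hintro:boolmconcat}.

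The reverse inclusion $S\supseteq\bpopti$---that \emph{every} \bpol{\Cs}-cover already exhibits at least the behaviour recorded in $S$---is the heart of the argument and the main obstacle. The difficulty is structural: in contrast with \pol{\Cs}, the class \bpol{\Cs} is \emph{not} closed under marked concatenation, so one cannot simply strip a marked concatenation off a given cover and recurse. Instead I would fix an arbitrary \bpol{\Cs}-cover \Kb, work with long words, and run a Simon-style factorization-forest/Ramsey analysis, using the compatibility of \canoc with concatenation (Lemma~\ref{lem:canoquo}) and the finiteness of \Cs to keep the combinatorics inside the finite index of \equc; at the leaves one invokes the~\cite{pseps3j} lower bound for \pol{\Cs}-covers, and the inductive step uses the weak concatenation principle ``in reverse'' to locate the concatenation pattern inside \Kb. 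Finding the right notion of approximation so that this induction closes despite the lack of concatenation closure is where essentially all the technical work lies, and is precisely why the techniques here must diverge from those of~\cite{pseps3j}.
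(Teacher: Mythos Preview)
Your outline has the right ingredients—the covering reduction, the \pol{\Cs} result of~\cite{pseps3j}, and the weak concatenation lemmas—but the architecture you propose is not the paper's, and as written it contains genuine gaps.

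First a slip that signals a deeper confusion: your two ``directions'' $\bpopti\subseteq S$ and $S\supseteq\bpopti$ are literally the same inclusion. More to the point, your description of the ``easy'' direction (``each rule is turned into an explicit construction of a \bpol{\Cs}-cover realizing the corresponding element'') is at odds with the semantics of optimal \imprints: $r\in\bpopti$ means $r$ lies in the \imprint of \emph{every} \bpol{\Cs}-cover, so exhibiting a single cover realizing $r$ proves nothing toward $r\in\bpopti$. Constructing covers is how one shows elements are \emph{not} in~\bpopti.

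Second, you propose a least-fixpoint saturation. The paper explicitly flags that \bpol{\Cs} is atypical here: \cbpopti is characterized as the \emph{greatest} \bpol{\Cs}-saturated subset of $(\sclac)\times R$ (Theorem~\ref{thm:bpol:carac}), and the algorithm is a greatest fixpoint that at each round discards pairs failing a witness condition; that condition itself refers to an auxiliary least fixpoint $\rbpols$ in $(\sclac)\times R\times 2^R$. So the procedure alternates a greatest and a least fixpoint—structure your plan does not have.

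Third, the technical engine you are missing is the \emph{nesting of \ratms} (Section~\ref{sec:tools}). From $\rho$ the paper builds auxiliary \ratms $\bratauxbc$ and $\lratauxpc$ whose values are themselves optimal \imprints, proves they are \emph{quasi}-\tame when $\Ds=\pol{\Cs}$ (Lemmas~\ref{lem:copelrat} and~\ref{lem:nest:bpolrat}; this is exactly where Lemmas~\ref{lem:hintro:boolconcat} and~\ref{lem:hintro:boolmconcat} are used), and then applies the \pol{\Cs} characterization to these nested maps. Crucially, because the nested maps are only quasi-\tame and not \nice, the paper must first extend the~\cite{pseps3j} theorem to quasi-\mratms (Theorem~\ref{thm:half:mainpolc}). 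Your Simon/Ramsey sketch supplies none of this machinery; in particular you cannot ``invoke the~\cite{pseps3j} lower bound at the leaves'' because that theorem requires a (quasi-)\mratm as input, and you never build one beyond $\rho$ itself.

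Finally, the paper's soundness argument (Section~\ref{sec:sound}) is not an induction on derivations: it constructs an infinite tower of nested \ratms $\tau_n$ and proves the generic inclusion $\bigcap_n f_n(\tau_n(L))\subseteq\opti{\bool{\Ds}}{L,\rho}$ for any lattice \Ds (Proposition~\ref{prop:sound1}), then specializes to $\Ds=\pol{\Cs}$ via Theorem~\ref{thm:half:mainpolc}. Completeness (Section~\ref{sec:comp}) similarly goes through $\bratauxbc$ and Theorem~\ref{thm:half:mainpolc}. Neither direction resembles the factorization-forest argument you sketch.
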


Before we detail the applications of Theorem~\ref{thm:bool}, let us make an important observation. This result completes an earlier one presented in~\cite{pseps3j} which applies to classes of the form \pol{\Cs} and \pol{\bpol{\Cs}} (when \Cs is a finite \vari). Let us recall this result.

\begin{thmC}[\cite{pseps3j}]\label{thm:pbp}
  Let \Cs be a finite \vari. Then, separation and covering are decidable for \pol{\Cs} and \pol{\bpol{\Cs}}.
\end{thmC}

An important point is that while \bpol{\Cs} is an intermediary class between \pol{\Cs} and \pol{\bpol{\Cs}}, the proof of Theorem~\ref{thm:bool} involves ideas which are very different from those used in~\cite{pseps3j} to prove Theorem~\ref{thm:pbp}. This is not surprising and we already mentioned the reason above: unlike \pol{\Cs} and \pol{\bpol{\Cs}}, the class \bpol{\Cs} is not closed under concatenation in general. This difference is significant, since most of the techniques we have for handling covering rely heavily on concatenation. In practice, this means that Boolean closure is harder to handle than polynomial closure, at least with such techniques.

However, we do reuse a result of~\cite{pseps3j} to prove Theorem~\ref{thm:bool}. More precisely, it turns out that the arguments for handling \bpol{\Cs} (in this paper) and \pol{\bpol{\Cs}} (in~\cite{pseps3j}) both exploit the same subresult for the simpler class \pol{\Cs} (albeit in very different ways). This subresult is stronger than the decidability of \pol{\Cs}-covering and is proved in~\cite{pseps3j}. We recall it in Section~\ref{sec:pol}. However, it is important to keep in mind that from this preliminary result for \pol{\Cs}, the arguments for \bpol{\Cs} and \pol{\bpol{\Cs}} build in orthogonal directions.

\begin{remark}
  This discussion might seem surprising. Indeed, by definition, \pol{\bpol{\Cs}} is built from \bpol{\Cs} using polynomial closure. Hence, intuition suggests that one needs some knowledge about the latter to handle the former. However, this is not the case as Boolean closure can be bypassed in the definition of \pol{\bpol{\Cs}}. Specifically, one may prove that $\pol{\bpol{\Cs}} = \pol{\copol{\Cs}}$ where \copol{\Cs} is the class containing all complements of languages in \pol{\Cs}. This is exactly how the class $\pol{\bpol{\Cs}}$ is handled in~\cite{pseps3j}.
\end{remark}

The remaining sections of the paper are devoted to proving Theorem~\ref{thm:bool}. We rely on a framework which was designed in~\cite{pzcovering2} for the specific purpose of handling the covering problem. We recall it in Section~\ref{sec:covers}. The algorithm for \bpol{\Cs}-covering is presented in Section~\ref{sec:bpol}. The remaining sections are then devoted to the correction proof of this algorithm. However, let us first conclude the current section by detailing the important applications of Theorem~\ref{thm:bool}.

\subsection{Applications of the main theorem}

Classes of the form \bpol{\Cs} are important: they are involved in natural hierarchies of classes of languages, called \emph{concatenation hierarchies}. Let us briefly recall what they are (we refer the reader to~\cite{pzgenconcat} for a detailed presentation). A particular concatenation hierarchy depends on a single parameter: an arbitrary \vari of regular languages~\Cs, called its \emph{basis}. Once the basis is chosen, the construction is uniform. Languages are classified into levels of two kinds: full levels (denoted by 0, 1, 2,\ldots) and half levels (denoted by 1/2, 3/2, 5/2,\ldots):
\begin{itemize}
\item Level $0$ is the basis (\emph{i.e.}, our parameter class \Cs).
\item Each \emph{half level} $n+\frac{1}{2}$, for $n\in\nat$, is the \emph{polynomial closure} of the previous full level, \emph{i.e.}, of level $n$.
\item Each \emph{full level} $n+1$, for $n\in\nat$, is the \emph{Boolean closure} of the previous half level, \emph{i.e.}, of level $n+\frac12$.
\end{itemize}
The generic process is depicted in the following figure.
\begin{center}
  \begin{tikzpicture}[scale=.9]
    \node[anchor=east] (l00) at (0.0,0.0) {{\large $0$}};

    \node[anchor=east] (l12) at (1.5,0.0) {\large $\frac{1}{2}$};
    \node[anchor=east] (l11) at (3.0,0.0) {\large $1$};
    \node[anchor=east] (l32) at (4.5,0.0) {\large $\frac{3}{2}$};
    \node[anchor=east] (l22) at (6.0,0.0) {\large $2$};
    \node[anchor=east] (l52) at (7.5,0.0) {\large $\frac{5}{2}$};

    \draw[very thick,->] (l00) to node[above] {$Pol$} (l12);
    \draw[very thick,->] (l12) to node[below] {$Bool$} (l11);
    \draw[very thick,->] (l11) to node[above] {$Pol$} (l32);
    \draw[very thick,->] (l32) to node[below] {$Bool$} (l22);
    \draw[very thick,->] (l22) to node[above] {$Pol$} (l52);

    \draw[very thick,dotted] (l52) to ($(l52)+(1.0,0.0)$);
  \end{tikzpicture}
\end{center}

Hence, a reformulation of Theorem~\ref{thm:bool} is that for any concatenation hierarchy whose basis is \emph{finite}, separation is decidable for level one. There are two prominent examples of finitely based hierarchies:
\begin{itemize}
\item The \emph{Straubing-Th\'erien} hierarchy~\cite{StrauConcat,TheConcat}, whose basis is the class $\{\emptyset,A^*\}$.
\item The \emph{dot-depth} hierarchy of Brzozowski and Cohen~\cite{BrzoDot}, whose basis is the class $\{\emptyset,\{\varepsilon\},A^+,A^*\}$.
\end{itemize}
Consequently, Theorem~\ref{thm:bool} implies that separation and covering are decidable for level one in these two hierarchies. These are not new results. By definition, level one in the Straubing-Th\'erien hierarchy is exactly the class of \emph{piecewise testable languages}. For this class, separation has been solved in~\cite{martens,pvzmfcs13} and covering has been solved in~\cite{pzcovering2}. For dot-depth one, the decidability of covering and separation was originally obtained indirectly. Indeed, it is known~\cite{pzsucc2} that separation and covering for any level in the dot-depth hierarchy reduce to the same problem for the corresponding level in the Straubing-Th\'erien hierarchy. Therefore, while the decidability of separation and covering for dot-depth one is not a new result, an advantage of Theorem~\ref{thm:bool} is that we obtain a new \emph{direct} proof of this~result.

\medskip

However, these are not the main applications of Theorem~\ref{thm:bool}. It turns out that the theorem also applies to level two in the Straubing-Th\'erien hierarchy. Indeed, it is known that this level is also level one in another finitely based concatenation hierarchy. More precisely, recall the class \at presented in Example~\ref{ex:ateq}: it consists of all Boolean combinations of languages $A^*aA^*$, for some $a \in A$. It is straightforward to verify that \at is a finite \vari. The following theorem was shown in~\cite{pin-straubing:upper} (see also~\cite{pzcovering2} for a recent proof).

\begin{theorem}[\cite{pin-straubing:upper}]\label{thm:alphatrick}
  Level two in the Straubing-Th\'erien hierarchy is exactly the class \bpol{\at}.
\end{theorem}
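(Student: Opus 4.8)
The plan is to reduce everything to a single identity between polynomial closures, namely $\pol{\pt}=\pol{\at}$, where $\pt$ denotes the class of piecewise testable languages — equivalently, level one of the Straubing-Thérien hierarchy. This suffices: level $3/2$ of that hierarchy is $\pol{\pt}$ by definition, so level two is $\bool{\pol{\pt}}$; hence if $\pol{\pt}=\pol{\at}$, then level two equals $\bool{\pol{\at}}=\bpol{\at}$, which is exactly the claim.

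To prove $\pol{\pt}=\pol{\at}$, I would use two elementary facts about $\pol$, both immediate from its definition as a least closure: it is monotone, and it is idempotent, i.e. $\pol{\pol{\Cs}}=\pol{\Cs}$. Granting the inclusions $\at\subseteq\pt$ and $\pt\subseteq\pol{\at}$ and applying $\pol$ to the chain $\at\subseteq\pt\subseteq\pol{\at}$, one gets $\pol{\at}\subseteq\pol{\pt}\subseteq\pol{\pol{\at}}=\pol{\at}$, hence the desired equality. The inclusion $\at\subseteq\pt$ is easy: by definition $\pt$ is the Boolean closure of the subword languages $L_w:=A^*a_1A^*a_2A^*\cdots a_nA^*$ for $w=a_1\cdots a_n\in A^*$, so in particular $A^*aA^*=L_a\in\pt$, and therefore the Boolean closure $\at$ of the languages $A^*aA^*$ is contained in $\pt$. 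The substance lies in the second inclusion, $\pt\subseteq\pol{\at}$.

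For $\pt\subseteq\pol{\at}$, I would first record that $\at$ is a finite \vari, so by Theorem~\ref{thm:polclos} the class $\pol{\at}$ is a \pvari — in particular a lattice — and it is closed under marked concatenation by definition. Next I would check that both $L_w$ and its complement $A^*\setminus L_w$ belong to $\pol{\at}$ for every $w$. For $L_w$: since $A^*\in\at\subseteq\pol{\at}$ and $\pol{\at}$ is closed under marked concatenation, $L_w$ is obtained from $A^*$ by $|w|$ successive marked concatenations, hence lies in $\pol{\at}$. For $A^*\setminus L_w$ with $w=a_1\cdots a_n$: I would prove the identity $A^*\setminus L_w=\bigcup_{j=0}^{n-1}K_j$, where $K_j:=B_0\,a_1\,B_1\,a_2\cdots a_j\,B_j$ and $B_i:=A^*\setminus A^*a_{i+1}A^*$ is the language of words not containing the letter $a_{i+1}$. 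Each $B_i$ lies in $\at$ (being the complement of a generator), so each $K_j$ is built from $B_0\in\at$ by $j$ marked concatenations and hence lies in $\pol{\at}$, and the finite union $\bigcup_j K_j$ lies in $\pol{\at}$ as well. Finally, since $\pol{\at}$ is a lattice containing each $L_w$ together with its complement, and since — by De Morgan — every Boolean combination of the $L_w$ can be rewritten using only unions and intersections of the $L_w$ and their complements, it follows that $\pt=\bool{\{L_w\}}\subseteq\pol{\at}$.

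The one genuinely combinatorial ingredient, and the step I expect to be the main obstacle, is the identity $A^*\setminus L_w=\bigcup_{j=0}^{n-1}K_j$: it expresses that $u$ fails to contain $w=a_1\cdots a_n$ as a scattered subword exactly when, for some $j<n$, the greedy left-to-right embedding of $w$ into $u$ realises precisely the prefix $a_1\cdots a_j$ and then stalls. I would establish it by the standard greedy-matching argument: mark in $u$ the first occurrence of $a_1$, then the first occurrence of $a_2$ after it, and so on; $u\in L_w$ precisely when all of $a_1,\dots,a_n$ get marked, and when the process stalls after $a_j$ the suffix of $u$ following the marked $a_j$ contains no $a_{j+1}$, which is exactly the condition $u\in K_j$ (the converse inclusion $K_j\subseteq A^*\setminus L_w$ being immediate). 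Apart from this lemma, the argument only uses the closure properties of $\pol{\at}$ supplied by Theorem~\ref{thm:polclos} together with the monotonicity and idempotence of $\pol$, so it should go through routinely.
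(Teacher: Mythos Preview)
The paper does not actually prove Theorem~\ref{thm:alphatrick}; it is quoted as a known result of Pin and Straubing~\cite{pin-straubing:upper} (with a pointer to~\cite{pzcovering2} for a recent proof), so there is no in-paper argument to compare against. Your proposal is a correct self-contained proof and, in fact, follows the classical route: reduce to the identity $\pol{\pt}=\pol{\at}$ via monotonicity and idempotence of $\pol$, and establish the nontrivial inclusion $\pt\subseteq\pol{\at}$ by writing both $L_w$ and its complement as elements of $\pol{\at}$.

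Two minor remarks. First, the inclusion $K_j\subseteq A^*\setminus L_w$ that you call ``immediate'' does deserve one line: if $u=v_0a_1v_1\cdots a_jv_j$ with $a_{i+1}\notin\content{v_i}$ for all $i\leq j$, then any embedding of $a_1\cdots a_{j+1}$ into $u$ must place each $a_i$ (for $i\leq j$) at or after the explicit $a_i$, leaving no room for $a_{j+1}$ in $v_j$; this is the same greedy argument you use for the other inclusion, read contrapositively. Second, your appeal to Theorem~\ref{thm:polclos} is exactly right and is the only nontrivial external ingredient (closure of $\pol{\at}$ under intersection), so the proof is complete as sketched.
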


In view of Theorem~\ref{thm:alphatrick}, we obtain the following immediate corollary of Theorem~\ref{thm:bool}.

\begin{corollary}\label{cor:st2}
  Separation and covering are decidable for level two in the Straubing-Th\'erien hierarchy.
\end{corollary}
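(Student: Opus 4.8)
The plan is to obtain this statement as a direct consequence of Theorem~\ref{thm:bool} together with Theorem~\ref{thm:alphatrick}. The first step is to invoke Theorem~\ref{thm:alphatrick}, which identifies level two of the Straubing-Th\'erien hierarchy with the class \bpol{\at}. Hence it suffices to show that separation and covering are decidable for \bpol{\at}, and by Theorem~\ref{thm:bool} this in turn reduces to verifying that \at is a finite \vari.

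The second step is precisely that verification. The class \at is a Boolean algebra by definition, since it consists of all Boolean combinations of the languages $A^*aA^*$ with $a \in A$. It is finite because the alphabet $A$ is finite, so there are only finitely many generators $A^*aA^*$ and therefore only finitely many distinct Boolean combinations of them. Finally, \at is \quotienting: for any word $w$ and letter $a$, one checks that $w^{-1}(A^*aA^*)$ equals $A^*aA^*$ if $a$ does not occur in $w$ and equals $A^*$ otherwise, and symmetrically for right quotients; since quotients commute with union and complement, the quotient of any Boolean combination of the $A^*aA^*$ is again such a combination, so \at is closed under quotients. Thus \at is a finite \vari.

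The final step is to apply Theorem~\ref{thm:bool} with $\Cs = \at$, which gives decidability of \bpol{\at}-separation and \bpol{\at}-covering; by Theorem~\ref{thm:alphatrick} these are exactly separation and covering for level two of the Straubing-Th\'erien hierarchy. I do not expect a genuine obstacle here, since the argument is a short chaining of two previously established theorems; the only point that needs (minor) attention is confirming that \at meets the hypotheses of Theorem~\ref{thm:bool}, which is the routine check carried out in the second step.
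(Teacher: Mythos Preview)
Your proposal is correct and follows exactly the paper's approach: the corollary is stated as an immediate consequence of Theorem~\ref{thm:bool} applied with $\Cs=\at$, together with Theorem~\ref{thm:alphatrick}. The only difference is that you spell out the verification that \at is a finite \vari, whereas the paper simply remarks (just before Theorem~\ref{thm:alphatrick}) that this is straightforward.
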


Finally, this result may be lifted to dot-depth two using again the generic transfer theorem proved in~\cite{pzsucc2}. Hence, we obtain the following additional corollary.

\begin{corollary}\label{cor:dd2}
  Separation and covering are decidable for dot-depth two.
\end{corollary}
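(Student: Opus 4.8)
The plan is to deduce the statement from Corollary~\ref{cor:st2} using the generic transfer theorem of~\cite{pzsucc2}. Recall that this theorem provides, for every level $n$ (integer or half-integer), an effective reduction from separation and covering for level $n$ of the dot-depth hierarchy to the corresponding problems for level $n$ of the Straubing-Th\'erien hierarchy. Intuitively, the reduction encodes an input instance over an alphabet $A$ as an instance over an enriched alphabet that simulates the successor and endmarker predicates available in the dot-depth setting; the key property established in~\cite{pzsucc2} is that this encoding preserves coverability (and separability) in both directions.

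First I would instantiate this transfer theorem at $n = 2$. This turns an arbitrary instance of dot-depth-two covering into an equivalent instance of covering for level two of the Straubing-Th\'erien hierarchy, computable from the input. Composing this reduction with the decision procedure supplied by Corollary~\ref{cor:st2} then yields a decision procedure for dot-depth-two covering. For separation, one may either apply the transfer theorem of~\cite{pzsucc2} directly, since it also handles separation, or observe that dot-depth two is in particular a lattice, so that by Lemma~\ref{lem:septocove} separation is the special case of covering in which the second component is a singleton; decidability of covering therefore entails decidability of separation.

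There is essentially no obstacle left once Corollary~\ref{cor:st2} is in hand: the genuine difficulty has been absorbed into the two inputs to this argument, namely Corollary~\ref{cor:st2} itself (obtained in this paper by combining Theorems~\ref{thm:bool} and~\ref{thm:alphatrick}) and the correspondence between the dot-depth and Straubing-Th\'erien hierarchies of~\cite{pzsucc2}. If anything needs care, it is merely checking that the hypotheses under which the transfer theorem applies are satisfied; but these hold at every level of the two hierarchies, and in particular at level two.
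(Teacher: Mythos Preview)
Your proposal is correct and matches the paper's own argument: the paper derives Corollary~\ref{cor:dd2} directly from Corollary~\ref{cor:st2} via the generic transfer theorem of~\cite{pzsucc2}, exactly as you do. Your additional remark that separation follows from covering through Lemma~\ref{lem:septocove} is sound but unnecessary, since the transfer theorem of~\cite{pzsucc2} already handles both problems.
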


\begin{remark}
  Logical characterizations of these two hierarchies are known (see~\cite{ThomEqu} for the dot-depth hierarchy and~\cite{PPOrder} for the Straubing-Thérien hierarchy). Each of them corresponds to a quantifier alternation hierarchy within a particular variant of first-order logic over words. The two variants differ by the set of predicates which are allowed in sentences (they have the same overall expressive power, but this changes the levels in their respective quantifier alternation hierarchies). We refer the reader to~\cite{pzgenconcat} for details and a recent proof of these results.

  In particular, level two in the Straubing-Thérien hierarchy corresponds to a logic denoted by \bswd and dot-depth two corresponds to another logic denoted by \bswsd. Hence, our results also imply that covering and separation are decidable   for these two logics.
\end{remark}

\section{Framework: \ratms and optimal covers}
\label{sec:covers}
In this section, we present the framework which we use to formulate our covering algorithm for \bpol{\Cs} (when \Cs is a finite \vari) announced in Theorem~\ref{thm:bool}. The framework itself was designed and applied to several specific classes in~\cite{pzcovering2}. Moreover, it was also used in~\cite{pseps3j} to formulate algorithms for \pol{\Cs}- and \pol{\bpol{\Cs}}-covering. Here, we recall the part of this framework that we shall actually need in the paper. We refer the reader to~\cite{pzcovering2} for a complete and detailed presentation.

\medskip

Let \Ds be a lattice. In \Ds-covering, the input is a pair $(L,\Lb)$ where $L$ is a regular language and \Lb a finite set of regular languages: we have to decide whether there exists a \Ds-cover of~$L$ which is separating for \Lb. We design a two-step approach for tackling this~problem.

\medskip
The first step is to transform this decision problem into the following \emph{computational} problem. Given an input $(L,\Lb)$, we compute a \Ds-cover of $L$, which is \emph{optimal} with respect to~$\Lb$ in the following sense: for every subset $\Hb$ of $\Lb$, this optimal cover is separating for $\Hb$ if and only if $(L,\Hb)$ is \Ds-coverable. Thus, this single object encapsulates \emph{all answers} to the \Ds-covering problem for inputs of the form $(L,\Hb)$, with $\Hb\subseteq\Lb$. Such an optimal cover always exists when \Ds is a lattice. The set \Lb can be viewed as a parameter constraining this optimal cover, or dually, as a measure of the quality of this cover.

\medskip

The second step in the approach of~\cite{pzcovering2} consists in replacing the set \Lb by a (more general) algebraic object called a \emph{\ratm}. Intuitively, \ratms play the same role as the set of languages \Lb from the above paragraph. Thus, they are also designed to measure the quality of \Ds-covers. Given a \ratm $\rho$ and a language $L$, we use $\rho$ to rank the existing~\Ds-covers of $L$.

\medskip

We are then able to reformulate \Ds-covering with these notions. Instead of deciding whether $(L,\Lb)$ is \Ds-coverable, we compute an optimal \Ds-cover for $L$ for a \ratm $\rho$ that we build from $\Lb$. An advantage of this approach is that it yields elegant formulations for the covering algorithms which are formulated with it. We refer the reader to~\cite{pzcovering2} and~\cite{pseps3j} for examples (in addition to \bpol{\Ds}, which is presented in this paper). Another important motivation for using this framework is that in order to handle \bpol{\Ds}, we require a result for the simpler class \pol{\Ds} which is stronger than the decidability of covering (this result is proved in~\cite{pseps3j}). The framework of~\cite{pzcovering2} is designed to formulate this result.

\medskip

We start by defining \ratms. Then, we explain how they are used to measure the quality of a cover and define optimal covers. Finally, we connect these notions to the covering problem. Let us point out that several statements presented here are without proof. We refer the reader to~\cite{pzcovering2} for these proofs.

\subsection{\Ratms}

\Ratms involve commutative and idempotent monoids. We shall write such monoids $(R,+)$: we call the binary operation  ``$+$'' \emph{addition} and denote the neutral element by $0_R$. Being idempotent means that for all $r \in R$, we have $r + r = r$. Observe that for every commutative and idempotent monoid $(R,+)$, we may define a canonical ordering $\leq$ over $R$:
\[\text{For all }
  r, s\in R,\quad r\leq s \text{ when } r+s=s.
\]
It is straightforward to verify that $\leq$ is a partial order which is compatible with addition. Moreover, we have the following fact which is immediate from the definitions.

\begin{fct}\label{fct:increas}
  Let $(R,+)$ and $(Q,+)$ be two commutative and idempotent monoids. Moreover, let $\gamma: (R,+) \to (Q,+)$ be a morphism. Then, $\gamma$ is increasing: for every $s,t\in R$ such that $s \leq t$, we have $\gamma(s) \leq \gamma(t)$.
\end{fct}

\begin{example}
  For any set $E$, it is immediate that $(2^E,\cup)$ is an idempotent and commutative monoid. The neutral element is $\emptyset$. Moreover, the canonical ordering is set inclusion.
\end{example}

When manipulating the subsets of a commutative and idempotent monoid $(R,+)$ we shall often need to apply a \emph{downset operation}. Given $S \subseteq R$, we write $\dclosr S$ for the set,
\[
  \dclosr S = \{r \in R \mid r \leq s \text{ for some $s \in S$}\}.
\]
We extend this notation to Cartesian products of arbitrary sets with $R$. Given some set $X$ and a subset $S \subseteq X \times R$, we write $\dclosr S$ for the set,
\[
  \dclosr S = \{(x,r) \in X \times R \mid \text{there exists $q\in R$ such that $r \leq q$ and $(x,q) \in S$}\}.
\]

We may now define \ratms. A \emph{\ratm} is a morphism $\rho: (2^{A^*},\cup) \to (R,+)$ where $(R,+)$ is a \emph{finite} idempotent and commutative monoid called the \emph{rating set of $\rho$}. That is, $\rho$ is a map from $2^{A^{*}}$ to $R$ satisfying the following properties:
\begin{enumerate}\item\label{itm:fzer} $\rho(\emptyset) = 0_R$.
\item\label{itm:ford} For all $K_1,K_2 \subseteq A^*$, $\rho(K_1\cup K_2)=\rho(K_1)+\rho(K_2)$.
\end{enumerate}

For the sake of improved readability, when applying a \ratm $\rho$ to a singleton set $K = \{w\}$, we shall write $\rho(w)$ for $\rho(\{w\})$. Additionally, we write $\rho_*: A^* \to R$ for the restriction of $\rho$ to $A^*$: for every $w \in A^*$, we have $\rho_*(w) = \rho(w)$ (this notation allows us to write $\rho_*\inv(r) \subseteq A^*$ for the language of all words $w \in A^*$ such that $\rho(w) = r$).

\smallskip

Most of the statements involved in our framework make sense for arbitrary \ratms. However, we shall often have to work with special \ratms which satisfy additional properties. We present them now.

\medskip
\noindent
{\bf \Nice \ratms.} A \ratm $\rho: 2^{A^*} \to R$ is \emph{\nice} when, for every language $K \subseteq A^*$, there exist finitely many words $w_1,\dots,w_n \in K$ such that $\rho(K) = \rho(w_1) + \cdots + \rho(w_k)$.

Observe that in this case, $\rho$ is characterized by the canonical map $\rho_*: A^* \to R$. Indeed, for every language $K$, we may consider the sum of all elements $\rho(w)$ for $w \in K$: while it may be infinite, it boils down to a finite one since $R$ is commutative and idempotent. The hypothesis that $\rho$ is \nice implies that $\rho(K)$ is equal to this sum.

\medskip
\noindent
{\bf \Mratms.} A \ratm $\rho: 2^{A^*} \to R$ is \emph{\tame} when its rating set $R$ has more structure: it needs to be an \emph{idempotent semiring}. Moreover, $\rho$ has to satisfy an additional property connecting this structure to language concatenation. Namely, it has to be a morphism of semirings.

A \emph{semiring} is a tuple $(R,+,\cdot)$ where $R$ is a set and ``$+$'' and ``$\cdot$''  are two binary operations called addition and multiplication, such that the following axioms are satisfied:
\begin{itemize}
\item $(R,+)$ is a commutative monoid (the neutral element is denoted by $0_R$).
\item $(R,\cdot)$ is a monoid (the neutral element is denoted by $1_R$).
\item Multiplication distributes over addition: $r \cdot (s + t) = (r \cdot s) + (r \cdot t)$  and $(r + s) \cdot t = (r \cdot t) + (s \cdot t)$ for all $r,s,t \in R$.
\item The neutral element of $(R,+)$ is a zero for $(R,\cdot)$: $0_R \cdot r = r \cdot 0_R = 0_R$ for all $r \in R$.
\end{itemize}
We say that a semiring $R$ is \emph{idempotent} when $r + r = r$ for every $r \in R$, \emph{i.e.}, when the additive monoid $(R,+)$ is idempotent (on the other hand, there is no additional constraint on the multiplicative monoid $(R,\cdot)$).

\begin{example}\label{ex:bgen:semiring}
  A key example of an infinite idempotent semiring is the set $2^{A^*}$ of all languages over~$A$. Union is the addition (with $\emptyset$ as neutral element) and language concatenation is the multiplication (with $\{\varepsilon\}$ as neutral element).
\end{example}

Clearly, any finite idempotent semiring $(R,+,\cdot)$ is in particular a rating set: $(R,+)$ is an idempotent and commutative monoid. In particular, one may verify that the canonical ordering ``$\leq$'' on $R$, is compatible with multiplication as well.

\medskip

We may now define \mratms: as expected they are semiring morphisms. Let $\rho: 2^{A^*} \to R$ be a \ratm. By definition, this means that the rating set $(R,+)$ is a finite idempotent commutative monoid and that $\rho$ is a monoid morphism from $(2^{A^*},\cup)$ to $(R,+)$. We say that $\rho$ is \tame when the rating set $R$ is equipped with a second binary operation ``$\cdot$'' such that $(R,+,\cdot)$ is an idempotent semiring and $\rho$ is also a monoid morphism from $(2^{A^*},\cdot)$ to $(R,\cdot)$. In other words, the two following additional axioms hold:
\begin{enumerate}\setcounter{enumi}{2}
\item\label{itm:bgen:funit} $\rho(\varepsilon) = 1_R$.
\item\label{itm:bgen:fmult} For all $K_1,K_2 \subseteq A^*$, we have $\rho(K_1K_2) = \rho(K_1) \cdot \rho(K_2)$.
\end{enumerate}

\begin{remark}
  An important point is that the \ratms which are both \nice and \tame are finitely representable. As we explained above, a \nice \ratm $\rho: 2^{A^*} \to R$ is characterized by the canonical map $\rho_*: A^* \to R$.  Moreover, when $\rho$ is \tame as well, $\rho_*$ is finitely representable: it is a morphism from $A^{*}$ into the finite monoid $(R,{\cdot})$. Thus, we may consider algorithms taking \nice \mratms as input. Let us point out that the \ratms which are not \nice and \tame remain important. We often deal with them in our proofs.
\end{remark}

\subsection{\Imprints and optimal covers}

We now explain how we use \ratms to measure the quality of covers. This involves an additional notion: ``\imprints''. Consider a \ratm $\rho: 2^{A^*} \to R$. For any finite set of languages \Kb, the \emph{$\rho$-\imprint} of \Kb, denoted by $\prin{\rho}{\Kb}$, is the following subset of~$R$:
\[
  \begin{array}{lll}
    \prin{\rho}{\Kb} & = & \dclosr \{\rho(K) \mid K\in\Kb\} \\
                  & = & \{r \in R \mid \text{there exists $K \in \Kb$ such that $r \leq \rho(K)$}\}.
  \end{array}
\]
When using this notion, we shall have some language $L \subseteq A^*$ in hand: our goal is to find the ``best possible'' cover \Kb of $L$. Intuitively, $\rho$-\imprints measure the ``quality'' of candidate covers \Kb (the smaller the $\rho$-\imprint, the better the quality).

\medskip

This leads to the notion of optimality. Let \Ds be an arbitrary lattice. Given a language~$L$, an \emph{optimal \Ds-cover of~$L$ for $\rho$} is a \Ds-cover of $L$ which has the smallest possible $\rho$-\imprint (with respect to inclusion). That is, \Kb is an optimal \Ds-cover of $L$ for $\rho$ if and only if,
\[
  \prin{\rho}{\Kb} \subseteq \prin{\rho}{\Kb'} \quad \text{for every \Ds-cover $\Kb'$ of $L$}.
\]
Furthermore, in the special case when $L = A^*$, we speak of optimal \emph{universal} \Ds-cover for $\rho$.

\medskip

If \Ds is a lattice, one can show that there always exists at least one optimal \Ds-cover of~$L$ for $\rho$ (see~\cite[Lemma~4.15]{pzcovering2}). In general, there are actually infinitely many of~them.

\begin{lemma}\label{lem:bgen:opt}
  Let \Ds be a lattice. Then, for any \ratm $\rho: 2^{A^*} \to R$ and any language $L \subseteq A^*$, there exists an optimal \Ds-cover of $L$ for $\rho$.
\end{lemma}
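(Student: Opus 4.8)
The plan rests on two observations: first, the rating set $R$ is finite, so only finitely many subsets of $R$ can arise as $\rho$-\imprints; second, any two \Ds-covers of $L$ can be combined into a single \Ds-cover whose \imprint is contained in both. Together these will exhibit a \Ds-cover of $L$ whose \imprint is least among all \Ds-covers of $L$, which is exactly what we want.

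First I would establish the combination step. Given \Ds-covers $\Kb$ and $\Kb'$ of $L$, set $\Hb = \{K \cap K' \mid K \in \Kb,\ K' \in \Kb'\}$. Since \Ds is a lattice, it is closed under finite intersection, so $\Hb$ is a finite subset of \Ds; and $\Hb$ covers $L$, because any $w \in L$ lies in some $K \in \Kb$ and some $K' \in \Kb'$, hence in $K \cap K' \in \Hb$. For the \imprint, note that $K \cap K' \subseteq K$; writing $K = (K \cap K') \cup (K \setminus K')$ and applying the two \ratm axioms together with idempotence of $+$ gives $\rho(K \cap K') \leq \rho(K)$ for the canonical order on $R$, so $\rho(K \cap K') \in \prin{\rho}{\Kb}$, and symmetrically $\rho(K \cap K') \in \prin{\rho}{\Kb'}$. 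Since \imprints are downsets, this yields $\prin{\rho}{\Hb} \subseteq \prin{\rho}{\Kb} \cap \prin{\rho}{\Kb'}$.

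Then I would run the finiteness argument. The family of \Ds-covers of $L$ is nonempty, since $\{A^*\}$ is one (a lattice contains $A^*$). Hence $\mathcal{S} = \{\prin{\rho}{\Kb} \mid \Kb \text{ a \Ds-cover of } L\}$ is a nonempty subset of $2^R$, which is finite, so $\mathcal{S}$ has an element $I_0$ that is minimal for inclusion, witnessed by some \Ds-cover $\Kb_0$ with $\prin{\rho}{\Kb_0} = I_0$. For an arbitrary \Ds-cover $\Kb$ of $L$, applying the combination step to $\Kb_0$ and $\Kb$ produces a \Ds-cover $\Hb$ of $L$ with $\prin{\rho}{\Hb} \subseteq I_0 \cap \prin{\rho}{\Kb} \subseteq I_0$; minimality of $I_0$ in $\mathcal{S}$ then forces $\prin{\rho}{\Hb} = I_0$, whence $I_0 \subseteq \prin{\rho}{\Kb}$. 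Thus $\prin{\rho}{\Kb_0} \subseteq \prin{\rho}{\Kb}$ for every \Ds-cover $\Kb$ of $L$, i.e.\ $\Kb_0$ is an optimal \Ds-cover of $L$ for $\rho$.

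I do not expect a genuine obstacle here: this is purely a finiteness-and-closure argument. The only points requiring care are the monotonicity of $\rho$ with respect to language inclusion — which is precisely where the \ratm axioms and the idempotence of $+$ are used — and the observation that a minimal element of $\mathcal{S}$ is in fact a least element, which holds because $\mathcal{S}$ is downward directed via the combination step and this, in turn, relies on closure of \Ds under intersection (so the hypothesis that \Ds is a lattice, rather than merely quotient-closed, is essential).
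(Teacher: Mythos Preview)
Your proof is correct. The paper itself does not supply a proof of this lemma: it simply refers the reader to~\cite{pzcovering2} (Lemma~4.15 there). The argument you give---combining two \Ds-covers by pairwise intersections to show the set of possible $\rho$-\imprints is downward directed, then using finiteness of $2^R$ to extract a least element---is essentially the standard one, and is what the cited reference does as well.
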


It is important to note that the proof of Lemma~\ref{lem:bgen:opt} is non-constructive: given $L$ and $\rho: 2^{A^*} \to R$, computing an actual optimal \Ds-cover of $L$ for $\rho$ is a difficult problem in general. As we shall see below, getting such an algorithm (in the special case when $\rho$ is \nice and \tame) yields a procedure for \Ds-covering. Before we can establish this connection precisely, we require a key observation about optimal \Ds-covers.

\medskip
\noindent{\bf Optimal \imprints.} By definition, given a language $L$, all optimal \Ds-covers of $L$ for $\rho$ have the same $\rho$-\imprint. Hence, this unique $\rho$-\imprint is a \emph{canonical} object for \Ds, $L$ and $\rho$. We call it the \emph{\Ds-optimal $\rho$-\imprint on $L$} and we denote it by $\opti{\Ds}{L,\rho}$:
\[
  \opti{\Ds}{L,\rho} = \prin{\rho}{\Kb} \subseteq R \quad \text{for every optimal \Ds-cover \Kb of $L$ for $\rho$}.
\]
Additionally, in the particular case when $L = A^*$, we shall speak of \emph{\Ds-optimal universal $\rho$-\imprint} and write \opti{\Ds}{\rho} for \opti{\Ds}{A^*,\rho}.

\medskip

We complete these definitions with a few properties of optimal \imprints. We start with a straightforward fact which compares the optimal \imprints on languages which are comparable with inclusion. The proof is available in~\cite[Fact~4.17]{pzcovering2}.

\begin{fct}\label{fct:inclus2}
  Consider a \ratm $\rho: 2^{A^*} \to R$ and a lattice \Ds. Let $H,L$ be two languages such that $H \subseteq L$. Then, $\opti{\Ds}{H,\rho} \subseteq \opti{\Ds}{L,\rho}$.
\end{fct}

\noindent
More precisely, the following fact connects optimal \imprints with union of languages.

\begin{fct}\label{fct:lattice:optunion}
  Let $\rho: 2^{A^*} \to R$ be a \ratm and consider two languages $H,L$. Then, for every lattice \Ds, we have $\opti{\Ds}{H \cup L,\rho} = \opti{\Ds}{H,\rho} \cup \opti{\Ds}{L,\rho}$.
\end{fct}

\begin{proof}
  We already know that $\opti{\Ds}{H,\rho} \subseteq \opti{\Ds}{H \cup L,\rho}$ and $\opti{\Ds}{L,\rho} \subseteq \opti{\Ds}{H \cup L,\rho}$ by Fact~\ref{fct:inclus2}. Therefore, the inclusion $\opti{\Ds}{H,\rho} \cup \opti{\Ds}{L,\rho} \subseteq \opti{\Ds}{H \cup L,\rho}$ is immediate. We prove the converse one. Let $r \in \opti{\Ds}{H \cup L,\rho}$. We let $\Kb_H$ and $\Kb_L$ be optimal \Ds-covers of $H$ and $L$ respectively (for $\rho$). Clearly, $\Kb_H \cup \Kb_L$ is a cover $H \cup L$. Therefore, $r \in \opti{\Ds}{H \cup L,\rho}$ implies that $r \in \prin{\rho}{\Kb_H\cup \Kb_L}$. Hence, there exists $K \in \Kb_H\cup \Kb_L$ such that $r \leq \rho(K)$. Then, either $K \in \Kb_H$ which implies $r \in \prin{\rho}{\Kb_H} = \opti{\Ds}{H,\rho}$ or $K \in \Kb_L$ which implies $r \in \prin{\rho}{\Kb_L} = \opti{\Ds}{L,\rho}$. Altogether, we get $r \in \opti{\Ds}{H,\rho} \cup \opti{\Ds}{L,\rho}$, finishing the proof.
\end{proof}

\subsection{Connection with the covering problem} Finally, we explain how \ratms are used for handling the covering problem.

Given a lattice \Ds, it turns out that the \Ds-covering problem reduces to another problem whose input is a \nice \mratm. Let us point out that \emph{two} reductions of this kind are presented in~\cite{pzcovering2}. The first one is simpler but restricted to Boolean algebras. On the other hand, the second one applies to any lattice, but requires working with more involved objects.

In the paper, we investigate classes of the form \bpol{\Ds}, which are Boolean algebras. Hence, we shall mostly work with the first variant whose statement is as follows.

\begin{propC}[\cite{pzcovering2}]\label{prop:thereduction}
  Let \Cs be a Boolean algebra. Assume that there exists an algorithm for the following computational problem:
  \begin{center}
    \begin{tabular}{ll}
      {\bf Input:}  & A \nice \mratm $\rho:2^{A^*} \to R$.                  \\
      {\bf Output:} & Compute the \Cs-optimal universal $\rho$-\imprint, \opti{\Cs}{\rho}.
    \end{tabular}
  \end{center}
  Then, \Cs-covering is decidable.
\end{propC}

\begin{proof}[Proof sketch]
  The proof builds on several simple steps. The first one is the observation that when $\Cs$ is a Boolean algebra, deciding whether a pair $(L,\Lb)$ is \Cs-coverable reduces to the case where $L=A^{*}$. Indeed, one may check that $(L,\Lb)$ is \Cs-coverable if and only if so is $(A^*, \Lb\cup\{L\})$. We are left to show that one can decide whether a pair $(A^{*},\Lb)$ is~\Cs-coverable.

  Second, observe that $(2^{\Lb},\cup,\emptyset)$ is a rating set. One may verify that the function $\rho_{\Lb}:2^{A^*}\to2^{\Lb}$ defined by $\rho_{\Lb}(K)=\{L\in\Lb\mid K\cap L\neq\emptyset\}$ is a \nice \ratm. Furthermore, for any subset \Hb of \Lb, one may prove that $(A^{*},\Hb)$ is \Cs-coverable if and only if $\Hb\in \opti{\Cs}{\rho_\Lb}$.

  The third step is to reduce the computation of $\opti{\Cs}{\rho_\Lb}$ to that of $\opti{\Cs}{\rho'_\Lb}$, where $\rho'_{\Lb}$ is a \emph{\tame} \nice \ratm that we can build from~$\rho_{\Lb}$. By the hypothesis of the statement, this last set is computable. We refer the reader to~\cite[Theorem 5.21]{pzcovering2} for more details.
\end{proof}

Additionally, we shall need in Section~\ref{sec:tools} to apply a theorem of~\cite{pseps3j} for classes of the form \pol{\Cs} (which are lattices but not Boolean algebras) as a sub-result. Thus, we also recall the terminology associated to the generalized reduction which holds for arbitrary lattices, since we need it to state this theorem.

When working with an arbitrary lattice, one needs to consider slightly more involved objects. Given a lattice \Ds, a map $\alpha: A^* \to M$ into a finite set $M$ (in practice, $\alpha$ will be a monoid morphism but this is not required for the definition) and a \ratm $\rho:2^{A^*} \to R$, we write \popti{\Ds}{\alpha}{\rho} for the following set,
\[
  \popti{\Ds}{\alpha}{\rho} = \{(s,r) \in M \times R \mid r \in \opti{\Ds}{\alpha\inv(s),\rho}\} \subseteq M \times R.
\]
We call \popti{\Ds}{\alpha}{\rho} the \emph{$\alpha$-pointed \Ds-optimal $\rho$-\imprint}. Clearly, it encodes all sets $\opti{\Ds}{\alpha\inv(s),\rho}$ for $s \in M$.

\smallskip
The following statement is~\cite[Proposition~5.18]{pseps3j} (see also~\cite[Proposition~7.2]{pzcovering2}).

\begin{proposition}\label{prop:thereductionlat}
  Consider a lattice \Ds and some finite \vari \Cs. Assume that there exists an algorithm for the following computational problem:
  \begin{center}
    \begin{tabular}{ll}
      {\bf Input:}  & A \Cs-compatible morphism $\alpha: A^*\to M$ and                                    \\
                    & a \nice \mratm $\rho:2^{A^*} \to R$.                                                \\
      {\bf Output:} & Compute the $\alpha$-pointed \Ds-optimal $\rho$-\imprint, \popti{\Ds}{\alpha}{\rho}.
    \end{tabular}
  \end{center}
  Then, \Ds-covering is decidable.
\end{proposition}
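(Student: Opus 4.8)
Our plan is to reduce an arbitrary instance of \Ds-covering to a single invocation of the assumed algorithm. Fix an input $(L_1,\Lb_2)$ with $\Lb_2 = \{L^1,\dots,L^k\}$. We first produce the two objects the algorithm expects. By Lemma~\ref{lem:compat}, we compute a \Cs-compatible morphism $\alpha : A^* \to M$ recognizing $L_1$, say $L_1 = \alpha\inv(P)$ with $P \subseteq M$. Independently, we compute a morphism $\eta : A^* \to N$ into a finite monoid recognizing every $L^j$, say $L^j = \eta\inv(F_j)$. We then set $R = 2^N$, which is a finite idempotent semiring for union as addition and the pointwise product $S \cdot T = \{st \mid s \in S,\ t \in T\}$ as multiplication, and we define $\rho : 2^{A^*} \to R$ by $\rho(K) = \eta(K)$. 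One checks directly that $\rho$ is a \nice \mratm: niceness holds because $\eta(K)$ is a finite subset of $N$, hence a finite union of singletons $\rho(w)$ for suitably chosen $w \in K$; multiplicativity holds because $\rho(K_1K_2) = \eta(K_1K_2) = \eta(K_1)\cdot\eta(K_2) = \rho(K_1)\cdot\rho(K_2)$, with $\rho(\veps) = \{1_N\} = 1_R$.

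Next, we isolate the ``bad'' subset of $R$. Set $\Gs = \{S \subseteq N \mid S \cap F_j = \emptyset \text{ for some } j \leq k\}$. Since $\eta$ recognizes $L^j$, a language $K$ is disjoint from $L^j$ if and only if $\eta(K) \cap F_j = \emptyset$; hence $K$ is disjoint from some language of $\Lb_2$ if and only if $\rho(K) \in \Gs$. Moreover $\Gs$ is downward closed for the canonical ordering of $R$ (which is inclusion): if $S' \subseteq S$ and $S \cap F_j = \emptyset$ then $S' \cap F_j = \emptyset$. As a consequence, for any finite set of languages $\Kb$, every member of $\Kb$ is disjoint from some language of $\Lb_2$ if and only if $\prin{\rho}{\Kb} \subseteq \Gs$ (the forward implication uses downward closedness of $\Gs$; the backward one just instantiates $r = \rho(K)$ in the definition of the \imprint).

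The core of the argument is the equivalence: $(L_1,\Lb_2)$ is \Ds-coverable if and only if $\opti{\Ds}{L_1,\rho} \subseteq \Gs$. Indeed, if a separating \Ds-cover $\Kb$ of $L_1$ exists, then $\prin{\rho}{\Kb} \subseteq \Gs$ by the previous paragraph, and since $\opti{\Ds}{L_1,\rho} \subseteq \prin{\rho}{\Kb}$ by definition of the optimal \imprint, we get $\opti{\Ds}{L_1,\rho} \subseteq \Gs$. Conversely, if $\opti{\Ds}{L_1,\rho} \subseteq \Gs$, pick an optimal \Ds-cover $\Kb$ of $L_1$ for $\rho$, which exists by Lemma~\ref{lem:bgen:opt}; then $\prin{\rho}{\Kb} = \opti{\Ds}{L_1,\rho} \subseteq \Gs$, so every member of $\Kb$ is disjoint from some language of $\Lb_2$, \emph{i.e.}, $\Kb$ is a separating \Ds-cover. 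It remains to observe that the test is effective: running the assumed algorithm on $\alpha$ and $\rho$ yields $\popti{\Ds}{\alpha}{\rho}$, and since $L_1 = \bigcup_{s \in P}\alpha\inv(s)$, Fact~\ref{fct:lattice:optunion} (applied inductively) gives $\opti{\Ds}{L_1,\rho} = \{r \in R \mid (s,r) \in \popti{\Ds}{\alpha}{\rho} \text{ for some } s \in P\}$, which is computable; we then test its inclusion in $\Gs$, decidable since $R$ is finite.

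I expect the only genuinely delicate point to be this last equivalence, and within it the fact that one may use an \emph{optimal} \Ds-cover (Lemma~\ref{lem:bgen:opt}) whose \imprint is minimal among \emph{all} \Ds-covers: combined with the downward closedness of $\Gs$, this is what lets ``minimal \imprint inside $\Gs$'' certify the existence of a separating cover. A second point to handle with care is why the reduction must go through the pointed version of the framework at all: since \Ds is only assumed to be a lattice, $L_1$ need not be definable inside \Ds, so it is ``pointed at'' via the \Cs-compatible morphism $\alpha$ rather than absorbed into $\rho$, as one would do for a Boolean algebra using Proposition~\ref{prop:thereduction}. Everything else is routine verification against the definitions.
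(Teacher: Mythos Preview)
Your proof is correct. Note that the paper does not actually give its own proof of this proposition: it simply cites it as \cite[Proposition~5.18]{pseps3j}. Your argument is the standard reduction and is essentially what one finds in the cited references~\cite{pzcovering2,pseps3j}: build a \nice \mratm $\rho$ from a recognizer of the languages in $\Lb_2$, encode ``separating'' as an inclusion of the $\rho$-\imprint in a computable downward-closed set, and recover $\opti{\Ds}{L_1,\rho}$ from \popti{\Ds}{\alpha}{\rho} via Fact~\ref{fct:lattice:optunion}.
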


\section{Characterization of \bpol{\Cs}-optimal \imprints}
\label{sec:bpol}
We present a generic characterization of \bpol{\Cs}-optimal \imprints which holds when \Cs is a finite \vari. For the sake of avoiding clutter, we assume that the finite \vari \Cs is fixed for the whole section.

Given a \emph{\nice} \mratm $\rho: 2^{A^*} \to R$, we want to characterize the set $\bpopti \subseteq R$. An important point is that we do not work directly with this set. Instead, we characterize the family of all sets $\opti{\bpol{\Cs}}{D,\rho} \subseteq R$ where $D \subseteq A^*$ is a $\sim_{\Cs}$-class. Note that this family of sets record more information than just the set \bpopti. Indeed, by Fact~\ref{fct:lattice:optunion}, we have,
\[
  \bpopti = \opti{\bpol{\Cs}}{A^*,\rho} = \bigcup_{D \in \sclac} \opti{\bpol{\Cs}}{D,\rho}.
\]
For the sake of convenience, we shall encode this family as a set of pairs in $(\sclac) \times R$.  Given a \mratm $\rho: 2^{A^* } \to R$, we define:
\[
  \cbpopti = \{(D,r) \in (\sclac) \times R \mid r \in \opti{\bpol{\Cs}}{D,\rho}\}.
\]
When $\rho$ is \nice, we characterize \cbpopti as the \emph{greatest} subset of $R$ satisfying specific properties. From the statement, it is straightforward to obtain a greatest fixpoint procedure for computing \cbpopti from $\rho$. In turns, this allows to compute \bpopti using the above equality. By Proposition~\ref{prop:thereduction}, this yields an algorithm for \bpol{\Cs}-covering, thus proving our main result: Theorem~\ref{thm:bool}.

\begin{remark}
  This characterization is rather unique among the results that have been obtained for other classes in~\cite{pzcovering2} and~\cite{pseps3j}. Typically, optimal \imprints are characterized as \textbf{least} subsets, not greatest ones.
\end{remark}

\begin{notation}
  In our statements, we shall frequently manipulate subsets of $(\sclac) \times R$. When doing so, the following notation will be convenient. Given $S \subseteq (\sclac) \times R$ and $D \in \sclac$, we write,
  \[
    S(D) = \{r \in R \mid (D,r) \in S\}
  \]
  In particular, observe that $\cbpopti(D) = \opti{\bpol{\Cs}}{D,\rho}$ by definition.
\end{notation}

Given a \mratm $\rho: 2^{A^*} \to R$, we define a notion of \bpol{\Cs}-saturated subset of $(\sclac) \times R$ (for $\rho$). Our theorem then states that when $\rho$ is \nice, the greatest such subset is exactly \cbpopti.

\begin{remark}
  The definition of \bpol{\Cs}-saturated sets makes sense regardless of whether $\rho$ is \nice. However, we need this hypothesis for the greatest one to be \cbpopti.
\end{remark}

The definition is based on an intermediary notion. With every set $S \subseteq (\sclac) \times R$, we associate another set $\rbpols \subseteq (\sclac) \times R \times 2^R$. For the definition, we need to recall a few properties. Given a word $w \in A^*$, we denote its $\sim_{\Cs}$ class by \ctype{w}. Moreover, since \Cs is closed under quotients, Lemma~\ref{lem:canoquo} yields that the equivalence $\sim_{\Cs}$ is a congruence. We denote by ``$\cmult$'' the multiplication of $\sim_\Cs$-classes in the monoid \sclac. Additionally, since $R$ is a semiring, $2^R$ is one as well for union as addition and the natural multiplication lifted from the one of $R$ (for $U,V \in 2^R$,  $UV = \{qr \mid q \in U \text{ and } r \in V\}$).

We may now present our definition. Consider a set $S \subseteq (\sclac) \times R$. We define \rbpols as the least subset of $(\sclac) \times R \times 2^R$ (with respect to inclusion) which satisfies the following properties:
\begin{itemize}
\item {\it Trivial elements.} For every $w \in A^*$, we have $(\ctype{w},\rho(w),\{\rho(w)\}) \in \rbpols$.
\item{\it Extended downset.} For every $(C,q,U) \in \rbpols$ and $V \subseteq \dclosr U$, we have $(C,q,V) \in \rbpols$.
\item {\it Multiplication.} For every $(C,q,U),(D,r,V) \in \rbpols$, we have $(C \cmult D,qr,UV) \in \rbpols$.
\item {\it $S$-restricted closure.} For every triple of idempotents $(E,f,F) \in \rbpols$, we have\\ $(E,f,F \cdot S(E) \cdot F) \in \rbpols$.
\end{itemize}

We are ready to define the \bpol{\Cs}-saturated subsets of $(\sclac) \times R$. Consider a set $S \subseteq (\sclac) \times R$. We say that $S$ is \emph{\bpol{\Cs}-saturated for $\rho$} if the following property holds:
\begin{equation}\label{eq:bpol:sat}
\begin{array}{c}
\text{for every $(D,r) \in S$, there exist $r_1,\dots,r_k \in R$ such that,} \\
\text{$r \leq r_1 + \cdots + r_k$ and $(D,r_i,\{r_1 + \cdots + r_k\}) \in \rbpols$ for every $i \leq k$}
\end{array}
\end{equation}

We now state our characterization of \bpol{\Cs}-optimal \imprints in the following theorem.

\begin{theorem}\label{thm:bpol:carac}
  Let $\rho: 2^{A^*} \to R$ be a \nice \mratm. Then, \cbpopti is the greatest \bpol{\Cs}-saturated subset of $(\sclac) \times R$ for $\rho$.
\end{theorem}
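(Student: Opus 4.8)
The plan is to prove the two inclusions that together say $\cbpopti$ is the greatest $\bpol{\Cs}$-saturated subset of $(\sclac)\times R$ for $\rho$:
\begin{enumerate}
\item[(A)] $\cbpopti$ is itself $\bpol{\Cs}$-saturated for $\rho$;
\item[(B)] every $\bpol{\Cs}$-saturated set $S$ satisfies $S\subseteq\cbpopti$.
\end{enumerate}
Since the set $\rbpols$ depends monotonically on $S$ (the parameter $S$ occurs only positively, in the ``$S$-restricted closure'' rule), the operator sending $S$ to the set of pairs $(D,r)$ satisfying~\eqref{eq:bpol:sat} is monotone; hence a greatest $\bpol{\Cs}$-saturated set exists, and (A)+(B) identify it with $\cbpopti$. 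Since $R$, $\sclac$ and $2^R$ are all finite and $\rbpols$ is then computable from a finite $S$, the greatest fixpoint is computed effectively by iterating this operator downwards from the full set $(\sclac)\times R$. Recovering $\bpopti=\bigcup_{D\in\sclac}\cbpopti(D)$ via Fact~\ref{fct:lattice:optunion} and feeding this to Proposition~\ref{prop:thereduction} then yields Theorem~\ref{thm:bool}.

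\textbf{Direction (B): a soundness lemma for $\rbpols$.} The heart of (B) is the statement that, for a $\bpol{\Cs}$-saturated $S$ and every triple $(C,q,U)\in\rbpols$, every $\bpol{\Cs}$-cover $\Kb$ of $A^*$ contains a single block $K$ that simultaneously witnesses $q$ and \emph{all} of $U$, inside a finite bunch of words of $\sim_{\Cs}$-type $C$ drawn from $K$. I would prove this by induction on the derivation of $(C,q,U)$. The \emph{trivial} and \emph{extended-downset} cases are immediate, using monotonicity of $\prin{\rho}{\cdot}$. The \emph{multiplication} case is exactly where $\bpol{\Cs}$ fails to be closed under concatenation, so in place of concatenating covers I would invoke the weak concatenation principles, Lemmas~\ref{lem:hintro:boolconcat} and~\ref{lem:hintro:boolmconcat}, which turn a $\bpol{\Cs}$-cover of a product language into one whose blocks refine products of blocks of covers of the factors. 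The \emph{$S$-restricted closure} case is the genuinely new ingredient: around an idempotent $\sim_{\Cs}$-class $E$, one iterates a factorization of a word of type $E$ many times and applies Ramsey's theorem to a carefully assembled finite monoid (built from $\rho$, from the finitely many monoid morphisms recognizing the blocks of $\Kb$, and from the congruence $\sim_{\Cs}$), forcing one block to absorb the contribution $F\cdot S(E)\cdot F$; the factor $S(E)$ is unfolded through the saturation of $S$ (which re-expresses each of its elements via $\rbpols$), the unmarked segments between consecutive marked positions are controlled using the theorem of~\cite{pseps3j} about \pol{\Cs}, and the resulting circular reference to $S(E)$ is bounded, hence resolved, by the finiteness of $R$ and of $\Kb$. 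With the lemma in hand, (B) is immediate: if $(D,r)\in S$, saturation gives $r\le r_1+\cdots+r_k$ with $(D,r_i,\{r_1+\cdots+r_k\})\in\rbpols$, so (taking covers of $A^*$ and using $D\in\Cs\subseteq\bpol{\Cs}$ to keep the witnessing block inside $\Kb$) every $\bpol{\Cs}$-cover of $D$ has a block $K$ with $r\le r_1+\cdots+r_k\le\rho(K)$, i.e.\ $r\in\opti{\bpol{\Cs}}{D,\rho}=\cbpopti(D)$.

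\textbf{Direction (A): a completeness statement.} Here I would start from an optimal $\bpol{\Cs}$-cover of a given $\sim_{\Cs}$-class $D$ (it exists by Lemma~\ref{lem:bgen:opt}) and show that the $\rho$-value of each of its blocks, restricted to the words of any fixed $\sim_{\Cs}$-type, is generated in $\rbpol{\cbpopti}$ by the four rules: single words give the trivial elements; refining a cover by intersecting with $\Cs$-languages and combining covers of pieces realizes multiplication and the extended downset; and splitting an optimal cover of $E$-typed words across a long product and recombining realizes $\cbpopti$-restricted closure for an idempotent type $E$, its ``middle'' contribution being exactly $\opti{\bpol{\Cs}}{E,\rho}=\cbpopti(E)$. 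The hypothesis that $\rho$ is \nice is used crucially at this point: it lets us write $\rho(K)$ for a block $K$ as a finite sum of values $\rho(w)$ of words $w\in K$, which is what makes the trivial-element rule a viable base case. Reading off the decomposition of $r\in\opti{\bpol{\Cs}}{D,\rho}$ produced this way supplies the witnesses $r_1,\dots,r_k$ required by~\eqref{eq:bpol:sat}.

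\textbf{Main obstacle.} I expect the ``$S$-restricted closure'' case — in both directions, but especially in (B) — to be where the real difficulty lies. Because $\bpol{\Cs}$ is not closed under concatenation, the standard move of concatenating optimal covers of the factors is unavailable; one must instead iterate the weak concatenation principle a controlled number of times, extract an idempotent by Ramsey over a delicately chosen finite monoid, and keep simultaneous track of the $\sim_{\Cs}$-types of all the pieces and of the auxiliary value set $U\in 2^R$, whose sole role is to carry the data needed to launch the pumping. Getting this bookkeeping exactly right, and correctly interfacing it with the \pol{\Cs} result of~\cite{pseps3j} governing the unmarked segments, is the technical core; the fixpoint/decidability wrap-up and the remaining cases of both inductions are comparatively routine.
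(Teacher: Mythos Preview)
Your high-level decomposition into (A) and (B) matches the paper, and you correctly identify that the $S$-restricted closure case is where the difficulty concentrates, that the \pol{\Cs} result from~\cite{pseps3j} is the key external input, and that the hypothesis that $\rho$ is \nice is used only in~(A). However, your proposed argument for direction~(B) has a genuine circularity that you do not resolve, and the paper's actual proof is structurally quite different from what you sketch.

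The problem is your ``induction on the derivation of $(C,q,U)\in\rbpols$''. In the $S$-restricted closure case you start from an idempotent triple $(E,f,F)\in\rbpols$ and must establish your invariant for $(E,f,F\cdot S(E)\cdot F)$. To control $S(E)$ you invoke saturation of $S$, which for each $(E,s)\in S$ produces witnesses $(E,s_j,\{s_1+\cdots+s_m\})\in\rbpols$. But these new triples are \emph{not} subderivations of $(E,f,F)$; they may themselves have been built using $S$-restricted closure, possibly again with $S(E)$. So there is no well-founded descent, and your appeal to ``finiteness of $R$ and of $\Kb$'' is not a termination argument: you would need a strictly decreasing measure along the unfolding, and none is apparent. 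A Ramsey argument on a single fixed cover does not help here, because the obstacle is not combinatorial explosion inside one cover but the fact that unfolding $S$ regenerates the very statement you are trying to prove.

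The paper breaks this circularity by an entirely different mechanism. Rather than arguing directly on covers, it builds a hierarchy of auxiliary quasi-\mratms $\tau_0,\tau_1,\tau_2,\dots$ with $\tau_n=\lrataux{\pol{\Cs}}{\rho_*}{\tau_{n-1}}$ (Section~\ref{sec:tools}) together with maps $f_n:Q_n\to 2^R$, proves a general $\bool{\Ds}$ statement that $\bigcap_n f_n(\tau_n(L))\subseteq\opti{\bool{\Ds}}{L,\rho}$ (Proposition~\ref{prop:sound1}), and then shows $S(D)\subseteq f_n(\tau_n(D))$ by induction on $n$ (Proposition~\ref{prop:sound2}). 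The $S$-restricted closure step at level~$n$ appeals only to level~$n{-}1$, so the induction is well-founded. Crucially, the \pol{\Cs} characterization enters not via Ramsey on covers but by being applied to the quasi-\tame rating map $\tau_{n-1}$; this is exactly why the paper must first generalize the theorem of~\cite{pseps3j} from \mratms to quasi-\mratms (Theorem~\ref{thm:half:mainpolc}). Direction~(A) is handled analogously, via the nested \ratm $\bratauxbc$ and a second application of Theorem~\ref{thm:half:mainpolc} (Propositions~\ref{prop:comp1} and~\ref{prop:comp2}). The layered $\tau_n$ construction is the missing idea in your plan; without it, or some equivalent stratification, the $S$-restricted closure case does not close.
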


Given as input a \nice \mratm $\rho:2^{A^*} \to R$, Theorem~\ref{thm:bpol:carac} yields an algorithm to compute \bpopti. Indeed, computing the greatest \bpol{\Cs}-saturated subset of $(\sclac) \times R$ is achieved with a greatest fixpoint algorithm. One starts from the set $S_0 = (\sclac) \times R$ and computes a sequence $S_0 \supseteq S_1 \supseteq S_2 \supseteq \cdots$ of subsets. For every $n \in \nat$, $S_{n+1}$ is the set of all pairs $(D,r) \in S_n$ satisfying~\eqref{eq:bpol:sat} for $S = S_n$. That is,
\[
  \begin{array}{c}
    \text{there exist $r_1,\dots,r_k \in R$ such that,} \\
    \text{$r \leq r_1 + \cdots + r_k$ and $(D,r_i,\{r_1 + \cdots + r_k\}) \in \rbpol{S_{n}}$ for every $i \leq k$}
  \end{array}
\]
Clearly, $S_{n+1} \subseteq S_n$ for every $n \in \nat$. Therefore, the computation eventually reaches a fixpoint which is the greatest \bpol{\Cs}-saturated subset of $(\sclac) \times R$ by definition. Let us point out that the computation of $S_{n+1}$ from $S_n$ involves computing \rbpol{S_{n}}, which is achieved with a least fixpoint procedure by definition.

Altogether, it follows that Theorem~\ref{thm:bpol:carac} yields an algorithm for computing \cbpopti (and therefore, \bpopti as well by Fact~\ref{fct:lattice:optunion}) which alternates between a greatest fixpoint and a least fixpoint.

\begin{remark}\label{rem:niceiscrucial}
  The hypothesis that $\rho$ is \nice in Theorem~\ref{thm:bpol:carac} is mandatory: the result fails otherwise. This is actually apparent on the definition of \bpol{\Cs}-saturated sets. One may verify from the definition of \rbpols that for every triple $(D,q,U) \in \rbpols$, there exists a word $w \in A^*$ such that $D = \ctype{w}$ and $q = \rho(w)$. Therefore, it follows from~\eqref{eq:bpol:sat} that for every $(D,r) \in (\sclac)\times R$ belonging to a \bpol{\Cs}-saturated subset, its second component $r$ must satisfy $r \leq \rho(w_1) + \cdots + \rho(w_k)$ for some words $w_1,\dots,w_k \in A^*$. Intuitively, this means that \bpol{\Cs}-saturated subsets only depend on the image of singletons. Therefore, using the notion only makes sense when $\rho$ is characterized by these images: this is exactly the definition of \nice \ratms.

  This might seem to be a minor observation. Indeed, by Proposition~\ref{prop:thereduction}, being able to compute \bpopti from a \nice \mratm suffices to meet our goal: getting an algorithm for \bpol{\Cs}-covering. Actually, it does not even make sense to speak of an algorithm which takes arbitrary \mratms as input since we are not able to finitely represent them. However, from a theoretical point of view, the fact that we only manage to get a description of \bpopti when $\rho$ is \nice is significant. In the proof of Theorem~\ref{thm:bpol:carac}, we use a theorem of~\cite{pseps3j} as a subresult. Specifically, this theorem is a characterization of \pol{\Cs}-optimal pointed \imprints: given a \Cs-compatible morphism $\alpha: A^* \to M$ and a \mratm $\tau: 2^{A^*} \to Q$, it describes the set \popti{\pol{\Cs}}{\alpha}{\tau}. A key point is that this characterization does \textbf{not} require $\tau$ to be \nice. This is crucial: in the proof of Theorem~\ref{thm:bpol:carac}, we consider auxiliary \ratms built from $\rho$ which need \textbf{not} be~\nice.

  In summary, we are able to handle \pol{\Cs} for all \mratms, including those that are not \nice, which is crucial in order be able to handle \bpol{\Cs}. However, at this level, we are only able to deal with \nice \mratms (the situation is actually similar for \pol{\bpol{\Cs}} as shown in~\cite{pseps3j}). This explains why the results presented in this paper and in~\cite{pseps3j} cannot be lifted to higher levels in concatenation hierarchies (at least not in a straightforward manner).
\end{remark}

We turn to the proof of Theorem~\ref{thm:bpol:carac}. It spans the remaining four sections of the paper. Given a \nice \mratm $\rho: 2^{A^*} \to R$, we have to show that \cbpopti is the greatest \bpol{\Cs}-saturated subset of $(\sclac) \times R$ for $\rho$. The main argument involves two directions which are proved independently. They correspond to soundness and completeness of the greatest fixpoint algorithm computing \cbpopti.
\begin{itemize}
\item The soundness argument shows that \cbpopti contains every \bpol{\Cs}-saturated subset (this implies that the greatest fixpoint procedure only computes elements of \cbpopti). We present it in Section~\ref{sec:sound}.
\item The completeness argument shows that \cbpopti itself is \bpol{\Cs}-saturated (this implies that the greatest fixpoint procedure computes all elements of \cbpopti). We present it in Section~\ref{sec:comp}.
\end{itemize}
When put together, these two results yield as desired that \cbpopti is the greatest \bpol{\Cs}-saturated subset of $R$, proving Theorem~\ref{thm:bpol:carac}.

However, before presenting the main argument, we require some additional material about \ratms. For both directions, we shall introduce auxiliary \ratms (built from~$\rho$) and apply a characterization of \pol{\Cs}-optimal \imprints to them (taken from~\cite{pseps3j}). These auxiliary \ratms are built using generic constructions which are not specific to Theorem~\ref{thm:bpol:carac}. We present them in Section~\ref{sec:tools}. Then, we recall the theorem characterizing  \pol{\Cs}-optimal \imprints from~\cite{pseps3j} in Section~\ref{sec:pol} (actually, we slightly generalize it, since we shall apply it for \ratms that are more general than the ones considered in~\cite{pseps3j}).

\section{Nesting of \ratms}
\label{sec:tools}
In this section, we present two generic constructions. Each of them builds a new \ratm out of an already existing one and a lattice \Ds. The constructions are new: they do not appear in~\cite{pzcovering2} (however, one of them generalizes and streamlines a technical construction used in~\cite{pseps3j}).

\begin{remark}
  As announced, we shall later rely on these constructions in the proof of Theorem~\ref{thm:bpol:carac} (we use them in the special cases when \Ds is either \pol{\Cs} or \bpol{\Cs}). However, this section is independent from Theorem~\ref{thm:bpol:carac}: all definitions are presented in a general context.
\end{remark}

We first present the constructions and then investigate the properties of the output \ratms they produce.

\subsection{Definition}

We present two constructions. The first one involves two objects: a lattice \Ds and a \ratm $\rho: 2 ^{A^*} \to R$. We build a new \ratm \bratauxd whose rating set is $(2^R,\cup)$, and which associates to a language its optimal \Ds-optimal $\rho$-imprint. \[
  \begin{array}{llll}
    \bratauxd: & (2^{A^*},\cup) & \to     & (2^R,\cup)         \\
               & K              & \mapsto & \opti{\Ds}{K,\rho}
  \end{array}
\]

The fact that \bratauxd is indeed a \ratm is shown below in Proposition~\ref{prop:areratms}.
The second construction involves an additional object: a map $\alpha: A^* \to M$ where $M$ is some arbitrary finite set (in practice, $\alpha$ will be a monoid morphism, but this is not required for the definition). We build a new \ratm \lratauxd with rating set $(2^{M \times R},\cup)$:\label{rataux}
\[
  \begin{array}{llll}
    \lratauxd: & (2^{A^*},\cup) & \to     & (2^{M \times R},\cup)                                      \\
               & K              & \mapsto & \{(s,r) \mid r \in \opti{\Ds}{\alpha\inv(s) \cap K,\rho}\}.
  \end{array}
\]
Let us prove that these two maps are indeed \ratms. We state this result in the following proposition.

\begin{proposition} \label{prop:areratms}
  Consider a lattice \Ds, a map $\alpha: A^* \to M$ into a finite set $M$ and a \ratm $\rho: 2 ^{A^*} \to R$. Then, \bratauxd and \lratauxd are \ratms.
\end{proposition}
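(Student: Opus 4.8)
The plan is to check, for each of the two maps, the two conditions in the definition of a \ratm: that the empty language is sent to the neutral element of the rating set, and that the map is additive for union. Both candidate rating sets, $(2^R,\cup)$ and $(2^{M\times R},\cup)$, are powersets of finite sets equipped with union, hence finite idempotent commutative monoids with neutral element $\emptyset$; so nothing further is needed on that side, and the whole argument reduces to these two verifications.

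First I would record the elementary fact that $\opti{\Ds}{\emptyset,\rho} = \emptyset$: the empty set of languages is (vacuously) a \Ds-cover of $\emptyset$, and its $\rho$-\imprint is $\prin{\rho}{\emptyset} = \dclosr\emptyset = \emptyset$, which is the least subset of $R$, so this cover is optimal. Consequently $\bratauxd(\emptyset) = \opti{\Ds}{\emptyset,\rho} = \emptyset$, and since $\alpha\inv(s)\cap\emptyset = \emptyset$ for every $s \in M$, also $\lratauxd(\emptyset) = \{(s,r) \mid r \in \opti{\Ds}{\emptyset,\rho}\} = \emptyset$. This handles the first \ratm axiom for both maps.

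For additivity, fix two languages $K_1,K_2 \subseteq A^*$. For \bratauxd, the identity $\bratauxd(K_1\cup K_2) = \bratauxd(K_1)\cup\bratauxd(K_2)$ is precisely Fact~\ref{fct:lattice:optunion} (applied with $H = K_1$, $L = K_2$, using that \Ds is a lattice). For \lratauxd, I would fix $s \in M$ and use that intersection distributes over union to write $\alpha\inv(s)\cap(K_1\cup K_2) = (\alpha\inv(s)\cap K_1)\cup(\alpha\inv(s)\cap K_2)$; then Fact~\ref{fct:lattice:optunion} yields
\[
  \opti{\Ds}{\alpha\inv(s)\cap(K_1\cup K_2),\rho} = \opti{\Ds}{\alpha\inv(s)\cap K_1,\rho}\cup\opti{\Ds}{\alpha\inv(s)\cap K_2,\rho}.
\]
Unwinding the definition of \lratauxd (taking, over all $s \in M$, the pairs $(s,r)$ with $r$ in the left-hand, resp.\ right-hand, set) gives $\lratauxd(K_1\cup K_2) = \lratauxd(K_1)\cup\lratauxd(K_2)$, which is the second \ratm axiom.

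Putting the two verifications together shows that \bratauxd and \lratauxd are \ratms. I do not expect any genuine obstacle: this is a routine unwinding of the definitions, the only point worth stating explicitly being the base case $\opti{\Ds}{\emptyset,\rho} = \emptyset$, which hinges on counting the empty set among the (trivial) covers of the empty language.
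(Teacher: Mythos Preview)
Your proof is correct and follows essentially the same approach as the paper: verify $\opti{\Ds}{\emptyset,\rho}=\emptyset$ for the first axiom, and invoke Fact~\ref{fct:lattice:optunion} (together with distributivity of intersection over union for \lratauxd) for the second. The only difference is that you spell out why $\opti{\Ds}{\emptyset,\rho}=\emptyset$ via the empty cover, whereas the paper treats this as immediate.
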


\begin{proof}
  We start with $\bratauxd$. It is immediate that $\bratauxd(\emptyset) =  \opti{\Ds}{\emptyset,\rho} = \emptyset$. Moreover, we obtain from Fact~\ref{fct:lattice:optunion} that for every $H,L \subseteq A^*$,
  \[
    \bratauxd(H \cup L) = \opti{\Ds}{H \cup L,\rho} = \opti{\Ds}{H,\rho} \cup \opti{\Ds}{L,\rho} = \bratauxd(H) \cup \bratauxd(L).
  \]
  We conclude that \bratauxd is indeed a \ratm. We turn to \lratauxd. Clearly,
  \[
    \lratauxd(\emptyset) = \{(s,r) \mid r \in \opti{\Ds}{\emptyset,\rho}\} =  \{(s,r) \mid r \in \emptyset\} = \emptyset.
  \]
  Moreover, given $H,L \subseteq A^*$,
  \[
    \lratauxd(H \cup L) = \{(s,r) \mid r \in \opti{\Ds}{\alpha\inv(s) \cap (H \cup L),\rho}\}.
  \]
  By Fact~\ref{fct:lattice:optunion}, this yields,
  \[
    \lratauxd(H \cup L) = \{(s,r) \mid r \in \opti{\Ds}{\alpha\inv(s) \cap H,\rho} \cup \opti{\Ds}{\alpha\inv(s) \cap L,\rho}\}
  \]
  This exactly says that $\lratauxd(H \cup L) = \lratauxd(H) \cup \lratauxd(L)$, finishing the proof that \lratauxd is a \ratm.
\end{proof}

A crucial observation is that the \ratms \bratauxd and \lratauxd are not \nice in general, even when the original \ratm $\rho$ is. Let us present a counter-example.

\begin{example}\label{ex:notnice}
  Let \Ds be the Boolean algebra consisting of all languages which are either finite or co-finite (\emph{i.e.}, their complement is finite). Moreover, let $T = \{0,1\}$ and $R = 2^T$. We define a \nice \ratm $\rho: 2^{A^*} \to R$ as follows (actually, it is simple to verify from the definition that $\rho$ is also \tame). Since we are defining a \nice \ratm, it suffices to specify the evaluation of words: for any $w \in A^*$, we let $\rho(w) = \{0\}$ if $w$ has even length and $\rho(w) = \{1\}$ if $w$ has odd length. We show that the \ratm $\bratauxd: 2^{A^*} \to 2^R$ is not \nice.

  By definition, $\bratauxd(A^*) = \opti{\Ds}{A^*,\rho}$. Recall that \Ds contains only finite and co-finite languages. Moreover, covers may only contain finitely many languages. Hence, it is immediate that if \Kb is an optimal \Ds-cover of $A^*$ for $\rho$, then there exists $K \in \Kb$ containing a word of even length and a word of odd length. Therefore $\rho(K) = \{0,1\}$ by definition of $\rho$, whence $\bratauxd(A^*) = \prin{\rho}{\Kb} = \{\{0,1\},\{0\},\{1\},\emptyset\} = R$.

  Now observe that for any $w \in A^*$, $\{w\} \in \Ds$ by definition (it is a finite language). Hence, $\{\{w\}\}$ is a \Ds-cover of $\{w\}$ and since $\bratauxd(w) = \opti{\Ds}{\{w\},\rho}$, we know that $\bratauxd(w) = \{\{0\},\emptyset\}$ if $w$ has even length and $\bratauxd(w) = \{\{1\},\emptyset\}$ if $w$ has odd length. Altogether, we obtain that,
  \[
    \bigcup_{w \in A^*} \bratauxd(w) = \{\{0\},\{1\},\emptyset\} \neq \bratauxd(A^*)
  \]
  We conclude that \bratauxd is not \nice.
\end{example}

Another important question is whether \bratauxd and \lratauxd are \tame. The remainder of the section is devoted to discussing this point.

\subsection{Multiplication}

If $\alpha: A^* \to M$ is a morphism into a finite monoid and $\rho: 2^{A^*} \to R$ is a \mratm, $(M,\cdot)$ is a monoid and $(R,+,\cdot)$ is an idempotent semiring. We may lift the multiplication of $R$ to $2^R$ in the natural way: given $U,V \in 2^R$, we let $UV = \{qr \mid q \in U \text{ and } r \in V\}$. One may verify that $(2^R,\cup,\cdot)$ is an idempotent semiring. Similarly, we may lift the componentwise multiplication on $M \times R$ to $2^{M \times R}$ and $(2^{M \times R},\cup,\cdot)$ is an idempotent semiring. Whenever we consider semiring structures for $2^R$ and $2^{M \times R}$, these are the additions and multiplications that we shall use.

Unfortunately, even though $2^R$ and $2^{M\times R}$ are semirings, \textbf{neither} $\bratauxd: 2^{A^*} \to 2^R$ nor $\lratauxd: 2^{A^*} \to 2^{M \times R}$ are \tame: they are not monoid morphisms for multiplication. However, it turns out that they behave almost as \mratms when the class \Ds satisfies appropriate properties related to closure under concatenation. We formalize this with a new notion: quasi-\mratms.

\medskip
\noindent
{\bf Quasi-\mratms.} Let $\rho: (2^{A^*},\cup) \to (R,+)$ be a \ratm whose rating set is equipped with a multiplication ``$\cdot$'' such that $(R,+,\cdot)$ is a semiring (however, $\rho$ is not required to be \tame for this multiplication). Finally, let \quasir be an endomorphism of the additive monoid $(R,+)$ (\emph{i.e.}, $\quasir(0_R)=0_R$ and for all $r,s\in R$, $\quasir(r+s)=\quasir(r)+\quasir(s)$). We say that $\rho$ is \emph{quasi-\tame for \quasir} when the following axioms are satisfied:
\begin{enumerate}
	\item\label{itm:nest:qt1} For every $q,r,s \in R$, $\quasir(q\quasir(r)s) = \quasir(qrs)$.
	\item\label{itm:nest:qt3} For every $K_1,K_2 \subseteq A^*$, we have $\rho(K_1K_2) = \quasir(\rho(K_1) \cdot \rho(K_2))$.
\end{enumerate}
For the sake of improved readability, we often abuse terminology and simply say that ``the \ratm $\rho$ is quasi-\tame'', assuming implicitly that the endomorphism \quasir is defined and fixed. Note however that this is slightly ambiguous as there might be several endomorphisms of $(R,+)$ satisfying the above axioms for the same \ratm $\rho$.

\begin{remark}
	Since $\quasir$ is an endomorphism of $(R,+)$, it preserves the canonical order on $R$. Given $q,r \in R$ such that $q \leq r$, we have $\quasir(q) \leq \quasir(r)$. Indeed, by definition, $q \leq r$ means that $q+ r=r$. Therefore, since $\quasir$ is a morphism, $\quasir(q) + \quasir(r) = \quasir(r)$ which means that $\quasir(q) \leq \quasir(r)$. We use this property implicitly in proofs.
\end{remark}

\begin{remark} \label{rem:tameisquasi}
	Clearly, a true \mratm is always quasi-\tame. Indeed, in this case, it suffices to choose $\quasir$ as the identity; $\quasir(r) = r$ for all $r \in R$.
\end{remark}

We have the following useful fact about quasi-\mratms.

\begin{fct} \label{fct:nest:quasip}
	Let $\rho: 2^{A^*} \to R$ be a quasi-\mratm. For every language $H \subseteq A^*$, we have $\rho(H) = \quasir(\rho(H))$.
\end{fct}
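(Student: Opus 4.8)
The plan is to reduce the identity $\rho(H) = \quasir(\rho(H))$ to two elementary consequences of the quasi-\tame axioms, exploiting the fact that $\{\varepsilon\}$ is the multiplicative unit of the semiring $2^{A^*}$ (recall Example~\ref{ex:bgen:semiring}).

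First I would record that $\quasir$ is idempotent as a map on $R$, that is, $\quasir \circ \quasir = \quasir$. This follows by instantiating axiom~\ref{itm:nest:qt1} with $q = s = 1_R$: since $1_R$ is the multiplicative identity of the semiring $R$, the left-hand side $\quasir(1_R\,\quasir(r)\,1_R)$ simplifies to $\quasir(\quasir(r))$ and the right-hand side $\quasir(1_R\,r\,1_R)$ simplifies to $\quasir(r)$, yielding $\quasir(\quasir(r)) = \quasir(r)$ for every $r \in R$.

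Next I would apply axiom~\ref{itm:nest:qt3} with $K_1 = H$ and $K_2 = \{\varepsilon\}$. Since $H\{\varepsilon\} = H$, this gives $\rho(H) = \quasir(\rho(H)\cdot\rho(\varepsilon))$. One cannot simply discard the factor $\rho(\varepsilon)$ here: as $\rho$ is only assumed to be a quasi-\mratm and not a genuine \mratm, we do not know that $\rho(\varepsilon) = 1_R$. The fix is to apply $\quasir$ to both sides of this equation: the left-hand side becomes $\quasir(\rho(H))$, while the right-hand side becomes $\quasir(\quasir(\rho(H)\cdot\rho(\varepsilon)))$, which collapses to $\quasir(\rho(H)\cdot\rho(\varepsilon)) = \rho(H)$ by the idempotence of $\quasir$ established above. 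Hence $\quasir(\rho(H)) = \rho(H)$, as required.

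There is no real obstacle; the only subtlety worth flagging is the one just mentioned, namely that the argument must be routed through the idempotence of $\quasir$ rather than through a simplification $\rho(\varepsilon) = 1_R$ that fails in general. One could equally well start from $\{\varepsilon\}H = H$ and axiom~\ref{itm:nest:qt3} with $K_1 = \{\varepsilon\}$, $K_2 = H$; the argument is entirely symmetric.
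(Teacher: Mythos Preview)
Your proof is correct and is essentially the same as the paper's: both apply Axiom~\ref{itm:nest:qt3} with $K_2 = \{\varepsilon\}$ to write $\rho(H) = \quasir(\rho(H)\cdot\rho(\varepsilon))$, then apply $\quasir$ to both sides and collapse using the idempotence $\quasir\circ\quasir = \quasir$ obtained from Axiom~\ref{itm:nest:qt1} with $q = s = 1_R$. The only difference is cosmetic: you isolate the idempotence of $\quasir$ as a preliminary observation, whereas the paper invokes Axiom~\ref{itm:nest:qt1} inline at the point where that simplification is needed.
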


\begin{proof}
	Axiom~\ref{itm:nest:qt3} in the definition yields $\rho(H) = \quasir(\rho(H) \cdot \rho(\veps))$. Thus, Axiom~\ref{itm:nest:qt1} yields $\quasir(\rho(H)) = \quasir(\quasir(\rho(H) \cdot \rho(\veps))) = \quasir(\rho(H) \cdot \rho(\veps)) = \rho(H)$. This concludes the proof.
\end{proof}

Additionally, we shall need the following lemma which is a straightforward adaptation of a result proved in~\cite[Lemma~5.8]{pzcovering2} to quasi-\mratms.

\begin{lemma} \label{lem:closmult}
  Let \Ds be a \pvari and $\rho: 2^{A^*} \to R$ be a quasi-\mratm. For every $H,L \subseteq A^*$, $q \in \opti{\Ds}{H,\rho}$ and $r \in \opti{\Ds}{L,\rho}$, we have $\quasir(qr) \in \opti{\Ds}{HL,\rho}$.
\end{lemma}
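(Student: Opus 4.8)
The statement is the multiplicative closure property for \pol{\Cs}-optimal \imprints, generalized from true \mratms (Lemma~5.8 of~\cite{pzcovering2}) to quasi-\mratms. The plan is to mimic the original argument, using Axiom~\eqref{itm:nest:qt3} in place of the usual multiplicativity and Axiom~\eqref{itm:nest:qt1} to absorb the extra applications of \quasir. First I would fix $H,L\subseteq A^*$, take $q\in\opti{\Cs}{H,\rho}$ and $r\in\opti{\Cs}{L,\rho}$, and let \Kb be an optimal \Cs-cover of $HL$ for $\rho$; my goal is to produce $K\in\Kb$ with $\quasir(qr)\le\rho(K)$, which gives $\quasir(qr)\in\prin{\rho}{\Kb}=\opti{\Cs}{HL,\rho}$.

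\textbf{Key steps.} The engine of the proof is to build, from \Kb, two \emph{derived} \Cs-covers: one of $H$ and one of $L$. For $K\in\Kb$ and a word $v\in A^*$, set $K\cdot v^{-1}=\{u\mid uv\in K\}$; since \Cs is \quotienting and closed under finite intersection, the family $\Hb_K=\{\,\bigcap_{v\in L'} K\cdot v^{-1}\mid L'\subseteq L \text{ finite, nonempty}\,\}$ consists of languages in \Cs, and one checks that $\Kb_H=\{\bigcap_{v\in L}K v^{-1}\mid K\in\Kb\}$ — more precisely the finite-intersection version dictated by regularity — is a \Cs-cover of $H$ (any $u\in H$ lies in some $K\in\Kb$ together with all $uv$ for $v\in L$). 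Then $q\in\opti{\Cs}{H,\rho}$ gives $K\in\Kb$ and a finite $L'\subseteq L$ with $q\le\rho\bigl(\bigcap_{v\in L'}Kv^{-1}\bigr)$. Symmetrically, working now \emph{inside} that fixed $K$, the family of left quotients $\{u^{-1}K\mid u\in K\text{ a prefix}\}$ restricted appropriately yields a \Cs-cover of $L$, so $r\in\opti{\Cs}{L,\rho}$ produces a piece $L''\in\Cs$ with $r\le\rho(L'')$ and, crucially, $(\bigcap_{v\in L'}Kv^{-1})\,L''\subseteq K$ by construction of the quotients. Applying $\rho$, using monotonicity of $\rho$, Axiom~\eqref{itm:nest:qt3} in the form $\rho(XY)=\quasir(\rho(X)\cdot\rho(Y))$, and the fact that \quasir preserves $\le$ (noted in the remark after the definition of quasi-\mratms), we get $\quasir(qr)\le\quasir(\rho(\bigcap Kv^{-1})\cdot\rho(L''))=\rho\bigl((\bigcap Kv^{-1})L''\bigr)\le\rho(K)$. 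Since $K\in\Kb$, this is exactly $\quasir(qr)\in\prin{\rho}{\Kb}=\opti{\Cs}{HL,\rho}$.

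\textbf{Main obstacle.} The delicate point is the same one that appears in~\cite[Lemma~5.8]{pzcovering2}: one cannot literally intersect over the infinite set $L$ (or $K$) and stay inside a cover whose languages sum to the optimal \imprint, so the construction must be phrased with \emph{finite} intersections and one must argue that only finitely many distinct quotients $Kv^{-1}$ (resp.\ $u^{-1}K$) arise because all languages involved are regular — or, alternatively, one passes to finite subsets $L'\subseteq L$, $K'\subseteq K$ large enough and invokes that $\rho$ applied to the finite intersection already realizes the bound, using Lemma~\ref{lem:bgen:opt} together with Fact~\ref{fct:inclus2} to compare \imprints on the finite approximations. Once the finiteness bookkeeping is set up exactly as in~\cite{pzcovering2}, the only genuinely new ingredient is swapping the identity endomorphism for \quasir: Axiom~\eqref{itm:nest:qt3} replaces ordinary multiplicativity in the computation of $\rho$ on a product, and Axiom~\eqref{itm:nest:qt1} is what guarantees that no spurious nested \quasir's survive when one chains the two quotient constructions. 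I expect the proof to be a routine but careful adaptation, with the finite-approximation argument being the only part requiring attention.
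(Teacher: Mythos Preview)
Your overall strategy---build derived \Cs-covers of $H$ and $L$ from the given cover $\Kb$ of $HL$ via quotients, then combine using Axiom~\ref{itm:nest:qt3}---is exactly the paper's approach. However, your first derived cover is not a cover, and the gap is not a finiteness issue.

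You claim that $\Kb_H=\{\bigcap_{v\in L}Kv^{-1}\mid K\in\Kb\}$ covers $H$, justified by ``any $u\in H$ lies in some $K\in\Kb$ together with all $uv$ for $v\in L$''. This is false: a cover of $HL$ only guarantees that each individual $uv$ lies in \emph{some} $K\in\Kb$, not that a single $K$ works for all $v$. Concretely, take $H=\{a\}$, $L=\{a,b\}$, and $\Kb=\{\{aa\},\{ab\}\}$; then both sets $\bigcap_{v\in L}Kv^{-1}$ are empty, so $\Kb_H$ does not cover $H$. The same problem recurs in your second step: for a single fixed $K$, the family $\{u^{-1}K\}$ need not cover $L$ either.

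The fix is an \emph{order reversal} that you are missing. The paper first builds, for each $u\in H$, the family $\Qb_u=\{u^{-1}K\mid K\in\Kb\}$, which \emph{is} a \Cs-cover of $L$ (since for $v\in L$ we have $uv\in HL$, hence $uv\in K$ for some $K$). By Myhill--Nerode there are only finitely many distinct $\Qb_u$, so one refines them to a common \Cs-cover $\Qb$ of $L$ and uses $r\in\opti{\Cs}{L,\rho}$ to pick $G\in\Qb$ with $r\le\rho(G)$; crucially $G$ now satisfies: for every $u\in H$ there is $K\in\Kb$ with $G\subseteq u^{-1}K$. Only \emph{then} does $\Gb=\{\bigcap_{v\in G}Kv^{-1}\mid K\in\Kb\}$ become a genuine \Cs-cover of $H$, and picking $G'\in\Gb$ with $q\le\rho(G')$ gives $G'G\subseteq K$ for some $K\in\Kb$, whence $\quasir(qr)\le\quasir(\rho(G')\rho(G))=\rho(G'G)\le\rho(K)$. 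Note also that Axiom~\ref{itm:nest:qt1} is not needed here: only one product is formed, so no nested $\quasir$'s appear.
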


\begin{proof}
	Let $q \in \opti{\Ds}{H,\rho}$ and $r \in \opti{\Ds}{L,\rho}$. By definition, it suffices to prove that for every \Ds-cover \Kb of $HL$, we have $\quasir(qr) \in \prin{\rho}{\Kb}$. Let \Kb be a $\Ds$-cover of $HL$, we have to find $K \in \Kb$ such that $\quasir(qr) \leq \rho(K)$. We use the following claim which is based on the Myhill-Nerode theorem.

  \begin{claim}
    There exists a language $G \in \Ds$ which satisfies the following two properties:
    \begin{enumerate}
    \item For all $u \in H$, there exists $K \in \Kb$ such that $G \subseteq u\inv K$.
    \item $r \leq \rho(G)$
    \end{enumerate}
  \end{claim}

  \begin{proof}[Proof of the claim]
    For every $u \in H$, we let $\Qb_u = \{u\inv K \mid K \in \Kb\}$. Clearly, $\Qb_u$ is a \Ds-cover of $L$ since $\Kb$ is a cover of $HL$ and \Ds is closed under quotients. Moreover, we know by hypothesis on \Ds that all languages in  \Kb are regular. Therefore, it follows from the Myhill-Nerode theorem that they have finitely many quotients. Thus, while there might be infinitely many words $u \in H$, there are only finitely many distinct sets $\Qb_u$. It follows that we may use finitely many intersections to build a \Ds-cover \Qb of $L$ such that for every $Q \in \Qb$ and every $u \in H$, there exists $K \in \Kb$ satisfying $Q \subseteq u\inv K$. This means that all $Q \in \Qb$ satisfy the first item in the claim, we now pick one which satisfies the second one as well.

    Since $r \in \opti{\Ds}{L,\rho}$, and \Qb is a \Ds-cover of $L$, we have $r \in \prin{\rho}{\Qb}$. Thus, we get $G \in \Qb$ such that $r \leq \rho(G)$ by definition. This concludes the proof of the claim.
  \end{proof}

  We may now finish the proof of Lemma~\ref{lem:closmult}. Let $G \in \Ds$ be as defined in the claim and consider the following set:
  \[
    \Gb = \left\{\bigcap_{v\in G}Kv\inv \mid K \in \Kb\right\}.
  \]
  Observe that all languages in \Gb belong to \Ds. Indeed, by hypothesis on \Ds, every $K \in \Kb$ is regular. Thus, it has finitely many right quotients by the Myhill-Nerode theorem and the language $\bigcap_{v\in H}Kv\inv$ is the intersection of finitely many quotients of languages in \Ds. By closure under intersection and quotients, it follows that $\bigcap_{v\in G}Kv\inv \in \Ds$. Moreover, \Gb is a \Ds-cover of $H$. Indeed, given $u \in H$, we have $K \in \Kb$ such that $G \subseteq u\inv K$ by the first assertion in the claim. Hence, for every $v \in G$, we have $u \in Kv\inv$ and we obtain that $u \in \bigcap_{v\in G}Kv\inv$, which is an element of \Gb.

  Therefore, since $q \in \opti{\Ds}{H,\rho}$ by hypothesis, we have $q \in \prin{\rho}{\Gb}$ and we obtain $G' \in \Gb$ such that $q \leq \rho(G')$. Hence, since $r \leq \rho(G)$ by the second item in the claim, we have $qr \leq \rho(G') \cdot \rho(G)$. Since $\rho$ is quasi-\tame over \Ds and $G,G' \in \Ds$, it follows from Axiom~\ref{itm:nest:qt3} in the definition of quasi-\mratms that,
  \[
    \rho(G'G) = \quasir(\rho(G') \cdot \rho(G))
  \]
  Moreover, since \quasir is an endomorphism of $(R,+)$ and $qr \leq \rho(G') \cdot \rho(G)$, we have by Fact~\ref{fct:increas}:
  \[
    \quasir(qr) \leq \quasir(\rho(G') \cdot \rho(G))
  \]
  Altogether, we get $\quasir(qr) \leq \rho(G'G)$. Finally, observe that $G'G \subseteq K$ for some $K \in \Kb$. Indeed, if $w \in G'G$, we have $w = uv$ with $u \in G'$ and $v \in G$. Moreover, $G' = \bigcap_{v\in G}Kv\inv$ for some $K \in \Kb$ by definition of \Gb. Hence, $u \in Kv\inv$ which yields $w = uv \in K$. Altogether, we get that $\quasir(qr) \leq \rho(G'G) \leq \rho(K)$, which concludes the proof.
\end{proof}

We now prove that when \Ds is a \pvari closed under concatenation, the \ratm \lratauxd is quasi-\tame provided that $\alpha$ is a morphism and $\rho$ is already quasi-\tame (actually, this is also true for \bratauxd but we do not need this result). This result is tailored to the situation in which we shall later use \lratauxd: $\Ds = \pol{\Cs}$.

\begin{lemma} \label{lem:copelrat}
  Let \Ds be a \pvari closed under concatenation, $\alpha: A^* \to M$ be a morphism and $\rho: 2 ^{A^*} \to R$ be a quasi-\mratm. Then, \lratauxd is quasi-\tame for the following associated endomorphism $\quasi{\lratauxd}$ of $(2^{M \times R},\cup)$:
  \[
    \quasi{\lratauxd}(T) = \dclosr \{(s,\quasir(r)) \mid (s,r) \in T\} \quad \text{for every $T \in 2^{M \times R}$}.
  \]
\end{lemma}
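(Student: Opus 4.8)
The plan is to verify directly the two axioms defining a quasi-\mratm, namely Axioms~\ref{itm:nest:qt1} and~\ref{itm:nest:qt3}, for the \ratm $\lratauxd$ equipped with the endomorphism $\quasi{\lratauxd}$ given in the statement. First I would check that $\quasi{\lratauxd}$ is indeed an endomorphism of $(2^{M\times R},\cup)$: this is routine, since it maps $\emptyset$ to $\emptyset$ and both the coordinatewise application of $\quasir$ on second components and the downset operation $\dclosr$ commute with unions. Two trivial observations will be used throughout: $\quasi{\lratauxd}(T)$ is always downward closed (it has the shape $\dclosr(\cdots)$, and $\dclosr$ is idempotent), and $\{(s,\quasir(r))\mid (s,r)\in T\}\subseteq\quasi{\lratauxd}(T)$ for every $T\in 2^{M\times R}$.

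For Axiom~\ref{itm:nest:qt1}, I would fix $U,V,W\in 2^{M\times R}$, write $V'=\quasi{\lratauxd}(V)$, and compare generators. A generator of $U\cdot V'\cdot W$ is $(s_1 t s_2,\,q_1 x q_2)$ with $(s_1,q_1)\in U$, $(s_2,q_2)\in W$ and $(t,x)\in V'$, so $x\le\quasir(r)$ for some $(t,r)\in V$; using compatibility of the canonical order with multiplication and the fact that $\quasir$ is order-preserving, together with Axiom~\ref{itm:nest:qt1} for $\quasir$, one gets $\quasir(q_1 x q_2)\le\quasir(q_1\quasir(r)q_2)=\quasir(q_1 r q_2)$, and since $(s_1 t s_2,\quasir(q_1 r q_2))$ belongs to the downward-closed set $\quasi{\lratauxd}(UVW)$, this gives $\quasi{\lratauxd}(U\cdot V'\cdot W)\subseteq\quasi{\lratauxd}(UVW)$. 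For the converse, a generator $(s_1 t s_2,q_1 r q_2)$ of $UVW$ yields $(t,\quasir(r))\in V'$, hence $(s_1 t s_2,q_1\quasir(r)q_2)\in U\cdot V'\cdot W$, and Axiom~\ref{itm:nest:qt1} for $\quasir$ again identifies $\quasir(q_1\quasir(r)q_2)$ with $\quasir(q_1 r q_2)$, placing the latter in $\quasi{\lratauxd}(U\cdot V'\cdot W)$; downward closure of that set finishes it.

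For Axiom~\ref{itm:nest:qt3}, fix $K_1,K_2\subseteq A^*$. The key starting identity is $\alpha\inv(s)\cap K_1 K_2=\bigcup_{s_1 s_2=s}(\alpha\inv(s_1)\cap K_1)(\alpha\inv(s_2)\cap K_2)$, which holds because $\alpha$ is a morphism, the union being finite since $M$ is. The inclusion $\quasi{\lratauxd}(\lratauxd(K_1)\cdot\lratauxd(K_2))\subseteq\lratauxd(K_1 K_2)$ follows from Lemma~\ref{lem:closmult} applied with $\Cs=\Ds$ (giving $\quasir(q_1 q_2)\in\opti{\Ds}{(\alpha\inv(s_1)\cap K_1)(\alpha\inv(s_2)\cap K_2),\rho}$ whenever $q_i\in\opti{\Ds}{\alpha\inv(s_i)\cap K_i,\rho}$), then Fact~\ref{fct:inclus2} and the identity above, then the fact that optimal $\rho$-\imprints are downward closed. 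For the reverse inclusion, take $(s,r)\in\lratauxd(K_1 K_2)$; by iterating Fact~\ref{fct:lattice:optunion} over the finite union above, $r\in\opti{\Ds}{HL,\rho}$ for some $s=s_1 s_2$ with $H=\alpha\inv(s_1)\cap K_1$ and $L=\alpha\inv(s_2)\cap K_2$. Picking optimal $\Ds$-covers $\Kb_H,\Kb_L$ of $H,L$, the family $\{KK'\mid K\in\Kb_H, K'\in\Kb_L\}$ is a $\Ds$-cover of $HL$ (here $\Ds$ closed under concatenation is used), so $r\le\rho(KK')$ for some such $K,K'$, and since $\rho$ is quasi-\tame, $\rho(KK')=\quasir(\rho(K)\cdot\rho(K'))$ with $\rho(K)\in\opti{\Ds}{H,\rho}$ and $\rho(K')\in\opti{\Ds}{L,\rho}$; this puts $(s,r)$ in $\quasi{\lratauxd}(\lratauxd(K_1)\cdot\lratauxd(K_2))$.

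The hard part is the reverse inclusion in Axiom~\ref{itm:nest:qt3}: it amounts to a \emph{converse} of Lemma~\ref{lem:closmult}, extracting from an element of $\opti{\Ds}{HL,\rho}$ a value $\quasir(qq')$ with $q\in\opti{\Ds}{H,\rho}$ and $q'\in\opti{\Ds}{L,\rho}$, which is exactly why one must pass through concrete product covers and invoke $\rho(KK')=\quasir(\rho(K)\rho(K'))$; the remaining ingredients — the morphism-based decomposition of $\alpha\inv(s)\cap K_1 K_2$, additivity of optimal \imprints over finite unions, and the downset bookkeeping — are routine. Finally, the same argument with $M$ taken to be a one-element monoid shows that $\bratauxd$ is quasi-\tame as well, though (as the text notes) this will not be needed.
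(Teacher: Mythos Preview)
Your proof is correct and follows essentially the same route as the paper's: both verify Axiom~\ref{itm:nest:qt1} by unwinding the downset/$\quasir$ definition, and handle Axiom~\ref{itm:nest:qt3} via Lemma~\ref{lem:closmult} for one inclusion and an explicit product of optimal \Ds-covers together with $\rho(KK')=\quasir(\rho(K)\rho(K'))$ for the other. The only cosmetic difference is that for the hard inclusion you first localize to a single product $HL$ using the decomposition $\alpha\inv(s)\cap K_1K_2=\bigcup_{s_1s_2=s}(\alpha\inv(s_1)\cap K_1)(\alpha\inv(s_2)\cap K_2)$ and Fact~\ref{fct:lattice:optunion}, whereas the paper builds a single cover ranging over all factorizations $s=s_1s_2$ at once.
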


\begin{proof}
  We already know from Proposition~\ref{prop:areratms} that \lratauxd is a \ratm. Hence, we have to prove that the axioms of quasi-\mratms hold for the endomorphism $\quasi{\lratauxd}$ of $(2^{M \times R},\cup)$ described in the lemma (it is clear from the definition that this is indeed an endomorphism). For the sake of avoiding clutter, we write $\mu$ for \quasi{\lratauxd}.

  We start with the first axiom. Consider $T,U,V \in  2^{M \times R}$. We have to show that $\mu(T \mu(U) V) = \mu(TUV)$. Assume first that $(s,r) \in \mu(T \mu(U) V)$. By definition of $\mu$, this yields $(s_1,r_1) \in T$, $(s_2,r_2) \in \mu(U)$ and $(s_3,r_3) \in V$ such that $s= s_1s_2s_3$ and $r \leq \quasir(r_1r_2r_3)$. Since $(s_2,r_2) \in \mu(U)$, we have $(s_2,r'_2)  \in U$ such that $r_2 \leq \quasir(r'_2)$. It follows that $r_1r_2r_3 \leq r_1 \quasir(r'_2) r_2$ and since \quasir is an endomorphism of $(R,+)$, we obtain,
  \[
    r \leq \quasir(r_1r_2r_3) \leq \quasir(r_1 \quasir(r'_2) r_2).
  \]
  Since $\rho$ is quasi-\tame, the first axiom in the definition yields that $\quasir(r_1 \quasir(r'_2) r_2) = \quasir(r_1r'_2r_3)$ and we get $r \leq \quasir(r_1r'_2r_3)$. Since $(s_1,r_1) \in T$, $(s_2,r'_2)  \in U$ and $(s_3,r_3) \in V$. This yields $(s,r) = (s_1s_2s_3,r) \in \mu(TUV)$.

  Conversely, assume that $(s,r) \in \mu(TUV)$. By definition of $\mu$, we get $(s_1,r_1) \in T$, $(s_2,r_2) \in U$ and $(s_3,r_3) \in V$ such that $s= s_1s_2s_3$ and $r \leq \quasir(r_1r_2r_3)$. The first axiom in the definition of quasi-\mratms yields that $\quasir(r_1 \quasir(r_2) r_2) = \quasir(r_1r_2r_3)$. Therefore, $r \leq \quasir(r_1\quasir(r_2)r_3)$. Moreover, since $(s_2,r_2) \in U$, it is immediate that $(s_2,\quasir(r_2)) \in \mu(U)$ by definition of $\mu$. Altogether, this implies that $(s,r) \in \mu(T \mu(U) V)$.

  \smallskip

  It remains to establish that  \lratauxd fulfills the Axiom~\ref{itm:nest:qt3}. Let $K_1,K_2 \subseteq A^*$. We show that,
  \begin{equation}\label{eq:lratauxd:axiom3b}
    \lratauxd(K_1K_2)  = \mu(\lratauxd(K_1) \cdot \lratauxd(K_2)).
  \end{equation}
  We start with the right to left inclusion. Consider $(s,r) \in \mu\big(\lratauxd(K_1) \cdot \lratauxd(K_2)\big)$. By definition of $\mu$ we have $(s_1,r_1) \in \lratauxd(K_1)$ and $(s_2,r_2) \in \lratauxd(K_2)$ such that $s = s_1s_2$ and $r \leq \quasir(r_1r_2)$. By definition of $\lratauxd$, this means that $r_1 \in \opti{\Ds}{K_1 \cap \alpha\inv(s_1),\rho}$ and $r_2 \in \opti{\Ds}{K_2 \cap \alpha\inv(s_2),\rho}$. Since \Ds is a \pvari, Lemma~\ref{lem:closmult} yields that:
  \[
    \quasir(r_1r_2) \in \opti{\Ds}{(K_1 \cap \alpha\inv(s_1)) \cdot (K_2 \cap \alpha\inv(s_2)),\rho}.
  \]
  Observe that $(K_1 \cap \alpha\inv(s_1)) \cdot (K_2 \cap \alpha\inv(s_2)) \subseteq K_1K_2 \cap \alpha\inv(s_1)\alpha\inv(s_2)$. Moreover, since $\alpha$ is a morphism, it is clear that $\alpha\inv(s_1)\alpha\inv(s_2) \subseteq \alpha\inv(s_1s_2) =\alpha\inv(s)$. Altogether, this means that we have $(K_1 \cap \alpha\inv(s_1)) \cdot (K_2 \cap \alpha\inv(s_2)) \subseteq K_1K_2 \cap \alpha\inv(s)$.  Thus, Fact~\ref{fct:inclus2} yields $\quasir(r_1r_2) \in \opti{\Ds}{K_1K_2 \cap \alpha\inv(s),\rho}$. Since $r \leq \quasir(r_1r_2)$, this implies $r \in \opti{\Ds}{K_1K_2 \cap \alpha\inv(s),\rho}$, which exactly says that $(s,r) \in \lratauxd(K_1K_2)$, concluding the proof for the left to right inclusion.

  We turn to the converse inclusion. This is where we need \Ds to be closed under concatenation. Let $(s,r) \in \lratauxd(K_1K_2)$. We show that $(s,r) \in \mu\big(\lratauxd(K_1) \cdot \lratauxd(K_2)\big)$. By definition, we have $r \in \opti{\Ds}{K_1K_2 \cap \alpha\inv(s),\rho}$. For every $t \in M$ and $i \in \{1,2\}$, we define $\Hb_{i,t}$ as an optimal \Ds-cover of $K_i \cap \alpha\inv(t)$. Consider the following finite set of languages \Hb,
  \[
    \Hb = \{H_1H_2 \mid \text{there exist $s_1,s_2 \in M$ such that $s_1s_2 = s$, $H_1 \in \Hb_{1,s_1}$ and $H_2 \in \Hb_{2,s_2}$}\}.
  \]
  We prove that \Hb is a \Ds-cover of $K_1K_2 \cap \alpha\inv(s)$. Clearly all languages in \Hb belong to \Ds since \Ds is closed under concatenation by Theorem~\ref{thm:polclos}. Let us show that \Hb is a cover of $K_1K_2 \cap \alpha\inv(s)$. Consider $w \in K_1K_2 \cap \alpha\inv(s)$, we exhibit $H \in \Hb$ such that $w \in H$. Since $w \in K_1K_2$, we have $w = w_1w_2$ with $w_1 \in K_1$ and $w_2 \in K_2$. Let $s_1 = \alpha(w_1)$ and $s_2 = \alpha(w_2)$. Altogether, this means that $w_1 \in K_1 \cap \alpha\inv(s_1)$ and $w_2 \in K_2 \cap \alpha\inv(s_2)$. Therefore, we have $H_1 \in \Hb_{1,s_1}$ and $H_2 \in \Hb_{2,s_2}$ such that $w_1 \in H_1$ and $w_2 \in H_2$. This yields $w \in H_1H_2$. Finally, $s_1s_2 = \alpha(w) = s$ which yields that $H_1H_2 \in \Hb$ by definition.

  We may now finish the argument and show that $(s,r) \in \mu(\lratauxd(K_1) \cdot \lratauxd(K_2))$. Recall that $r \in \opti{\Ds}{K_1K_2 \cap \alpha\inv(s),\rho}$. Thus, since \Hb is a \Ds-cover of $K_1K_2 \cap \alpha\inv(s)$, we have $r \in \prin{\rho}{\Hb}$. It follows that there exists $H \in \Hb$ such that $r \leq \rho(H)$. By definition of \Hb, we have $H = H_1H_2$  with $H_1 \in \Hb_{1,s_1}$ and $H_2 \in \Hb_{2,s_2}$ where $s_1,s_2 \in M$ satisfy $s_1s_2 = s$. Let $r_1 = \rho(H_1)$ and $r_2 = \rho(H_2)$. Since $\Hb_{1,s_1}$ and $\Hb_{2,s_2}$ are optimal \Ds-covers of $K_1 \cap \alpha\inv(s_1)$ and $K_2 \cap \alpha\inv(s_2)$ respectively, we have $r_1 \in \opti{\Ds}{K_1 \cap \alpha\inv(s_1),\rho}$ and $r_2 \in \opti{\Ds}{K_2 \cap \alpha\inv(s_2),\rho}$. It follows that $(s_1,r_1) \in \lratauxd(K_1)$ and $(s_2,r_2) \in \lratauxd(K_2)$. Consequently, $(s,r_1r_2) = (s_1s_2,r_1r_2) \in \lratauxd(K_1) \cdot \lratauxd(K_2)$. Finally, by hypothesis and since $\rho$ is quasi-\tame, we have,
  \[
    r \leq \rho(H) = \rho(H_1H_2) =  \quasir(\rho(H_1) \cdot \rho(H_2)) = \quasir(r_1r_2).
  \]
  By definition of $\mu$, this yields$(s,r) \in \mu(\lratauxd(K_1) \cdot \lratauxd(K_2))$, concluding the proof.
\end{proof}

We present a second result for \ratms of the form $\bratauxbd: 2^{A^*} \to 2^R$. As expected, we shall consider the case $\Ds = \pol{\Cs}$. In that case, we are not able to prove that \bratauxd is quasi-\tame (the issue being that $\bool{\Ds}$ is \emph{not} closed under concatenation in general). We deal with this problem using two separate results. First, we show that when \Ds is a \pvari of regular languages \emph{closed under concatenation}, while \bratauxbd might not be quasi-\tame itself, it coincides with a quasi-\mratm over languages in \Ds. Note that this result is where we use the weak concatenation principle that we presented for Boolean closure in Lemma~\ref{lem:hintro:boolconcat}.

\begin{lemma} \label{lem:nest:bpolrat}
	Let \Ds be a \pvari closed under concatenation and $\rho: 2 ^{A^*} \to R$ a \mratm. There exists a \ratm $\tau: 2^{A^*} \to 2^R$ which satisfies the two following conditions:
	\begin{enumerate}
		\item $\tau$ is quasi-\tame for the endomorphism $\quasit: U\mapsto\dclosr U$ of $(2^R,\cup)$, and, 
		\item for every $K \in \Ds$, we have $\tau(K) = \bratauxbd(K)$.
	\end{enumerate}
\end{lemma}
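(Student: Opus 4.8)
The plan is to \emph{define} $\tau$ explicitly rather than to try to repair $\bratauxbd$ itself. I would set
\[
  \tau(K) = \bigcap\{\bratauxbd(D) \mid D \in \Ds \text{ and } K \subseteq D\} \qquad \text{for every } K \subseteq A^*,
\]
which is well defined because $A^* \in \Ds$. Condition~(2) is then immediate: when $K \in \Ds$ the language $K$ itself appears in the family, so $\tau(K) \subseteq \bratauxbd(K)$, while $\bratauxbd$ is monotone by Fact~\ref{fct:inclus2}, so $\bratauxbd(K) \subseteq \bratauxbd(D)$ whenever $K \subseteq D$, giving the reverse inclusion. Verifying that $\tau$ is a \ratm should be routine: $\bratauxbd(\emptyset) = \emptyset$ with $\emptyset \in \Ds$ forces $\tau(\emptyset) = \emptyset$; the inclusion $\tau(H) \cup \tau(L) \subseteq \tau(H \cup L)$ holds since every $D$ containing $H \cup L$ contains $H$ and contains $L$; and for the converse, given $D_H \supseteq H$ and $D_L \supseteq L$ in $\Ds$ one has $D_H \cup D_L \in \Ds$, and $\bratauxbd(D_H \cup D_L) = \bratauxbd(D_H) \cup \bratauxbd(D_L)$ by Fact~\ref{fct:lattice:optunion} applied to the lattice $\bool{\Ds}$. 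So the whole content lies in showing that $\tau$ is quasi-\tame for $\quasit: U \mapsto \dclosr U$, i.e.\ in checking axioms~\ref{itm:nest:qt1} and~\ref{itm:nest:qt3}.

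Axiom~\ref{itm:nest:qt1} — that $\dclosr(U\cdot\dclosr(V)\cdot W) = \dclosr(UVW)$ — I expect to be a one-line consequence of the compatibility of the canonical order on $R$ with multiplication, valid for the downset endomorphism of any rating set carrying a compatible multiplication. The work is axiom~\ref{itm:nest:qt3}: $\tau(K_1 K_2) = \dclosr(\tau(K_1)\cdot\tau(K_2))$. For the inclusion ``$\subseteq$'' I would first prove that $\bratauxbd$ is quasi-multiplicative \emph{on $\Ds$}, namely $\bratauxbd(D_1 D_2) \subseteq \dclosr(\bratauxbd(D_1)\cdot\bratauxbd(D_2))$ for $D_1,D_2 \in \Ds$; this is exactly where the weak concatenation principle of Lemma~\ref{lem:hintro:boolconcat} enters, applied to optimal $\bool{\Ds}$-covers of $D_1$ and $D_2$, the resulting cover of $D_1 D_2$ being evaluated with the \mratm $\rho$. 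Then, using that $R$ is finite, the family $\{\bratauxbd(D) \mid K_1 \subseteq D \in \Ds\}$ has only finitely many members; intersecting finitely many $D$'s that realise $\tau(K_1)$ and using closure of $\Ds$ under finite intersection together with monotonicity of $\bratauxbd$ produces a single $D_1^* \in \Ds$ with $K_1 \subseteq D_1^*$ and $\bratauxbd(D_1^*) = \tau(K_1)$, and likewise $D_2^*$ for $K_2$. Since $\Ds$ is closed under concatenation, $D_1^* D_2^* \in \Ds$ and contains $K_1 K_2$, whence $\tau(K_1 K_2) \subseteq \bratauxbd(D_1^* D_2^*) \subseteq \dclosr(\bratauxbd(D_1^*)\cdot\bratauxbd(D_2^*)) = \dclosr(\tau(K_1)\cdot\tau(K_2))$.

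The inclusion ``$\supseteq$'' of axiom~\ref{itm:nest:qt3} is the step I expect to be hardest, and it mirrors the proof of Lemma~\ref{lem:closmult}. Fixing $s \in \tau(K_1)$ and $t \in \tau(K_2)$, and noting that $\tau(K_1K_2)$ is a downset (being an intersection of imprints) and that the cases $K_1 = \emptyset$ or $K_2 = \emptyset$ are trivial, it suffices to show $st \in \bratauxbd(D)$ for each $D \in \Ds$ with $K_1 K_2 \subseteq D$. I would use the Myhill--Nerode theorem to ``uniformise'' over $K_1$: since $D$ has finitely many left quotients, $D_2^{(0)} := \bigcap_{w_1 \in K_1} w_1^{-1}D$ is a finite intersection of quotients of $D$ and hence lies in $\Ds$, with $K_2 \subseteq D_2^{(0)}$ and $w_1 D_2^{(0)} \subseteq D$ for all $w_1 \in K_1$; applying the same argument on the right gives $D_1^{(0)} := \bigcap_{w_2 \in D_2^{(0)}} D w_2^{-1} \in \Ds$ with $K_1 \subseteq D_1^{(0)}$ and $D_1^{(0)} D_2^{(0)} \subseteq D$. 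Then $s \in \tau(K_1) \subseteq \bratauxbd(D_1^{(0)})$ and $t \in \tau(K_2) \subseteq \bratauxbd(D_2^{(0)})$, so Lemma~\ref{lem:closmult} applied to the \pvari $\bool{\Ds}$ (a \vari by Lemma~\ref{lem:boolclos}) and to $\rho$, which is quasi-\tame with $\quasir(r) = r$ by Remark~\ref{rem:tameisquasi}, yields $st \in \bratauxbd(D_1^{(0)} D_2^{(0)})$; monotonicity (Fact~\ref{fct:inclus2}) then gives $st \in \bratauxbd(D)$. As $D$ is arbitrary this gives $st \in \tau(K_1 K_2)$, and since $\tau(K_1K_2)$ is a downset this upgrades to $\dclosr(\tau(K_1)\cdot\tau(K_2)) \subseteq \tau(K_1 K_2)$, completing axiom~\ref{itm:nest:qt3} and hence the lemma.
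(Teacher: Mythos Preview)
Your proposal is correct and follows essentially the same route as the paper: the same definition $\tau(K) = \bigcap_{K \subseteq D \in \Ds} \bratauxbd(D)$, the same ``single realiser'' fact (your $D_i^*$ is the paper's Fact~\ref{fct:nest:bpolrat}), the weak concatenation principle (Lemma~\ref{lem:hintro:boolconcat}) for the $\subseteq$ direction of axiom~\ref{itm:nest:qt3}, and the Myhill--Nerode quotient construction together with Lemma~\ref{lem:closmult} for the $\supseteq$ direction. The only cosmetic differences are that the paper takes quotients in the opposite order (right then left) and that it proves the single-realiser fact up front, using it already for the union axiom of \ratms---your verification of $\tau(H\cup L)\subseteq\tau(H)\cup\tau(L)$ as written needs that fact too (an intersection of unions is not a union of intersections), so you should invoke your $D_H^*,D_L^*$ there rather than arbitrary $D_H,D_L$.
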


\begin{proof}
	We first define $\tau: 2^{A^*} \to 2^R$. For every $K \subseteq A^*$, we let,
	\[
	\tau(K) = \bigcap_{\{L \in \Ds \mid K \subseteq L\}} \bratauxbd(L)
	\]
	It is immediate by definition that when $K \in \Ds$, we have $\tau(K) = \bratauxbd(K)$. We have to verify that $\tau$ is a quasi-\mratm. We need the following fact.

	\begin{fct} \label{fct:nest:bpolrat}
		For all $K \subseteq A^*$, there exists $L \in \Ds$ such that $K \subseteq L$ and $\tau(K) = \bratauxbd(L)$.
	\end{fct}

	\begin{proof}
		Since $2^R$ is finite, there are finitely many languages $L_1,\dots,L_n \in \Ds$ such that \mbox{$K \subseteq L_i$} for every $i \leq n$ and $\tau(K)=\bratauxbd(L_1) \cap \cdots \cap \bratauxbd(L_n)$. Since \Ds is a lattice by hypothesis, we have $L = L_1 \cap \cdots \cap L_n \in \Ds$. Clearly, $K \subseteq L$ which implies $\tau(K) \subseteq \bratauxbd(L)$ by definition of $\tau$. Finally, since $L \subseteq L_i$ for every $i$, we obtain that $\bratauxbd(L) \subseteq \bratauxbd(L_1) \cap \cdots \cap \bratauxbd(L_n) = \tau(K)$. Altogether, we conclude that $\tau(K) = \bratauxbd(L)$.
	\end{proof}

	Let us first prove that $\tau$ is a \ratm. Clearly, $\tau(\emptyset) = \bratauxbd(\emptyset) = \emptyset$. Let $K_1,K_2 \subseteq A^*$, we prove that $\tau(K_1\cup K_2)=\tau(K_1)\cup\tau(K_2)$. Clearly, every $H \in \Ds$ containing $K_1 \cup K_2$ contains $K_1$ and $K_2$ as well. Thus, $\tau(K_1) \cup \tau(K_2) \subseteq \tau(K_1 \cup K_2)$ by definition of $\tau$. For the converse inclusion, Fact~\ref{fct:nest:bpolrat} yields $L_1,L_2 \in \Ds$ such that $K_i \subseteq L_i$ and $\tau(K_i) = \bratauxbd(L_i)$ for $i=1,2$. Since \bratauxbd is a \ratm by Proposition~\ref{prop:areratms}, we get $\tau(K_1) \cup \tau(K_2) = \bratauxbd(L_1 \cup L_2)$. Since $K_1 \cup K_2 \subseteq L_1 \cup L_2$, the definition of $\tau$ then yields $\tau(K_1 \cup K_2) \subseteq \tau(K_1) \cup \tau(K_2)$.

	It remains to prove that $\tau$ is quasi-\tame for the endomorphism $\quasit: U \mapsto \dclosr U$ of $(2^R,\cup)$. Axiom~\ref{itm:nest:qt1} is immediate: for every $T,U,V \in  2^{R}$, we have $\dclosr (T (\dclosr U) V) = \dclosr (TUV)$.

	We turn to Axiom~\ref{itm:nest:qt3}. For $K_1,K_2 \subseteq A^*$, we prove that $\tau(K_1K_2) = \dclosr (\tau(K_1) \cdot \tau(K_2))$. We start with the right to left inclusion. Let $r \in \dclosr (\tau(K_1) \cdot \tau(K_2))$. We show that $r \in \tau(K_1K_2)$. By definition, this boils down to proving that $r \in \bratauxbd(H)$ for every $H \in \Ds$ such that $K_1K_2 \subseteq H$. We fix $H$ for the proof. Consider the two following languages,
	\[
	U_1 = \bigcap_{v \in K_2} Hv\inv \quad \text{and} \quad U_2 = \bigcap_{u \in U_1} u\inv H.
	\]
	Since \Ds is a \pvari of regular languages and $H \in \Ds$, we have $U_1,U_2 \in \Ds$ (recall that a regular language has finitely many quotients by the Myhill-Nerode theorem). Since $K_1K_2 \subseteq H$, one may verify that $K_i \subseteq U_i$ for $i = 1,2$. This yields $\tau(K_i) \subseteq \bratauxbd(U_i)$ by definition of $\tau$. Therefore, since we have $r \in \dclosr (\tau(K_1) \cdot \tau(K_2))$, we obtain $r_i \in \bratauxbd(U_i)$ (\emph{i.e.} $r_i \in \opti{\bool{\Ds}}{U_i,\rho}$) for $i = 1,2$ such that $r \leq r_1r_2$. By Lemma~\ref{lem:boolclos}, \bool{\Ds} is a \vari. Thus, since $\rho$ is a \mratm,  Lemma~\ref{lem:closmult} yields $r_1r_2 \in \opti{\bool{\Ds}}{U_1U_2,\rho}$. Since $r \leq r_1r_2$, we obtain $r \in \opti{\bool{\Ds}}{U_1U_2,\rho}$ by definition of \imprints. Finally, we have $U_1U_2 \subseteq H$ by definition of $U_2$. Hence, Fact~\ref{fct:inclus2} yields $r \in \opti{\bool{\Ds}}{H,\rho} = \bratauxbd(H)$.

	We turn to the converse inclusion. Let $r \in \tau(K_1K_2)$. We show that $r \in \dclosr (\tau(K_1) \cdot \tau(K_2))$. For $i = 1,2$, Fact~\ref{fct:nest:bpolrat} yields $L_i \in \Ds$ such that $K_i \subseteq L_i$ and $\tau(K_i) = \bratauxbd(L_i)$. We let $\Hb_i$ as an optimal \bool{\Ds}-cover (for $\rho$) of $L_i$. By definition, $\prin{\rho}{\Hb_i} = \bratauxbd(L_i) = \tau(K_i)$. Since $L_1,L_2 \in \Ds$, we may apply Lemma~\ref{lem:hintro:boolconcat} to build a \bool{\Ds}-cover \Hb of $L_1L_2$ such that for every $H \in \Hb$, there exist $H_1 \in \Hb_1$ and $H_2 \in \Hb_2$ such that $H \subseteq H_1H_2$. By hypothesis $r \in \tau(K_1K_2)$. Moreover, we have $K_1K_2 \subseteq L_1L_2$ and $L_1L_2 \in \Ds$ since \Ds is closed under concatenation. Thus, $r \in \bratauxbd(L_1L_2)$ by definition of $\tau$. It follows that $r \in \opti{\bool{\Ds}}{L_1L_2,\rho}$ and since \Hb is a \bool{\Ds}-cover of $L_1L_2$, we get $H \in \Hb$ such that $r \leq \rho(H)$. By definition of \Hb, there exist $H_1 \in \Hb_1$ and $H_2 \in \Hb_2$ such that $H \subseteq H_1H_2$. This implies that $r \leq \rho(H_1) \cdot \rho(H_2)$. Finally, $\rho(H_i) \in \tau(K_i)$ for $i=1,2$ by definition of $\Hb_i$. Thus, we get $r \in \dclosr (\tau(K_1) \cdot \tau(K_2))$ which concludes the proof.
\end{proof}

We complete Lemma~\ref{lem:nest:bpolrat} with another statement which requires that \Ds is closed under \emph{marked} concatenation (this will be the case in practice since we apply these results for $\Ds = \pol{\Cs}$). In this case as well, we use our weak concatenation principle for classes of the form \bool{\Ds}. Specifically, we apply the variant for marked concatenation (\emph{i.e.} Lemma~\ref{lem:hintro:boolmconcat}).

\begin{lemma} \label{lem:nest:bpolm}
	Let \Ds be a \pvari closed under marked concatenation and $\rho: 2 ^{A^*} \to R$ a \mratm. For every $K_1,K_2 \in\Ds$ and $a\in A$, we have,
	\[
	\bratauxbd(K_1aK_2) = \dclosr (\bratauxbd(K_1) \cdot \{\rho(a)\} \cdot \bratauxbd(K_2))
	\]
\end{lemma}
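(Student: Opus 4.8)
The plan is to establish the two inclusions separately, following the same pattern as the proof of Axiom~\ref{itm:nest:qt3} inside Lemma~\ref{lem:nest:bpolrat}, but in a simpler form since here $K_1,K_2$ already belong to \Ds, so none of the quotient gymnastics of that proof are needed. Throughout I would use that $\bratauxbd(K) = \opti{\bool{\Ds}}{K,\rho}$, that \bool{\Ds} is a \vari by Lemma~\ref{lem:boolclos} (hence in particular a \pvari), and that a genuine \mratm is quasi-\tame with associated endomorphism $\mathrm{id}$ by Remark~\ref{rem:tameisquasi}. Consequently Lemma~\ref{lem:closmult} applies with $\Cs = \bool{\Ds}$ and gives: for all languages $H,L$, if $q \in \opti{\bool{\Ds}}{H,\rho}$ and $r \in \opti{\bool{\Ds}}{L,\rho}$ then $qr \in \opti{\bool{\Ds}}{HL,\rho}$.

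For the inclusion $\supseteq$, I would take $r \leq r_1\rho(a)r_2$ with $r_1 \in \opti{\bool{\Ds}}{K_1,\rho}$ and $r_2 \in \opti{\bool{\Ds}}{K_2,\rho}$. First observe that $\rho(a) \in \opti{\bool{\Ds}}{\{a\},\rho}$: every \bool{\Ds}-cover of $\{a\}$ has a member containing $a$, and $\rho$ is monotone, so $\rho(a)$ lies in its \imprint. Then two successive applications of Lemma~\ref{lem:closmult} yield $r_1\rho(a) \in \opti{\bool{\Ds}}{K_1\{a\},\rho}$ and then $r_1\rho(a)r_2 \in \opti{\bool{\Ds}}{K_1aK_2,\rho}$; since \imprints are downsets, $r \in \opti{\bool{\Ds}}{K_1aK_2,\rho} = \bratauxbd(K_1aK_2)$. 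This direction uses neither the hypothesis $K_1,K_2\in\Ds$ nor closure under marked concatenation.

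For the inclusion $\subseteq$ — which is where closure under marked concatenation is needed — I would fix optimal \bool{\Ds}-covers $\Hb_1$ of $K_1$ and $\Hb_2$ of $K_2$. Since $K_1,K_2 \in \Ds$ and \Ds is closed under marked concatenation, Lemma~\ref{lem:hintro:boolmconcat} provides a \bool{\Ds}-cover \Hb of $K_1aK_2$ such that every $H \in \Hb$ satisfies $H \subseteq H_1aH_2$ for some $H_1\in\Hb_1$, $H_2\in\Hb_2$. Given $r \in \bratauxbd(K_1aK_2) = \opti{\bool{\Ds}}{K_1aK_2,\rho}$, evaluating against \Hb yields $H \in \Hb$ with $r \leq \rho(H) \leq \rho(H_1aH_2) = \rho(H_1)\rho(a)\rho(H_2)$, the last equality by two uses of the multiplicativity axiom of \mratms (and $\rho(\{a\}) = \rho(a)$). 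Since $\Hb_i$ is optimal, $\rho(H_i) \in \prin{\rho}{\Hb_i} = \bratauxbd(K_i)$, so $r \leq \rho(H_1)\rho(a)\rho(H_2) \in \bratauxbd(K_1)\cdot\{\rho(a)\}\cdot\bratauxbd(K_2)$, hence $r \in \dclosr(\bratauxbd(K_1)\cdot\{\rho(a)\}\cdot\bratauxbd(K_2))$. Combining the two inclusions proves the statement.

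I do not expect a genuine obstacle here: both directions are short applications of results already established (Lemma~\ref{lem:closmult} together with the downset property of \imprints, and the marked-concatenation weak principle of Lemma~\ref{lem:hintro:boolmconcat}). The only things needing a little care are the bookkeeping that keeps the central factor $\rho(a)$ in place through the two successive multiplications, and recording explicitly that the $\supseteq$ inclusion is the "free" one while the $\subseteq$ inclusion is precisely where the hypotheses $K_1,K_2\in\Ds$ and closure of \Ds under marked concatenation are used.
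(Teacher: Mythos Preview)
Your proposal is correct and follows essentially the same approach as the paper's own proof: the $\supseteq$ inclusion via Lemma~\ref{lem:closmult} (applied to the \vari $\bool{\Ds}$ with $\rho$ viewed as quasi-\tame for the identity), and the $\subseteq$ inclusion via Lemma~\ref{lem:hintro:boolmconcat} applied to optimal $\bool{\Ds}$-covers of $K_1$ and $K_2$. The only cosmetic difference is that you spell out the two successive uses of Lemma~\ref{lem:closmult} and explicitly invoke Remark~\ref{rem:tameisquasi}, whereas the paper compresses these into a single sentence.
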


\begin{proof}
	For the sake of avoiding clutter, we write $\xi$ for $\bratauxbd$. We start with the right to left inclusion. Let $r \in \dclosr (\xi(K_1) \cdot  \{\rho(a)\} \cdot\xi(K_2))$. We show that $r\in\xi(K_1aK_2)$. By definition, we have $r_i \in \xi(K_i)$ for $i = 1,2$ such that $r \leq r_1\rho(a)r_2$. We have $r_i \in \opti{\bool{\Ds}}{K_i,\rho}$ by definition of $\xi=\bratauxbd$. Clearly, we also have $\rho(a) \in \opti{\bool{\Ds}}{\{a\},\rho}$. By Lemma~\ref{lem:boolclos}, \bool{\Ds} is a \vari. Thus, since $\rho$ is a \mratm,  Lemma~\ref{lem:closmult} yields $r_1\rho(a)r_2 \in \opti{\bool{\Ds}}{K_1aK_2,\rho}$. Since $r \leq r_1\rho(a)r_2$, we get $r \in \opti{\bool{\Ds}}{K_1aK_2,\rho}$. By definition, this exactly says that $r\in\xi(K_1aK_2)$. We turn to the converse inclusion.

	Let $r \in \xi(K_1aK_2)$. We show that $r \in \dclosr (\xi(K_1) \cdot  \{\rho(a)\} \cdot \xi(K_2))$. For $i=1,2$, we let $\Hb_i$ as an optimal \bool{\Ds}-cover (for $\rho$) of $K_i$ for $i = 1,2$. By definition, we have $\prin{\rho}{\Hb_i} = \xi(K_i)$. Since $K_1,K_2 \in \Ds$, we may apply Lemma~\ref{lem:hintro:boolmconcat} to build a \bool{\Ds}-cover \Hb of $K_1aK_2$ such that for every $H \in \Hb$, there exist $H_1 \in \Hb_1$ and $H_2 \in \Hb_2$ such that $H \subseteq H_1aH_2$. By hypothesis, $r \in \xi(K_1aK_2) = \opti{\bool{\Ds}}{K_1aK_2,\rho}$. Therefore, since \Hb is a \bool{\Ds}-cover of $K_1aK_2$, we have $r \leq \rho(H)$ for some $H \in \Hb$. The definition of \Hb then yields $H_1 \in \Hb_1$ and $H_2 \in \Hb_2$ such that $H \subseteq H_1aH_2$. This implies that $r \leq \rho(H_1) \cdot \rho(a) \cdot \rho(H_2)$. Finally, $\rho(H_i) \in \xi(K_i)$ for $i=1,2$ by definition of $\Hb_i$. This yields $r \in \dclosr (\xi(K_1) \cdot  \{\rho(a)\} \cdot \xi(K_2))$ which concludes the proof.
\end{proof}

\section{Characterization of \pol{\Cs}-optimal \imprints}
\label{sec:pol}
In this section, we recall the theorem of~\cite{pseps3j} for classes of the form \pol{\Cs} (when \Cs is a finite \vari). It states a characterization of \pol{\Cs}-optimal \imprints: for a \Cs-compatible morphism $\alpha: A^* \to M$ and a \mratm $\rho: 2^{A^*} \to R$, \pocopti is characterized as the least subset of $M \times R$ satisfying specific properties. When used in the special case when $\rho$ is \nice, one obtains a least fixpoint procedure for computing \pocopti from $\alpha$ and $\rho$. By Proposition~\ref{prop:thereductionlat}, this yields an algorithm for solving \pol{\Cs}-covering. However, deciding \pol{\Cs}-covering is not our motivation here: we need this characterization in order to use it as a subresult when proving Theorem~\ref{thm:bpol:carac}.

Unfortunately, there are technical complications. When we apply the characterization as a subresult, we shall do so for \ratms which are not \tame, only quasi-\tame (as expected, we build them using the constructions presented in Section~\ref{sec:tools}). This case is not covered by the statement of~\cite{pseps3j}, which only deals with (true) \mratms. Consequently, we have to generalize this statement. We avoid redoing the whole proof, by obtaining the generalized statement as a corollary of the original one from~\cite{pseps3j}. Additionally, we take this opportunity to slightly tweak the original statement in order to better accommodate our use of the theorem in Sections~\ref{sec:sound} and~\ref{sec:comp}.

We first present the theorem and then focus on its proof. We fix an arbitrary finite \vari \Cs for the presentation.

\subsection{Statement}

Consider a \Cs-compatible morphism $\alpha: A^* \to M$ and a \ratm $\rho: 2^{A^*} \to R$ which is quasi-\tame (there is no other constraint on $\rho$, in particular, it need not be \nice). Recall that since $\alpha$ is \Cs-compatible, we know that for every $s \in M$, $\ctype{s}$ is well-defined as a \equc-class containing $\alpha\inv(s)$. We say that a subset $S \subseteq M \times R$ is \emph{\pol{\Cs}-saturated} (for $\alpha$ and $\rho$) when it satisfies the following properties:
\begin{enumerate}
\item \emph{Trivial elements:} For every $w \in A^*$, $(\alpha(w),\rho(w)) \in S$.
\item \emph{Downset:} We have $S = \dclosr S$.
\item \emph{Multiplication}: For every $(s_1,r_1),(s_2,r_2) \in S$, we have $(s_1s_2,r_1r_2) \in S$.
\item\label{op:half:polclos} \emph{\pol{\Cs}-closure}: For every pair of (multiplicative) idempotents $(e,f) \in S$, we have:
  \[
    (e,f \cdot \rho(\ctype{e}) \cdot f) \in S.
  \]
\end{enumerate}

We prove the following statement as a corollary of~\cite[Theorem~6.5]{pseps3j}.

\begin{theorem}\label{thm:half:mainpolc}
  Consider a \Cs-compatible morphism $\alpha: A^* \to M$ and a quasi-\mratm $\rho:2^{A^*} \to R$. Moreover, let $S$ be the least \pol{\Cs}-saturated subset of $M \times R$. Then,
  \[
    \pocopti = \dclosr \big\{(s,\quasir(r)) \mid (s,r) \in S\big\}.
  \]
\end{theorem}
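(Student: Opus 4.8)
The plan is to reduce the statement to the already-known Theorem~6.5 of~\cite{pseps3j}, which handles the case of a genuine \mratm. The key idea is that a quasi-\mratm $\rho: 2^{A^*}\to R$ with endomorphism \quasir is ``almost'' a \mratm, and the obstruction is measured precisely by \quasir. So first I would build from $\rho$ an auxiliary \emph{true} \mratm $\rho'$ having the same underlying rating set $R$ but with multiplication twisted by \quasir: concretely, set $\rho'=\rho$ as a map $2^{A^*}\to R$ but equip $R$ with the multiplication $q\star r \stackrel{\text{def}}{=} \quasir(q\cdot r)$ (or, if associativity of $\star$ fails, pass to the image semiring $\quasir(R)$, on which $\star$ is associative by Axiom~\ref{itm:nest:qt1}). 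One checks that $(R,+,\star)$ (resp. $(\quasir(R),+,\star)$) is an idempotent semiring: distributivity and the zero axiom follow because \quasir is an additive endomorphism and $0_R\cdot r=0_R$; associativity of $\star$ is exactly Axiom~\ref{itm:nest:qt1}; and the unit is $\quasir(1_R)$, which acts as a neutral element by Fact~\ref{fct:nest:quasip}. By Axiom~\ref{itm:nest:qt3}, $\rho(K_1K_2)=\quasir(\rho(K_1)\cdot\rho(K_2))=\rho(K_1)\star\rho(K_2)$, so $\rho'$ is genuinely a \mratm for $\star$. Note $\rho'$ inherits \niceness or non-\niceness from $\rho$, but that is irrelevant since neither Theorem~6.5 nor the present statement requires it.

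Next I would apply Theorem~6.5 of~\cite{pseps3j} to the pair $(\alpha,\rho')$: it gives that $\pocopti$ (computed with respect to $\rho'$, which as a set map equals $\rho$, hence the left-hand side is unchanged) equals the least subset $S'\subseteq M\times R$ satisfying the four \pol{\Cs}-saturation properties \emph{stated with the $\star$-multiplication}: trivial elements $(\alpha(w),\rho(w))$, downset, $\star$-multiplication $(s_1s_2,r_1\star r_2)\in S'$, and $\star$-\pol{\Cs}-closure $(e,f\star\rho(\ctype e)\star f)\in S'$ for $\star$-idempotents $(e,f)$. The heart of the argument is then a comparison between this set $S'$ and the set $S$ of the theorem statement (the least subset saturated for the \emph{original} operations of $R$), the link being the closure operator $T\mapsto\dclosr\{(s,\quasir(r))\mid (s,r)\in T\}$, call it $\Phi$. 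I would prove two inclusions: (i) $\Phi(S)\subseteq S'$, by showing $\Phi(S)$ is $\star$-\pol{\Cs}-saturated, using Axiom~\ref{itm:nest:qt1} to absorb the inner \quasir's (e.g. $\quasir(r_1)\star\quasir(r_2)=\quasir(\quasir(r_1)\cdot\quasir(r_2))=\quasir(r_1 r_2)$ by Axiom~\ref{itm:nest:qt1}) and Fact~\ref{fct:nest:quasip} to handle the unit; and (ii) $S'\subseteq\Phi(S)$, by an induction on the derivation of elements of $S'$ from the closure rules, at each step producing a pre-image in $S$ whose $\Phi$-image dominates the element obtained — here one repeatedly uses that every $\star$-operation is a \quasir of an ordinary product, that \quasir is monotone (the Remark after the definition of quasi-\mratms), and that $S$ is closed under the plain-$R$ versions of multiplication and \pol{\Cs}-closure. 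Combining (i) and (ii) with the downset-idempotence $\dclosr\dclosr=\dclosr$ and $\quasir\circ\quasir=\quasir$ on the relevant elements gives $S'=\Phi(S)$, hence $\pocopti=S'=\Phi(S)=\dclosr\{(s,\quasir(r))\mid(s,r)\in S\}$, which is the claim.

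The main obstacle I anticipate is bookkeeping around the two multiplications and the idempotents: a $\star$-idempotent of $R$ need not be a $\cdot$-idempotent, so in direction (ii) one must be careful when invoking the plain \pol{\Cs}-closure rule of $S$ — the fix is to first $\star$-idempotent-power the element (stays in $S'$ by $\star$-multiplication) and then observe that its \quasir-image, which lies in $\quasir(R)$ where $\star$ and a suitable notion of idempotent behave well, can be matched by a genuine $\cdot$-idempotent in $S$ obtained by ordinary idempotent-powering (finiteness of $R$). A second, minor subtlety is checking that the ``trivial elements'' and ``multiplication'' axioms are compatible under $\Phi$ — this is where Fact~\ref{fct:nest:quasip} ($\rho(H)=\quasir(\rho(H))$, in particular $\rho(w)=\quasir(\rho(w))$) makes the trivial elements automatically lie in $\Phi(S)$. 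Everything else is routine verification that a \quasir-twisted semiring is a semiring and that $\Phi$ commutes appropriately with the generating operations.
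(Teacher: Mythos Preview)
Your approach is essentially the paper's: define the twisted product $q\odot r=\quasir(qr)$, restrict to a subset of $R$ on which $\odot$ has a unit (the paper takes the image $Q=\rho(2^{A^*})$ rather than $\quasir(R)$, but both work), apply the cited Theorem~6.5 to the resulting genuine \mratm, and then compare the two least saturated sets by proving two inclusions via induction on their inductive constructions, using idempotent-powering in the finite monoids to pass between $\cdot$-idempotents and $\odot$-idempotents.

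One slip worth flagging: your item~(i) is mislabelled. Showing that $\Phi(S)$ is $\star$-saturated yields $S'\subseteq\Phi(S)$ (since $S'$ is the \emph{least} $\star$-saturated set), not $\Phi(S)\subseteq S'$. So as written, (i) and~(ii) both argue the same direction. For the missing direction $\Phi(S)\subseteq S'$ you want the dual argument: induct on the construction of $(s,r)\in S$ and show $(s,\quasir(r))\in\dclosr S'$ at each step. The idempotent obstacle you anticipate appears here too, in mirror form: a $\cdot$-idempotent $f$ with $(e,f)\in S$ gives $(e,\quasir(f))\in\dclosr S'$, but $\quasir(f)$ need not be a $\star$-idempotent; the fix is again to $\star$-power inside $S'$ before invoking the $\star$-version of \pol{\Cs}-closure. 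This is exactly how the paper proceeds.
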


Recall that a (true) \mratm $\rho: 2^{A^*} \to R$ is also quasi-\tame for the  endomorphism \quasir defined as the identity on $R$ (see Remark~\ref{rem:tameisquasi}). Thus, in this case, Theorem~\ref{thm:half:mainpolc} yields that $\pocopti = \dclosr S = S$ where $S$ is the least \pol{\Cs}-saturated subset of $M \times R$. This is the original statement of~\cite{pseps3j}. When $\rho$ is a \nice \mratm, it is clear that one may compute the least \pol{\Cs}-saturated subset of $M \times R$ with a least fixpoint algorithm. Therefore, we get an algorithm for \pol{\Cs}-covering by Proposition~\ref{prop:thereductionlat}.

\begin{remark}
  There is a difference between Theorem~\ref{thm:half:mainpolc} for \pol{\Cs} and Theorem~\ref{thm:bpol:carac} for \bpol{\Cs}: the latter is restricted to {\bf \nice} \ratms while this is not the case for the former. As explained in Remark~\ref{rem:niceiscrucial}, while it is easy to miss, this difference is crucial.
\end{remark}

\subsection{Proof of Theorem~\ref{thm:half:mainpolc}}

We fix a \Cs-compatible morphism $\alpha: A^* \to M$ and a quasi-\tame \ratm $\rho: 2^{A^*} \to R$. Since $\rho$ is quasi-\tame, we have a semiring structure $(R,+,\cdot)$ on $R$ and an endomorphism \quasir of $(R,+)$ satisfying the appropriate axioms. Finally, we let $S$ as the least \pol{\Cs}-saturated subset of $M \times R$ for $\alpha$ and $\rho$. We show that $\pocopti = \dclosr \big\{(s,\quasir(r)) \mid (s,r) \in S\big\}$.

We first prove that the surjective restriction of $\rho$ is a \emph{true} \mratm (for a new multiplication on the rating set which is distinct from ``$\cdot$''). This allows us to apply the theorem of~\cite{pseps3j}.

\smallskip

We define a new multiplication on $R$ that we denote by ``$\odot$''. For every $q,r \in R$, we define $q \odot r = \quasir(qr)$. It is immediate from Axiom~\ref{itm:nest:qt1} in the definition of quasi-\mratms that ``$\odot$'' is associative. Moreover, since \quasir is an endomorphism of $(R,+)$, one may verify that ``$\odot$'' distributes over addition and that the element $0_R$ is a zero for ``$\odot$''. Yet, note that $(R,+,\odot)$ need not be a semiring: ``$\odot$'' might not have a neutral element.

Axiom~\ref{itm:nest:qt3} in the definition of quasi-\mratms solves this issue. It implies that for $K_1,K_2 \subseteq A^*$, we have $\rho(K_1K_2)  = \rho(K_1) \odot \rho(K_2)$. Thus, the rating map $\rho$ is a semigroup morphism from $(2^{A^*},\cdot)$ to $(R,\cdot)$. Let $Q = \rho(2^{A^*}) \subseteq R$ and let $\tau: 2^{A^*} \to Q$ be the surjective restriction of $\rho$. It follows that $(Q,+,\odot)$ is an idempotent semiring (the neutral element is $\rho(\veps) = \tau(\veps) \in Q$) and $\tau: (2^{A^*},\cup,\cdot) \to (Q,+,\odot)$ is a semiring morphism, \emph{i.e.} a \mratm. Additionally, since we defined $\tau$ as the surjective restriction of $\rho$, it is immediate that,
\[
\pocopti = \dclosr \left(\popti{\pol{\Cs}}{\alpha}{\tau}\right)
\]
Moreover, since $\tau$ is a true \mratm, we may apply the theorem of~\cite{pseps3j}. Let $T$ be the least \pol{\Cs}-saturated of $M \times Q$ for $\alpha$ and $\tau$. We obtain from~\cite[Theorem~6.5]{pseps3j} that $\popti{\pol{\Cs}}{\alpha}{\tau} = T$. Hence, it now suffices to prove that,
\begin{equation} \label{eq:nest:polc}
\dclosr T = \dclosr \{(s,\quasir(r)) \mid (s,r) \in S\}.
\end{equation}
Indeed, this clearly implies $\pocopti = \dclosr \big\{(s,\quasir(r)) \mid (s,r) \in S\big\}$ which concludes the proof of Theorem~\ref{thm:half:mainpolc}.

\begin{remark} \label{rem:nest:odot}
	We are dealing with two strongly connected \pol{\Cs}-saturated sets. Namely, $S \subseteq M \times R$ and $T \subseteq M \times Q \subseteq M \times R$. However, there is a subtle difference between the two. By definition, $S$ is \pol{\Cs}-saturated for $\alpha$ and $\rho$. This notion depends on the original multiplication ``$\cdot$'' of $R$. On the other hand, $T$ is \pol{\Cs}-saturated for $\alpha$ and $\tau$. By definition of $\tau$, this notion depends on the new multiplication ``$\odot$'' of $Q$.
\end{remark}

It now remains to prove~\eqref{eq:nest:polc}. We handle the two inclusions separately. First, we show that $\dclosr T \subseteq \dclosr \{(s,\quasir(r)) \mid (s,r) \in S\}$. This boils down to proving that for every $(s,q) \in T$, we have $r \in R$ such that $(s,r) \in S$ and $q \leq \quasir(r)$. By definition, $T$ is the least \pol{\Cs}-saturated subset of $M \times Q$ for $\alpha$ and $\tau$. Therefore, every pair $(s,q) \in T$ is built from trivial elements using downset, multiplication and \pol{\Cs}-closure (here, the multiplication is ``$\odot$'' on $Q$, see Remark~\ref{rem:nest:odot}). We proceed by induction on this construction.

If $(s,q)$ is a trivial element, we have $w\in A^*$ such that $s = \alpha(w)$ and $q = \tau(w) = \rho(w)$. We have $(s,q) \in S$ since $S$ is \pol{\Cs}-saturated for $\alpha$ and $\rho$. Moreover, $q = \quasir(q)$ by Fact~\ref{fct:nest:quasip} since $q = \rho(w)$ which concludes this case. We turn to downset: we have $q' \in R$ such that $(s,q') \in T$ and $q \leq q'$. Induction yields $r \in R$ such that $(s,r) \in S$ and $q' \leq \quasir(r)$. Thus, $q \leq \quasir(r)$ which concludes this case. We turn to multiplication. In that case, we have $(s_1,q_1),(s_2,q_2) \in T$ such that $s=s_1s_2$ and $q = q_1 \odot q_2$. For $i=1,2$, induction yields $r_i \in R$ such that $(s_i,r_i) \in S$ and $q_i \leq \quasir(r_i)$. Since $S$ is \pol{\Cs}-saturated for $\alpha$ and $\rho$, we obtain $(s,r_1r_2) \in S$. Moreover, Axiom~\ref{itm:nest:qt1} in the definition of quasi-\mratms yields,
\[
q = q_1 \odot q_2 = \quasir(q_1q_2) \leq  \quasir(\quasir(r_1)\quasir(r_2)) = \quasir(r_1r_2)
\]
This concludes the proof for this case. It remains to handle \pol{\Cs}-closure. In that case, we have a pair of idempotents $(e,f) \in T$ such that $s = e$ and $q = f \odot \tau(\ctype{e}) \odot f$ (here, $f$ is an idempotent of $(R,\odot)$ since $T$ is \pol{\Cs}-saturated for $\alpha$ and $\tau$). Induction yields $r \in R$ such that $(e,r) \in S$ and $f \leq \quasir(r)$. Since $(R,\cdot)$ is a finite monoid, there exists a number $p \geq 1$ such that $r^p$ is an idempotent of $(R,\cdot)$. Since $S$ is \pol{\Cs}-saturated for $\alpha$ and $\rho$, we obtain from closure under multiplication that $(e,r^p) \in S$. Together with \pol{\Cs}-closure, this yields $(e,r^p\rho(\ctype{e})r^p) \in S$. Since $f \leq \quasir(r)$ and $f$ is an idempotent of $(R,\odot)$, we obtain,
\[
f \odot \tau(\ctype{e}) \odot f = f \odot \rho(\ctype{e}) \odot f \leq \quasir(r^p\rho(\ctype{e})r^p)
\]
This concludes the proof for the inclusion $\dclosr T \subseteq \dclosr \{(s,\quasir(r)) \mid (s,r) \in S\}$ in~\eqref{eq:nest:polc}.

\medskip

We turn to the converse inclusion: $\dclosr \{(s,\quasir(r)) \mid (s,r) \in S\} \subseteq \dclosr T$. It suffices to show that for every $(s,r) \in S$, we have $(s,\quasir(r)) \in \dclosr T$. By definition, $S$ is the least \pol{\Cs}-saturated subset of $M \times R$ for $\alpha$ and $\rho$. Hence, every pair $(s,r) \in S$ is built from trivial elements using downset, multiplication and \pol{\Cs}-closure (here, the multiplication is ``$\cdot$'' on $R$, see Remark~\ref{rem:nest:odot}). We proceed by induction on this construction.

If $(s,r)$ is a trivial element, we have $w\in A^*$ such that $s = \alpha(w)$ and $r = \rho(w) = \tau(w)$. By Fact~\ref{fct:nest:quasip}, we have $\quasir(r) = \rho(w)$. Thus, $(s,\quasir(r)) \in T \subseteq \dclosr T$ since $T$ is \pol{\Cs}-saturated for $\alpha$ and $\tau$. We turn to downset: we have $r' \in R$ such that $(s,r') \in T$ and $r \leq r'$. Induction yields $(s,\quasir(r')) \in \dclosr T$. Thus, since $\quasir(r) \leq \quasir(r')$, we get $(s,\quasir(r)) \in \dclosr \dclosr T = \dclosr T$ as desired. We turn to multiplication. In that case, we have $(s_1,r_1),(s_2,r_2) \in S$ such that $s= s_1s_2$ and $r = r_1r_2$. By induction, we obtain that $(s_i,\quasir(r_i))\in \dclosr T$ for $i=1,2$. This yields $q_1, q_2 \in Q$ such that  $(s_i,q_i) \in T$ and $\quasir(r_i)\leq q_i$ for $i=1,2$. Since $T$ is \pol{\Cs}-saturated for $\alpha$ and $\tau$, it follows that $(s,q_1 \odot q_2) \in T$ by closure under multiplication. Moreover,
\[
\quasir(r) = \quasir(r_1r_2) = \quasir(\quasir(r_1)\quasir(r_2)) \leq \quasir(q_1q_2) = q_1 \odot q_2
\]
Altogether, we obtain $(s,\quasir(r))\in\dclosr T$ as desired. It remains to handle \pol{\Cs}-closure. We have a pair of multiplicative idempotents $(e,f) \in S$ such that $s = e$ and $r = f \cdot \rho(\ctype{e}) \cdot f$ (here, $f$ is an idempotent of $(R,\cdot)$ since $S$ is \pol{\Cs}-saturated for $\alpha$ and $\rho$). By induction, we have $(e,\quasir(f)) \in \dclosr T$. Thus, we obtain $q \in Q$ such that $(e,q) \in T$ and $\quasir(f) \leq q$. Since $(R,\odot)$ is a finite semigroup, there exists a number $p \geq 1$ such the multiplication of $p$ copies of $q$ with ``$\odot$'' is an idempotent of $(R,\odot)$. We write $g \in R$ for this idempotent. Since $T$ is \pol{\Cs}-saturated for $\alpha$ and $\tau$, we obtain from closure under multiplication that $(e,g) \in T$. Together with \pol{\Cs}-closure, this yields $(e,g \odot \tau(\ctype{e}) \odot g) \in T$. Since $\quasir(f) \leq q$ and $f$ is an idempotent of $(R,\cdot)$, one may verify that,
\[
\quasir(f \cdot \rho(\ctype{e}) \cdot f) = \quasir(f \cdot \tau(\ctype{e}) \cdot f) \leq g \odot \tau(\ctype{e}) \odot g
\]
Altogether, we obtain $(s,\quasir(r)) = (e,\quasir(f \cdot \rho(\ctype{e}) \cdot f)) \in \dclosr T$ as desired. This concludes the proof.

\section{Soundness in Theorem~\ref{thm:bpol:carac}}
\label{sec:sound}
We may now start the proof of Theorem~\ref{thm:bpol:carac}. In this section, we establish that the statement is sound. The proof is divided in two parts. First, we present a preliminary result, which applies to the Boolean closure operation in general, \emph{i.e.}, to classes of the form $\bool{\Ds}$ when \Ds is an arbitrary lattice. Then, we apply this preliminary result in the special case when $\Ds = \pol{\Cs}$ (for \Cs a finite \vari) to establish the soundness direction in Theorem~\ref{thm:bpol:carac}.

\subsection{Preliminary result}

We first introduce terminology that we need to state our result. We fix a lattice \Ds. Moreover, we let $\rho: 2^{A^*} \to R$ be a \ratm. Let us recall the definition of the \ratm \lratauxd, defined page~\pageref{rataux}, which also depends on a map $\alpha: A^* \to M$:
\[
  \begin{array}{llll}
    \lratauxd: & (2^{A^*},{\cup}) & \to & (2^{M \times R},\cup)\\
               & K & \mapsto & \big\{(s,r) \mid r \in \opti{\Ds}{\alpha\inv(s) \cap K,\rho}\big\}.
  \end{array}
\]

Using induction, we define a \ratm $\tau_n: 2^{A^*} \to Q_n$ for every $n \in \nat$. When $n = 0$, the rating set $Q_0$ is $(2^R,\cup)$ and $\tau_0$ is defined as follows, 
\[
  \begin{array}{llll}
    \tau_0: & (2^{A^*},\cup) & \to     & (2^R,\cup)               \\
         & K              & \mapsto & \{\rho(w) \mid w \in K\}.
  \end{array}
\]
It is immediate by definition that $\tau_0$ is indeed a \ratm (\emph{i.e.}, a monoid morphism).

Assume now that $n \geq 1$ and that $\tau_{n-1}: 2^{A^*} \to Q_{n-1}$ is defined. Recall that $\rho_*:  A^* \to R$ denotes the canonical map associated to the \ratm $\rho$. We define $\tau_n$ as:
\[
  \tau_n= \lrataux{\Ds}{\rho_*}{\tau_{n-1}}.
\]
By Proposition~\ref{prop:areratms}, $\tau_n$ is indeed a \ratm. By definition, this means that for all $n \geq 1$, the rating set $Q_n$ of $\tau_n$ is
\[
  Q_n=(2^{R \times Q_{n-1}},\cup).
\]
We complete this definition with maps $f_n: Q_n \to 2^R$ for $n \in \nat$, defined by induction on~$n$.
\begin{itemize}
\item For $n = 0$, let $T \in Q_0= 2^R$. We define:
  \[
    f_0(T) =  \dclosr \{r_1 + \cdots + r_k \mid r_1,\dots,r_k \in T\}.
  \]
\item For $n \geq 1$, let $T \in Q_n = 2^{R \times Q_{n-1}}$. We define:
  \[
    f_n(T) = {\dclosr} \left\{r_1 + \cdots + r_k \mid
      \begin{array}{c}
        \text{there exist $(r_1,T_1),\dots,(r_k,T_k) \in T$ such that}\\
        \text{$r_1 + \cdots + r_k \in f_{n-1}(T_i)$ for every $i \leq k$}
      \end{array}
    \right\}.
  \]
\end{itemize}
The following fact is immediate from the definition.

\begin{fct} \label{fct:fmapinc}
  For every $n \in \nat$ and $U,U' \in Q_n$ such that $U \subseteq U'$, we have $f_n(U) \subseteq f_n(U')$.
\end{fct}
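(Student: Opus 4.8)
The plan is to unwind the definition of $f_n$ directly; the statement does not even require an induction on $n$, though one may organise it that way. The single ingredient is that the downset operator is monotone: if $X \subseteq Y \subseteq R$ then $\dclosr X \subseteq \dclosr Y$, which is immediate from $\dclosr X = \{r \in R \mid r \leq s \text{ for some } s \in X\}$. So in each case it suffices to check that the set appearing inside $\dclosr$ in the definition of $f_n(U)$ is contained in the analogous set for $f_n(U')$.

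For $n = 0$: if $U \subseteq U'$, then every choice $r_1,\dots,r_k \in U$ is also a choice $r_1,\dots,r_k \in U'$, so $\{r_1 + \cdots + r_k \mid r_1,\dots,r_k \in U\} \subseteq \{r_1 + \cdots + r_k \mid r_1,\dots,r_k \in U'\}$, and applying $\dclosr$ gives $f_0(U) \subseteq f_0(U')$. For $n \geq 1$: a sum $r_1 + \cdots + r_k$ lies in the inner set for $U$ exactly when there are $(r_1,T_1),\dots,(r_k,T_k) \in U$ with $r_1 + \cdots + r_k \in f_{n-1}(T_i)$ for all $i \leq k$. Since $U \subseteq U'$, the same tuples lie in $U'$, and the conditions $r_1 + \cdots + r_k \in f_{n-1}(T_i)$ are literally unchanged (the second components $T_i$ are the same), so the inner set for $U$ is contained in that for $U'$, whence $f_n(U) \subseteq f_n(U')$.

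There is no real obstacle here: the statement is a routine compatibility of $f_n$ with inclusion, recorded because it will be reused repeatedly in the sequel. The only thing to be slightly careful about is precisely that the conditions defining the inner set at level $n$ reference $f_{n-1}$ applied to the \emph{components} $T_i$ of the tuples, not to $U$ itself, so enlarging $U$ to $U'$ only adds tuples without disturbing any of the constraints already satisfied by the witnesses already present.
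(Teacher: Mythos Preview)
Your proof is correct and matches the paper's own treatment: the paper records this fact as ``immediate from the definition'' without further argument, and your unwinding is exactly the direct verification one would give. Your observation that no induction on $n$ is needed (because the conditions at level $n$ involve $f_{n-1}$ applied to the unchanged components $T_i$, not to $U$) is precisely the right point.
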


We may now state the preliminary result that we shall use in our soundness direction of Theorem~\ref{thm:bpol:carac}.

\begin{proposition} \label{prop:sound1}
  Consider a language $L \in \Ds$. Then, the following inclusion holds:
  \[
    \bigcap_{n \in \nat} f_n(\tau_n(L)) \subseteq \opti{\bool{\Ds}}{L,\rho}.
  \]
\end{proposition}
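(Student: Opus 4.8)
The plan is to unfold the definition of the optimal imprint. By Lemma~\ref{lem:bgen:opt} and the minimality of optimal covers, $r \in \opti{\bool{\Ds}}{L,\rho}$ holds if and only if every $\bool{\Ds}$-cover $\Kb$ of $L$ contains some $K$ with $r \le \rho(K)$. So I would fix $r \in \bigcap_{n \in \nat} f_n(\tau_n(L))$ and an arbitrary $\bool{\Ds}$-cover $\Kb$ of $L$, and look for such a $K$. Since $\Kb$ is finite and each of its members is a Boolean combination of languages of $\Ds$ (and $L \in \Ds$), I would first fix a \emph{finite} sublattice $\Cs' \subseteq \Ds$ with $L \in \Cs'$ and $\Kb \subseteq \bool{\Cs'}$. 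Then $\bool{\Cs'}$ is a finite Boolean algebra, so by Lemma~\ref{lem:canoequiv} the equivalence $\sim_{\bool{\Cs'}}$ has finite index; since $L \in \Cs' \subseteq \bool{\Cs'}$, $L$ is a union of $\sim_{\bool{\Cs'}}$-classes, and every such class $H \subseteq L$, being an atom of $\bool{\Cs'}$, is contained in some $K \in \Kb$. As $\rho(K) = \rho(H) + \rho(K \setminus H) \ge \rho(H)$, it therefore suffices to exhibit a single $\sim_{\bool{\Cs'}}$-class $H \subseteq L$ with $r \le \rho(H)$.

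The core of the proof is a lemma, proved by induction on $n$, that unfolds membership in $f_n(\tau_n(X))$ for an arbitrary language $X$. Its rough shape: from $r' \in f_n(\tau_n(X))$ one extracts a $\leqslant_{\Cs'}$-upper set $U$ (hence $U \in \Cs' \subseteq \Ds$ by Lemma~\ref{lem:canosatur}) with $U \cap X \ne \emptyset$ and $r' \le \rho(U \cap X)$, together with the stronger conclusion — available once $n$ exceeds a bound determined by $\Cs'$ — that $r'$ is realized by the intersection with $X$ of a \emph{single} $\sim_{\bool{\Cs'}}$-class: $r' \le \rho(H \cap X)$ for some $\sim_{\bool{\Cs'}}$-class $H$ meeting $X$. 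The base case $n = 0$ is immediate from $f_0(\tau_0(X)) = \dclosr \{ \rho(w_1) + \cdots + \rho(w_k) \mid w_1,\dots,w_k \in X \}$. For the inductive step, I would write $r' \le r_1 + \cdots + r_k$ with $(r_i, T_i) \in \tau_n(X)$ and $r_1 + \cdots + r_k \in f_{n-1}(T_i)$ for each $i$; since $\tau_n = \lrataux{\Ds}{\rho_*}{\tau_{n-1}}$, each $T_i$ lies in $\opti{\Ds}{\rho_*\inv(r_i) \cap X, \tau_{n-1}}$, so an optimal $\Ds$-cover of $\rho_*\inv(r_i) \cap X$ provides a member $B_i \in \Ds$ with $T_i \subseteq \tau_{n-1}(B_i)$, whence $r_1 + \cdots + r_k \in f_{n-1}(\tau_{n-1}(B_i))$ by Fact~\ref{fct:fmapinc}. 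One then applies the induction hypothesis to the $B_i$ at level $n-1$ and recombines the outcomes, using closure of $\Cs'$ under finite unions and intersections, closure of $\Ds$ under the quotient operations needed to carve out strata inside $X$ (in the style of the proof of Lemma~\ref{lem:hintro:boolconcat}), the identity $\rho(Y \cup Y') = \rho(Y) + \rho(Y')$, and the $\Ds$-optimality of the intermediate covers to control the sizes of the upper sets produced. Because there are only finitely many $\leqslant_{\Cs'}$-upper sets, and because $r$ sits in $f_m(\tau_m(\cdot))$ for \emph{all} $m$ (so the unfolding may be started arbitrarily high), a pigeonhole / well-foundedness argument along the recursion forces the relevant chain of upper sets to stabilize, at which point the single-class conclusion is reached.

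Granting this lemma, applying it to $X = L$ with $n$ beyond the bound yields a $\sim_{\bool{\Cs'}}$-class $H$ with $r \le \rho(H \cap L)$; since $L$ is a union of $\sim_{\bool{\Cs'}}$-classes and $H \cap L \ne \emptyset$, in fact $H \subseteq L$, so $r \le \rho(H)$ with $H \subseteq L$, which closes the reduction and proves the proposition. The step I expect to be the main obstacle is the inductive lemma, and within it the delicate point is isolating the correct invariant: a naive version asserting only that ``$r'$ is realized in $U \cap X$ for a minimal upper set $U$'' is not enough, because $r' \le \rho(H \cap X) + \rho((U \setminus H) \cap X)$ does not give $r' \le \rho(H \cap X)$. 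The invariant must instead propagate realizability \emph{stratum by stratum}, presumably carrying along the whole $\leqslant_{\Cs'}$-stratification of $X$ and exploiting a ``pumping within a $\leqslant_{\Cs'}$-class'' step — consuming one level of $n$ at a time — analogous to the restricted-closure clause of the saturation conditions used elsewhere in the paper; getting the interplay between this pumping, the $\Ds$-optimality of the intermediate covers, and the rating-value bookkeeping exactly right is where the real work lies.
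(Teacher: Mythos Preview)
Your outer reduction is sound and in fact slightly sharper than what the paper does: reducing to a finite sublattice $\Cs' \subseteq \Ds$ and hunting for a single $\sim_{\bool{\Cs'}}$-class $H \subseteq L$ with $r \le \rho(H)$ is a perfectly good target, and implies what is needed. But the heart of the matter is the inductive lemma, and there you have not found the right statement. Two concrete issues. First, your invocation of ``closure of $\Ds$ under the quotient operations \dots\ in the style of Lemma~\ref{lem:hintro:boolconcat}'' is a red herring: in this section $\Ds$ is only assumed to be a lattice, and the proof of Proposition~\ref{prop:sound1} uses no quotient or concatenation structure whatsoever. Second, your proposed invariant --- extracting a $\leqslant_{\Cs'}$-upper set $U$ with $r' \le \rho(U \cap X)$, hoping a pigeonhole on upper sets eventually pins down a single class --- is too loose to close the induction, as you yourself note; there is no clear mechanism by which ``pumping within a $\leqslant_{\Cs'}$-class'' consumes levels of $n$ in a controlled way.

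The paper's argument avoids all of this by choosing a different induction parameter. One first normalizes the given $\bool{\Ds}$-cover of $L$ to the form $\{K_i \setminus H_i \mid i \le n\}$ with $K_i, H_i \in \Ds$ (disjunctive normal form; this is Fact~\ref{fct:bpol:refine}). The key lemma (Lemma~\ref{lem:soundreal}) is then indexed by the \emph{size} $n$ of this cover, not by the nesting depth, and asserts: if $\{K_i \setminus H_i \mid i \le n\}$ covers $L$ and $s \in f_{2n}(\tau_{2n}(L))$, then $s \le \rho(K_j \setminus H_j)$ for some $j$. Each inductive step spends \emph{two} levels of unfolding. The first unfolding uses $\Ds$-optimality against the $\Ds$-cover $\{L \cap K_i\}$ of $L$ to select an index $\ell$ and land in $\tau_{2n-1}(L \cap K_\ell)$. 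The second unfolding produces $(r_1,U_1),\dots,(r_k,U_k)$ and one checks a dichotomy: either every $r_m$ is realized by some word in $K_\ell \setminus H_\ell$, in which case $s \le r_1 + \cdots + r_k \le \rho(K_\ell \setminus H_\ell)$ and we are done; or some $r_m$ is not, which forces $\rho_*^{-1}(r_m) \cap K_\ell \subseteq H_\ell$, so the corresponding $U_m$ lands in $\opti{\Ds}{L \cap K_\ell \cap H_\ell,\tau_{2(n-1)}}$, and one recurses with the smaller cover $\{K_i \setminus H_i \mid i \ne \ell\}$ of $L \cap K_\ell \cap H_\ell$. This is the clean invariant you were looking for: progress is measured by discarding one piece $K_\ell \setminus H_\ell$ from the cover, not by shrinking an upper set.
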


\begin{proof}
  The proof is based on the following more involved statement which is proved by induction on $n \in \nat$.

  \begin{lemma} \label{lem:soundreal}
    Let $n \in \nat$ and $L,K_0,\dots,K_n,H_0,\dots,H_n \in \Ds$ such that $\{K_i \setminus H_i \mid i \leq n\}$ is a cover of $L$. Then, for every $s \in f_{2n}(\tau_{2n}(L))$, there exists $j \leq n$ such that $s \leq \rho(K_j \setminus H_j)$.
  \end{lemma}

  Before proving the lemma, let us use it to prove the first property described in Proposition~\ref{prop:sound1}. Let $L \in \Ds$ be a language. We write $S$ for the set,
  \[
    S = \bigcap_{n \in \nat} f_n(\tau_n(L)).
  \]
  We show that $S \subseteq \opti{\bool{\Ds}}{L,\rho}$. First, we prove the following fact which describes a special optimal \bool{\Ds}-cover of $L$ for $\rho$.

  \begin{fct} \label{fct:bpol:refine}
    There exist $n \in \nat$ and $K_0,\dots,K_n,H_0,\dots,H_n \in \Ds$ such that $\{K_i \setminus H_i \mid i \leq n\}$ is an optimal \bool{\Ds}-cover of $L$ for $\rho$.
  \end{fct}

  \begin{proof}
    Let \Hb be an arbitrary optimal \bool{\Ds}-cover of $L$ for $\rho$. Each $V \in \Hb$ is the Boolean combination of languages in \Ds. We put it in disjunctive normal form. Each disjunct is an intersection languages belonging to \Ds, or whose complement belongs to \Ds. Since \Ds is lattice, both \Ds and the complement class \cocl{\Ds} are closed under intersection. Therefore, each disjunct in the disjunctive normal form of $V$ is actually of the form $K \setminus H$, where $K,H$ both belong to \Ds. We let \Kb as the set of all languages $K \setminus H$ which are a disjunct in the disjunctive normal form of some $V \in \Hb$. Clearly, \Kb remains a \bool{\Ds}-cover of $L$ since \Hb was one. Moreover, it is immediate that $\prin{\rho}{\Kb} \subseteq \prin{\rho}{\Hb}$ since every language in \Kb is included in a language of \Hb. Hence, \Kb remains an optimal \bool{\Ds}-cover of $L$ for $\rho$ since \Hb was one.
  \end{proof}

  We let $n \in \nat$ and $K_0,\dots,K_n,H_0,\dots,H_n \in \Ds$ be as defined in Fact~\ref{fct:bpol:refine}. We may now prove that $S \subseteq \opti{\bool{\Ds}}{L,\rho}$. Let $s \in S$. By hypothesis on $S$, we have $s \in f_{2n}(\tau_{2n}(L))$. Therefore, since $\{K_i \setminus H_i \mid i \leq n\}$ is by definition a cover of $L$, it is immediate from Lemma~\ref{lem:soundreal} that there exists $j \leq n$ such that $s \leq \rho(K_j \setminus H_j)$. Since $\{K_i \setminus H_i \mid i \leq n\}$ is an optimal \bool{\Ds}-cover of $L$ for $\rho$, this implies that $s \in \opti{\bool{\Ds}}{L,\rho}$ which concludes the main proof.

  \medskip

  We turn to the proof of Lemma~\ref{lem:soundreal}. The argument is an induction on $n \in \nat$. We start with the base case $n = 0$.

  \medskip
  \noindent
  {\bf Base case.} Consider $L,K_0,H_0 \in \Ds$ such that $\{K_0 \setminus H_0\}$ is a cover of $L$ and let $s \in f_{0}(\tau_{0}(L))$. We have to show that $s \leq \rho(K_0 \setminus H_0)$. By definition of $f_0$, we get $r_1,\dots,r_k \in \tau_{0}(L)$ such that $s \leq r_1 + \cdots + r_k$. Moreover, the definition of $\tau_0$ yields that for every $i \leq k$, $r_i = \rho(w_i)$ for some $w_i \in L$. Therefore, $r_i \leq \rho(L)$ for every $i \leq k$ and since $R$ is idempotent for addition, $s \leq r_1 + \cdots + r_k \leq \rho(L)$. Finally, since $\{K_0 \setminus H_0\}$ is a cover of $L$, we have $L \subseteq K_0 \setminus H_0$ and we get $s \leq \rho(K_0 \setminus H_0)$, finishing the argument for the base case.

  \medskip
  \noindent
  {\bf Inductive step.} We now assume that $n \geq 1$.  Let $L,K_0,\dots,K_n,H_0,\dots,H_n \in \Ds$ such that $\{K_i \setminus H_i \mid i \leq n\}$ is a cover of $L$ and let  $s \in f_{2n}(\tau_{2n}(L))$. We have to exhibit $j \leq n$ such that $s \leq \rho(K_j \setminus H_j)$. Using the hypothesis that $s \in f_{2n}(\tau_{2n}(L))$, we prove the following fact.

  \begin{fct} \label{fct:tautau}
    There exists $(r,U) \in \tau_{2n}(L)$ such that $s \in f_{2n-1}(U)$.
  \end{fct}

  \begin{proof}
    By definition of $f_{2n}$, the hypothesis that $s \in f_{2n}(\tau_{2n}(L))$ yields $(r'_1,U'_1),\dots,(r'_k,U'_k) \in \tau_{2n}(L)$ such that $r'_1 + \cdots + r'_k \in f_{2n-1}(U'_i)$ for every $i \leq k$ and $s \leq r'_1 + \cdots + r'_k$. We let $(r,U) = (r'_1,U'_1) \in \tau_{2n}(L)$. We have $r'_1 + \cdots + r'_k \in f_{2n-1}(U)$ and $s \leq r'_1 + \cdots + r'_k$. Hence, since $f_{2n-1}(U)$ is closed under downset (by definition), this implies $s \in f_{2n-1}(U)$.
  \end{proof}

  Recall that by definition, $\tau_{2n}$ is the \ratm \lrataux{\Ds}{\rho_*}{\tau_{2n-1}}. Hence, $(r,U) \in \tau_{2n}(L)$ means that:
  \[
    U \in \opti{\Ds}{\rho_*\inv(r) \cap L,\tau_{2n-1}},
  \]
  which yields, by Fact~\ref{fct:inclus2}, that,
  \[
    U \in \opti{\Ds}{L,\tau_{2n-1}}.
  \]
  Since $\{K_i \setminus H_i \mid i \leq n\}$ is a cover of $L$, $L,K_1,\dots,K_n \in \Ds$ and \Ds is a lattice, it follows that $\{L \cap K_i \mid i \leq n\}$ is a \Ds-cover of $L$. Therefore, since $U \in \opti{\Ds}{L,\tau_{2n-1}}$, we obtain some $\ell \leq n$ such that $U \subseteq \tau_{2n-1}(L \cap K_\ell)$.

  Furthermore, since $s \in f_{2n-1}(U)$, we may unravel the definition of $f_{2n-1}$ which yields $(r_1,U_1),\dots,(r_k,U_k) \in U$ such that $r_1 + \cdots + r_k \in f_{2(n-1)}(U_m)$ for every $m \leq k$ and $s \leq r_1 + \cdots + r_k$. Note that by definition of $f_{2(n-1)}$, this also implies that $s \in f_{2(n-1)}(U_m)$ for every $m \leq k$. We now distinguish two sub-cases.

  \medskip
  \noindent
  {\it Sub-case~1:} Assume that for every $m \leq k$, we have,
  \[
    \rho_*\inv(r_m) \cap (K_\ell \setminus H_\ell) \neq \emptyset
  \]
  This means that for every $m \leq k$, we have $w_m \in K_\ell \setminus H_\ell$ such that $\rho(w_m) = r_m$. In particular, $r_m = \rho(w_m) \leq  \rho(K_\ell \setminus H_\ell)$ for every $m \leq k$. Finally, since $R$ is idempotent for addition, we get $r_1 + \cdots + r_k \leq \rho(K_\ell \setminus H_\ell)$. Since $s \leq r_1 + \cdots + r_k$, we get $s \leq \rho(K_\ell \setminus H_\ell)$ and Lemma~\ref{lem:soundreal} holds for $j = \ell$ in this case.

  \medskip
  \noindent
  {\it Sub-case~2:} Conversely, assume that there exists $m \leq k$ such that,
  \[
    \rho_*\inv(r_m) \cap (K_\ell \setminus H_\ell) = \emptyset
  \]
  This implies that $K_\ell \cap \rho_*\inv(r_m) \subseteq K_\ell \cap H_\ell$. Recall that $(r_m,U_m) \in U$ and that $U \subseteq \tau_{2n-1}(L \cap K_\ell)$. Since $\tau_{2n-1}$ is the \ratm \lrataux{\Ds}{\rho_*}{\tau_{2(n-1)}} by definition, it follows that,
  \[
    U_m \in \opti{\Ds}{\rho_*\inv(r_m) \cap L \cap K_\ell,\tau_{2(n-1)}}
  \]
  Combined with the inclusion $K_\ell \cap \rho_*\inv(r_m) \subseteq K_\ell \cap H_\ell$ and Fact~\ref{fct:inclus2}, this yields that,
  \[
    U_m \in \opti{\Ds}{L \cap K_\ell \cap H_\ell,\tau_{2(n-1)}}
  \]
  Since \Ds is a lattice, it is clear that $\{L \cap K_\ell \cap H_\ell\}$ is a \Ds-cover of $L \cap K_\ell \cap H_\ell$. Thus, we obtain that $U_m \subseteq \tau_{2(n-1)}(L \cap K_\ell \cap H_\ell)$. Therefore, since $s \in f_{2(n-1)}(U_m)$ by definition of $U_m$, we obtain from Fact~\ref{fct:fmapinc} that,
  \[
    s \in f_{2(n-1)}(\tau_{2(n-1)}(L \cap K_\ell \cap H_\ell))
  \]
  Finally, since $\{K_i \setminus H_i \mid i \leq n\}$ was a cover of $L$, it is clear that $\{K_i \setminus H_i \mid i \leq n \text{ and } i \neq \ell\}$ (of size $n-1$) is a cover of $L \cap K_\ell \cap H_\ell$. Therefore, it follows by induction on $n$ in Lemma~\ref{lem:soundreal} that there exists $j \leq n$ (with $j \neq \ell$) such that $s \leq \rho(K_j \setminus H_j)$, finishing the proof.
\end{proof}

\subsection{Soundness proof for Theorem~\ref{thm:bpol:carac}}

We may now come back to our main objective: soundness in Theorem~\ref{thm:bpol:carac}. We fix a finite \vari \Cs and a \mratm $\rho: 2^{A^*} \to R$. We show that for every \bpol{\Cs}-saturated subset $S \subseteq (\sclac) \times R$ (for $\rho$), we have $S \subseteq \cbpopti$.

\begin{remark}
  Note that we do not use the hypothesis that $\rho$ is \nice for this direction. This is only needed for completeness.
\end{remark}

By Theorem~\ref{thm:polclos}, \pol{\Cs} is a lattice. Therefore, we may instantiate the definitions and results presented at the beginning of the section for our \nice \mratm $\rho: 2^{A^*} \to R$ in the special case when $\Ds = \pol{\Cs}$. We keep using the same notation: we have the \ratms $\tau_n: 2^{A^*} \to Q_n$ (as we prove below, they are now quasi-\tame since $\Ds = \pol{\Cs}$ and $\rho$ is \tame) and the maps $f_n: Q_n \to 2^R$.

\medskip

We complete Proposition~\ref{prop:sound1} with another statement specific to this special case. In fact, the proof of this second proposition is based on Theorem~\ref{thm:half:mainpolc}, the characterization of \pol{\Cs}-optimal \imprints (we apply it to the \ratms $\tau_n$). Together, these two results imply soundness in Theorem~\ref{thm:bpol:carac}.

\begin{proposition} \label{prop:sound2}
  Consider $S \subseteq (\sclac) \times R$ which is \bpol{\Cs}-saturated for $\rho$. Then, the following inclusion holds for every $D \in \sclac$,
  \[
    S(D) \subseteq \bigcap_{n \in \nat} f_n(\tau_n(D)).
  \]
\end{proposition}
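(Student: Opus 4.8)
The plan is to prove, by induction on $n\in\nat$, that $S(D)\subseteq f_n(\tau_n(D))$ for every $\sim_{\Cs}$-class $D$; intersecting over $n$ then gives the proposition. Throughout, fix a pair $(D,r)\in S$. Since $S$ is \bpol{\Cs}-saturated for $\rho$, property~\eqref{eq:bpol:sat} provides $r_1,\dots,r_k\in R$ with $r\le v$ and $(D,r_i,\{v\})\in\rbpols$ for every $i\le k$, where $v:=r_1+\cdots+r_k$. The whole argument is about turning the $\rbpols$-membership of these triples into membership statements about the nested \ratms $\tau_n$ and the maps $f_n$.

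\textbf{Base case.} For $n=0$ I would first record the observation already used in Remark~\ref{rem:niceiscrucial}: a straightforward induction on the derivation of a triple shows that every $(C,q,U)\in\rbpols$ satisfies $C=\ctype{w}$ and $q=\rho(w)$ for some $w\in A^*$ (the only step needing care is multiplication, which uses that $\rho$ is \tame). Applying this to $(D,r_i,\{v\})$ yields words $w_i$ with $\ctype{w_i}=D$ and $\rho(w_i)=r_i$; since $w_i\in D$ we get $r_i\in\tau_0(D)$, hence $v=r_1+\cdots+r_k\in f_0(\tau_0(D))$, and finally $r\in f_0(\tau_0(D))$ by downward closure, using $r\le v$.

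\textbf{Inductive step.} Assume the statement at level $n-1$ for all $\sim_{\Cs}$-classes. By Theorem~\ref{thm:polclos}, $\pol\Cs$ is a \pvari closed under concatenation, so by Lemma~\ref{lem:copelrat} each $\tau_m$ is quasi-\tame. I would then prove, by induction on the derivation of a triple in $\rbpols$, the key claim: \emph{for every $(C,q,U)\in\rbpols$ there is $T\in\opti{\pol\Cs}{\rho_*\inv(q)\cap C,\tau_{n-1}}$ with $\textstyle\sum_{u\in U}u\in f_{n-1}(T)$} (note $\rho_*\inv(q)\cap C\neq\emptyset$ by the base-case observation). Granting this, applying it to each $(D,r_i,\{v\})$ gives $T_i\in\opti{\pol\Cs}{\rho_*\inv(r_i)\cap D,\tau_{n-1}}$ with $v\in f_{n-1}(T_i)$; by definition of $\tau_n=\lrataux{\pol\Cs}{\rho_*}{\tau_{n-1}}$ this means $(r_i,T_i)\in\tau_n(D)$, and since $v=r_1+\cdots+r_k$ lies in every $f_{n-1}(T_i)$, the definition of $f_n$ gives $v\in f_n(\tau_n(D))$, whence $r\in f_n(\tau_n(D))$ since $r\le v$.

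\textbf{Proving the key claim, and the main obstacle.} For trivial elements I would reduce, via monotonicity (Fact~\ref{fct:fmapinc}) and the fact that $\{w\}$ covers itself inside $\rho_*\inv(\rho(w))\cap\ctype{w}$, to showing $\rho(w)\in f_{n-1}(\tau_{n-1}(\{w\}))$, which follows from a short secondary induction on $n$; the extended-downset clause is immediate from Fact~\ref{fct:fmapinc} and downward closure of $f_{n-1}$ and of optimal imprints; the multiplication clause combines Lemma~\ref{lem:closmult} (applied to the quasi-\tame $\tau_{n-1}$ over $\pol\Cs$) with the distributivity identity $\sum_{u,u'}uu'=(\sum_u u)(\sum_{u'}u')$ and a multiplicativity property of the $f_{m}$'s. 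The $S$-restricted closure clause is where Theorem~\ref{thm:half:mainpolc} enters, and this is where I expect the real difficulty: from an idempotent triple $(E,f,F)\in\rbpols$ one must feed the induction hypothesis of the proposition at level $n-1$, namely $S(E)\subseteq f_{n-1}(\tau_{n-1}(E))$, into the $\pol\Cs$-closure clause of the characterization of $\pol\Cs$-optimal imprints, so as to obtain a witness for the generated triple $(E,f,F\cdot S(E)\cdot F)$. Making this precise requires reconciling the ``$S$-restricted closure'' rule generating $\rbpols$ with the ``$\pol\Cs$-closure'' rule of Theorem~\ref{thm:half:mainpolc} and tracking the associated endomorphisms $\quasi{\tau_{n-1}}$; a further routine but unavoidable chore is establishing, by auxiliary inductions on $m$, the additive and multiplicative compatibility properties of the maps $f_m$ used above.
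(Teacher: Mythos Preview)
Your proposal is correct and follows essentially the same plan as the paper: induction on $n$, with the inductive step driven by a ``key claim'' about triples in \rbpols (the paper's Lemma~\ref{lem:bpol:soundopti}), proved by structural induction on the derivation in \rbpols, using the outer induction hypothesis $S(E)\subseteq f_{n-1}(\tau_{n-1}(E))$ together with Theorem~\ref{thm:half:mainpolc} at the $S$-restricted closure step, and Lemma~\ref{lem:fmapmult} for multiplication.

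The one noteworthy packaging difference is where the characterization of \pol{\Cs}-optimal imprints is invoked. You work directly with optimal \pol{\Cs}-imprints throughout (handling trivial elements, downset and multiplication via Lemma~\ref{lem:closmult} and monotonicity), and plan to call on Theorem~\ref{thm:half:mainpolc} only at the $S$-restricted closure case. The paper instead factors the whole key claim through the least \pol{\Cs}-saturated set $X$ for $\alpha$ and $\tau_{n-1}$ (Lemma~\ref{lem:bpol:subres}): it proves, by the same structural induction, that every $(D,q,U)\in\rbpols$ admits $P$ with $(D,q,P)\in X$ and $U\subseteq f_{n-1}(P)$, and applies Theorem~\ref{thm:half:mainpolc} once, at the end, to pass from $X$ to $\pocbrat$ (via Lemma~\ref{lem:fmapmu}). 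The advantage of the paper's route is that the endomorphism $\quasi{\tau_{n-1}}$ is dealt with in a single translation step rather than being threaded through the multiplication and closure cases; the advantage of yours is that the trivial, downset and multiplication cases read more directly. Also, the paper states the key claim with $U\subseteq f_{n-1}(T)$ rather than your $\sum_{u\in U}u\in f_{n-1}(T)$; since the application only uses singletons $U=\{v\}$, either formulation suffices.
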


When put together, Proposition~\ref{prop:sound1} and Proposition~\ref{prop:sound2} imply soundness in Theorem~\ref{thm:bpol:carac}.   Indeed, consider a \bpol{\Cs}-saturated set $S \subseteq (\sclac) \times R$. We show that $S \subseteq \cbpopti$. This amounts to proving that for every $D \in \sclac$, we have,
\[
  S(D) \subseteq \cbpopti(D) = \opti{\bpol{\Cs}}{D,\rho}.
\]
It is immediate from Proposition~\ref{prop:sound2} that,
\[
  S(D) \subseteq \bigcap_{n \in \nat} f_n(\tau_n(D)).
\]
Moreover, since $D \in \sclac$, we have $D \in \Cs$, which implies that $D \in \pol{\Cs}$. Therefore, Proposition~\ref{prop:sound1} yields that,
\[
  \bigcap_{n \in \nat} f_n(\tau_n(D)) \subseteq \opti{\bpol{\Cs}}{D,\rho}.
\]
Altogether, we get the desired inclusion: $S(D) \subseteq \opti{\bpol{\Cs}}{D,\rho}$. This concludes the soundness proof.

\medskip

It remains to prove Proposition~\ref{prop:sound2}. We start with a few additional results about the \ratms $\tau_n$ that we are able to prove using our new hypotheses (\emph{i.e.}, that $\rho$ is \tame and $\Ds = \pol{\Cs}$).

\medskip
\noindent
{\bf Preliminaries.} Recall that $\rho$ is \tame. Hence, $R$ is a semiring $(R,+,\cdot)$. Since $Q_0 = 2^R$ and $Q_n = 2^{R \times Q_{n-1}}$ for all $n \geq 1$, we may lift the multiplication of $R$ to all the rating sets $Q_n$ in the natural way. It is simple to verify that $(Q_n,\cup,\cdot)$ is a semiring for every $n \in \nat$. We first show that the \ratms $\tau_n$ are quasi-\tame for these multiplications.

\begin{lemma} \label{lem:squasitame}
  The \ratm $\tau_0: 2^{A^*} \to Q_0$ is \tame. Moreover, for every $n \geq 1$, the \ratm $\tau_n: 2^{A^*} \to Q_n$ is quasi-\tame for the following endomorphism $\quasi{\tau_n}$ of $(Q_n,\cup)$:
  \[
    \quasi{\tau_n}(T) = \dclosp{Q_{n-1}} \{(r,\quasi{\tau_{n-1}}(V)) \mid (r,V) \in T\} \quad \text{for every $T \in Q_n = 2^{R \times Q_{n-1}}$}.
  \]
\end{lemma}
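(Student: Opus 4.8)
The plan is to induct on $n$, disposing of the base case $n=0$ by a direct computation and reducing the inductive step $n\geq 1$ to Lemma~\ref{lem:copelrat}. Throughout, the only property of $\rho$ that is needed is that it is \tame (so that $(R,+,\cdot)$ is a semiring and the rating sets $Q_n$ carry the lifted semiring structures fixed just above the statement of the lemma); the hypothesis that $\rho$ is \nice plays no role here.

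For the base case, recall that $Q_0=2^R$ is the idempotent semiring $(2^R,\cup,\cdot)$ with unit $\{1_R\}$. I would check the two extra \mratm axioms for $\tau_0$. Axiom~\ref{itm:bgen:funit} reads $\tau_0(\veps)=\{\rho(\veps)\}=\{1_R\}$, which holds because $\rho$ is \tame. For Axiom~\ref{itm:bgen:fmult}, given $K_1,K_2\subseteq A^*$, a word lies in $K_1K_2$ precisely when it is of the form $uv$ with $u\in K_1$ and $v\in K_2$, and then $\rho(uv)=\rho(\{u\})\cdot\rho(\{v\})=\rho(u)\cdot\rho(v)$ by applying the multiplicativity of $\rho$ to the singletons $\{u\}$ and $\{v\}$; hence $\tau_0(K_1K_2)=\{\rho(u)\rho(v)\mid u\in K_1,\ v\in K_2\}=\tau_0(K_1)\cdot\tau_0(K_2)$. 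Thus $\tau_0$ is a true \mratm, and by Remark~\ref{rem:tameisquasi} it is quasi-\tame for the identity endomorphism of $Q_0$; I would record $\quasi{\tau_0}=\mathrm{id}_{Q_0}$, so that the formula defining $\quasi{\tau_n}$ for $n\geq 1$ specialises correctly at $n=1$ to $\quasi{\tau_1}(T)=\dclosp{Q_0}T$.

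For the inductive step, fix $n\geq 1$ and assume that $\tau_{n-1}$ is quasi-\tame for $\quasi{\tau_{n-1}}$ (at $n-1=0$ this is the base case). Since $\tau_n=\lrataux{\Ds}{\rho_*}{\tau_{n-1}}$ with $\Ds=\pol{\Cs}$, I would invoke Lemma~\ref{lem:copelrat} under the following identifications: the lattice is $\Ds=\pol{\Cs}$, which is a \pvari closed under concatenation by Theorem~\ref{thm:polclos}; the morphism into a finite monoid is $\rho_*: A^*\to(R,\cdot)$, a genuine monoid morphism because $\rho_*(uv)=\rho(u)\cdot\rho(v)$ and $\rho_*(\veps)=1_R$ (again using that $\rho$ is \tame); and the quasi-\mratm playing the role of ``$\rho$'' is $\tau_{n-1}: 2^{A^*}\to Q_{n-1}$, which is quasi-\tame by the induction hypothesis. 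Lemma~\ref{lem:copelrat} then yields at once that $\tau_n$ is quasi-\tame, with associated endomorphism of $(2^{R\times Q_{n-1}},\cup)=(Q_n,\cup)$ sending $T$ to $\dclosp{Q_{n-1}}\{(r,\quasi{\tau_{n-1}}(V))\mid (r,V)\in T\}$, which is exactly the map $\quasi{\tau_n}$ of the statement. A final bookkeeping point is to observe that the semiring multiplication on $Q_n$ produced by the lemma (componentwise on $R\times Q_{n-1}$, then lifted to subsets) agrees with the one fixed ``in the natural way'' just before the statement of Lemma~\ref{lem:squasitame}.

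The step requiring the most care is not a computation but the clean matching of names when applying Lemma~\ref{lem:copelrat}: one must be sure that its ``finite monoid'' can legitimately be taken to be $(R,\cdot)$, so that $\tau_n=\lrataux{\Ds}{\rho_*}{\tau_{n-1}}$ is an instance of the pointed construction of that lemma, that $\rho_*$ really is a monoid morphism, and that the two downset operations and the lifted multiplications on $R\times Q_{n-1}$ line up with those fixed for $Q_n$. Once these identifications are spelled out, both parts of the argument are routine.
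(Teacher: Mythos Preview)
Your proposal is correct and follows essentially the same approach as the paper: a direct check that $\tau_0$ is \tame because $\rho$ is, followed by induction on $n$ invoking Lemma~\ref{lem:copelrat} with $\Ds=\pol{\Cs}$ (a \pvari closed under concatenation by Theorem~\ref{thm:polclos}), $\alpha=\rho_*$, and the quasi-\mratm $\tau_{n-1}$. The paper's proof is terser, but your added bookkeeping (recording $\quasi{\tau_0}=\mathrm{id}_{Q_0}$, checking that $\rho_*$ is a monoid morphism into $(R,\cdot)$, and matching the lifted multiplications on $Q_n$) is exactly what is implicit there.
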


\begin{proof}
  Recall that $Q_0 = 2^R$ and $\tau_0(K) = \{\rho(w) \mid w \in K\}$ for every language $K$. It is immediate that $\tau_0$ is \tame since $\rho$ is \tame by hypothesis. The result for the \ratms $\tau_n$ for $n \geq 1$ is then immediate from Lemma~\ref{lem:copelrat} using induction on $n$ since \pol{\Cs} is a \pvari closed under concatenation by Theorem~\ref{thm:polclos}.
\end{proof}

We complete this result with two lemmas which connect the hypothesis that the \ratms $\tau_n$ are quasi-\tame with the maps $f_n: Q_n \to 2^R$.

\begin{lemma} \label{lem:fmapmu}
  For every $n \in \nat$ and $T \in Q_n$, we have $f_n(T) \subseteq f_n(\quasi{\tau_n}(T))$.
\end{lemma}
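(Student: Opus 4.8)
The plan is to prove the inclusion by induction on $n \in \nat$, mirroring the recursive definitions of $f_n$ and of the endomorphisms $\quasi{\tau_n}$ given in Lemma~\ref{lem:squasitame}. For the base case $n = 0$, Lemma~\ref{lem:squasitame} states that $\tau_0$ is a true \mratm, so by Remark~\ref{rem:tameisquasi} we may (and do) take $\quasi{\tau_0}$ to be the identity on $Q_0$; then $\quasi{\tau_0}(T) = T$ and the inclusion $f_0(T) \subseteq f_0(\quasi{\tau_0}(T))$ holds trivially, in fact as an equality.

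For the inductive step with $n \geq 1$, I would assume $f_{n-1}(U) \subseteq f_{n-1}(\quasi{\tau_{n-1}}(U))$ for every $U \in Q_{n-1}$, fix $T \in Q_n = 2^{R \times Q_{n-1}}$, and take $s \in f_n(T)$. Unravelling the definition of $f_n$ provides pairs $(r_1,T_1),\dots,(r_k,T_k) \in T$ with $s \leq r_1 + \cdots + r_k$ and $r_1 + \cdots + r_k \in f_{n-1}(T_i)$ for every $i \leq k$. The description of $\quasi{\tau_n}$ from Lemma~\ref{lem:squasitame} then shows $(r_i,\quasi{\tau_{n-1}}(T_i)) \in \quasi{\tau_n}(T)$ for each $i$, since this pair lies in the set $\{(r,\quasi{\tau_{n-1}}(V)) \mid (r,V) \in T\}$, hence in its $\dclosp{Q_{n-1}}$-closure. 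Applying the induction hypothesis to each $T_i$ gives $r_1 + \cdots + r_k \in f_{n-1}(T_i) \subseteq f_{n-1}(\quasi{\tau_{n-1}}(T_i))$, so feeding the family $(r_1,\quasi{\tau_{n-1}}(T_1)),\dots,(r_k,\quasi{\tau_{n-1}}(T_k))$ into the definition of $f_n$ witnesses $r_1 + \cdots + r_k \in f_n(\quasi{\tau_n}(T))$; finally, since $f_n(\quasi{\tau_n}(T))$ is downward closed by the outer $\dclosr$ in the definition of $f_n$ and $s \leq r_1 + \cdots + r_k$, I obtain $s \in f_n(\quasi{\tau_n}(T))$, completing the induction.

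The argument is essentially routine once the definitions are unfolded; the only point requiring genuine attention is the bookkeeping around the two downset operations $\dclosr$ and $\dclosp{Q_{n-1}}$ buried in the definitions of $f_n$ and $\quasi{\tau_n}$ --- in particular verifying that $(r_i,\quasi{\tau_{n-1}}(T_i))$ truly belongs to $\quasi{\tau_n}(T)$, and that the quantity to propagate through the recursion is the sum $r_1 + \cdots + r_k$ rather than $s$ itself, with downward closure invoked only at the very end. I do not anticipate a substantial obstacle.
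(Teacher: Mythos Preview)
Your proposal is correct and follows essentially the same approach as the paper's proof: induction on $n$, with the base case handled by $\quasi{\tau_0}$ being the identity, and the inductive step by unfolding $f_n$, pushing each $(r_i,T_i)$ to $(r_i,\quasi{\tau_{n-1}}(T_i)) \in \quasi{\tau_n}(T)$, and applying the induction hypothesis to each $T_i$. The only cosmetic difference is that the paper concludes $r \in f_n(\quasi{\tau_n}(T))$ in one step (the outer $\dclosr$ in the definition of $f_n$ already absorbs $r \leq r_1+\cdots+r_k$), whereas you first place the sum in $f_n(\quasi{\tau_n}(T))$ and then invoke downward closure explicitly.
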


\begin{proof}
  We proceed by induction on $n \in \nat$. When $n = 0$, $\tau_0$ is \tame and the endomorphism \quasi{\tau_0} of $(Q_0,\cup)$ is the identity on $Q_0$. Hence, the lemma is immediate. Assume now that $n \geq 1$. Consider $T \in Q_n$. Let $r \in f_n(T) \subseteq R$. By definition, we have $(r_1,T_1),\dots,(r_k,T_k) \in T$ such that $r_1 + \cdots + r_k \in f_{n-1}(T_i)$ for every $i \leq k$ and $r \leq r_1 + \cdots + r_k$. By definition of $\quasi{\tau_n}$ in Lemma~\ref{lem:squasitame}, we have $(r_1,\quasi{\tau_{n-1}}(T_1)),\dots,(r_k,\quasi{\tau_{n-1}}(T_k)) \in \quasi{\tau_n}(T)$. Moreover, since $r_1 + \cdots + r_k \in f_{n-1}(T_i)$ for every $i \leq k$ it is immediate by induction hypothesis that $r_1 + \cdots + r_k \in f_{n-1}(\quasi{\tau_{n-1}}(T_i))$ for every $i \leq k$. Altogether, we obtain that $r \in f_n(\quasi{\tau_n}(T))$, finishing the proof of the lemma.
\end{proof}

\begin{lemma} \label{lem:fmapmult}
  For every $n \in \nat$ and $T,T' \in Q_n$, we have $f_n(T) \cdot f_n(T') \subseteq  f_n(T \cdot T')$.
\end{lemma}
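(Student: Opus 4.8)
The plan is to prove Lemma~\ref{lem:fmapmult} by induction on $n \in \nat$, exploiting the fact that here $R$ is a genuine idempotent semiring $(R,+,\cdot)$: the canonical order $\leq$ is compatible with multiplication, and multiplication distributes over addition. These two facts lift to every rating set $Q_n$ through the componentwise multiplication, so the argument is the same at each level modulo the recursive structure of $f_n$.

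For the base case $n=0$, where $Q_0 = 2^R$ and $f_0(T) = \dclosr\{r_1 + \cdots + r_k \mid r_1,\dots,r_k \in T\}$, I would take $s \in f_0(T)$ and $s' \in f_0(T')$, choose witnesses $s \leq r_1 + \cdots + r_k =: p$ with all $r_i \in T$ and $s' \leq r'_1 + \cdots + r'_\ell =: p'$ with all $r'_j \in T'$, observe that compatibility of $\leq$ with multiplication gives $ss' \leq pp'$, and that distributivity gives $pp' = \sum_{i\leq k,\,j\leq\ell} r_i r'_j$, a sum of elements $r_i r'_j$ each lying in $TT'$ by definition of the product on $Q_0$. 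Hence $pp' \in f_0(TT')$, and closure of $f_0(TT')$ under downset yields $ss' \in f_0(TT')$.

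For the inductive step $n\geq 1$, with $Q_n = 2^{R\times Q_{n-1}}$ and the componentwise product $TT' = \{(rr',VV') \mid (r,V)\in T,\ (r',V')\in T'\}$, I would again expand $s \leq p = r_1+\cdots+r_k$ and $s' \leq p' = r'_1+\cdots+r'_\ell$, this time with $(r_i,T_i)\in T$ and $p \in f_{n-1}(T_i)$ for all $i$, and symmetrically $(r'_j,T'_j)\in T'$ and $p' \in f_{n-1}(T'_j)$ for all $j$. The key point is that each pair $(r_i r'_j, T_i T'_j)$ belongs to $TT'$, that $\sum_{i,j} r_i r'_j = pp'$ by distributivity, and that $pp' \in f_{n-1}(T_i)\cdot f_{n-1}(T'_j) \subseteq f_{n-1}(T_i T'_j)$ by the induction hypothesis. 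Feeding this doubly-indexed family into the definition of $f_n$ gives $pp' \in f_n(TT')$, and then $ss' \leq pp'$ together with the downset closure of $f_n(TT')$ concludes.

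The proof is essentially routine bookkeeping; the two things to be careful about are (i) checking that the canonical order on $R$ is compatible with multiplication (noted for idempotent semirings in Section~\ref{sec:covers}) and that this compatibility, along with distributivity, is inherited by the componentwise structure on each $Q_n$, and (ii) matching the doubly-indexed family $\{(r_i r'_j,\,T_i T'_j)\}_{i,j}$ against the single-index existential quantifier in the definition of $f_n$ — one must note that the relevant "sum of first components" is the single element $pp'$, common to all indices. Neither is a genuine obstacle.
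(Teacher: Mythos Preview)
Your proposal is correct and follows essentially the same approach as the paper's proof: induction on $n$, with the base case handled by distributivity and compatibility of $\leq$ with multiplication, and the inductive step handled by applying the induction hypothesis to each pair $(T_i,T'_j)$ to obtain $pp' \in f_{n-1}(T_iT'_j)$ from $p\in f_{n-1}(T_i)$ and $p'\in f_{n-1}(T'_j)$, then feeding the doubly-indexed family into the definition of $f_n$. The two points you flag as needing care are exactly the ones the paper handles implicitly, and your treatment of them is accurate.
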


\begin{proof}
  We proceed by induction on $n \in \nat$. We first handle the case $n = 0$. Let $r \in f_0(T) \cdot f_0(T')$. We have $s \in f_0(T)$ and $s' \in f_0(T')$ such that $r = ss'$. By definition, this yields $r_1,\dots,r_k \in T$ and $r'_1,\dots,r'_{k'} \in T'$ such that $s \leq r_1 + \cdots + r_k$ and $s' \leq r'_1 + \cdots + r'_{k'}$. It follows that $ss' \leq \sum_{i \leq k} \sum_{j \leq k'} r_ir'_j$. Since $r_ir'_j \in T \cdot T'$ for every $i \leq k$ and $j \leq k'$, it follows that $ss' \in f_0(T \cdot T')$.

  Assume now that $n \geq 1$. Let $r \in f_n(T) \cdot f_n(T')$. We have $s \in f_n(T)$ and $s' \in f_n(T')$ such that $r = ss'$. By definition, this yields $(r_1,T_1),\dots,(r_k,T_k) \in T$ and $(r'_1,T'_1),\dots,(r'_{k'},T'_{k'}) \in T'$ such that,
  \begin{itemize}
  \item $s \leq r_1 + \cdots + r_k$ and $r_1 + \cdots + r_k \in f_{n-1}(T_i)$ for every $i \leq k$.
  \item $s' \leq r'_1 + \cdots + r'_{k'}$ and $r'_1 + \cdots + r'_{k'} \in f_{n-1}(T'_j)$ for every $j \leq k'$.
  \end{itemize}
  Clearly, we have $ss' \leq \sum_{i \leq k} \sum_{j \leq k'} r_ir'_j$. Moreover, for every $i \leq k$ and $j \leq k'$, we have,
  \[
    \sum_{i \leq k} \sum_{j \leq k'} r_ir'_j \in f_{n-1}(T_i) \cdot f_{n-1}(T'_j).
  \]
  By induction hypothesis, this yields,
  \[
    \sum_{i \leq k} \sum_{j \leq k'} r_ir'_j \in f_{n-1}(T_i \cdot T'_j).
  \]
  Finally, it is immediate that for every $i \leq k$ and $j \leq k'$, we have $(r_ir'_j,T_i \cdot T'_j) \in T \cdot T'$. Altogether, this yields $ss' \in f_n(T \cdot T')$ by definition.
\end{proof}

\medskip
\noindent
{\bf Proof of Proposition~\ref{prop:sound2}.} We now turn to the main argument. We fix $S \subseteq (\sclac) \times R$ which is \bpol{\Cs}-saturated (for $\rho$). We have to show that $S(D) \subseteq f_n(\tau_n(D))$ for every $n \in \nat$ and $D \in \sclac$. The argument is an induction on $n \in \nat$.

\smallskip
\noindent {\it Base case.} Assume that $n = 0$ and let $D \in \sclac$. We show that $S(D) \subseteq f_0(\tau_0(D))$. Let \mbox{$r \in S(D)$}, \emph{i.e.}, $(D,r) \in S$. Since $S$ is \bpol{\Cs}-saturated, we get from~\eqref{eq:bpol:sat} that there exist $r_1,\dots,r_k \in R$ such that $r \leq r_1 + \cdots + r_k$ and \mbox{$(D,r_i,\{r_1 + \cdots + r_k\}) \in \rbpols$} for every \mbox{$i \leq k$}.  For $i \leq k$, one may verify from the definition of \rbpols that since \mbox{$(D,r_i,\{r_1 + \cdots + r_k\}) \in \rbpols$}, we have $w_i \in A^*$ such that $\ctype{w_i} = D$ and $\rho(w_i) = r_i$. In particular, $w_1,\dots,w_k \in D$ and it is therefore immediate from the definition of $\tau_0$ that $r_1,\dots,r_k \in \tau_0(D)$. Since $r \leq r_1 + \cdots + r_k$, we obtain $r \in f_0(\tau_0(D))$ by definition of $f_0$, which concludes the proof.

\smallskip
\noindent
{\it Inductive step.} We now assume that $n \geq 1$. The argument is based on the following lemma, which is where we use the characterization of \pol{\Cs}-optimal \imprints (\emph{i.e.} Theorem~\ref{thm:half:mainpolc}): we apply it to the quasi-\mratm $\tau_{n-1}$. Moreover, this is also where we apply induction on $n$ in Proposition~\ref{prop:sound2}.

\begin{lemma} \label{lem:bpol:soundopti}
  For all $(D,q,U) \in \rbpols$, we have $T \in Q_{n-1}$ satisfying the following properties:
  \[
    \text{$T \in \opti{\pol{\Cs}}{\rho_*\inv(q) \cap D,\tau_{n-1}}$ and $U \subseteq f_{n-1}(T)$}.
  \]
\end{lemma}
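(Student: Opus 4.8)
The plan is to prove Lemma~\ref{lem:bpol:soundopti} by induction on the derivation of the triple $(D,q,U)\in\rbpols$, the four cases being the four clauses defining \rbpols. Throughout I use that $\tau_{n-1}$ is quasi-\tame for the endomorphism $\quasi{\tau_{n-1}}$ (Lemma~\ref{lem:squasitame}; when $n-1=0$ this endomorphism is the identity), that \pol{\Cs} is a \pvari closed under concatenation (Theorem~\ref{thm:polclos}), and the three properties of the maps $f_{n-1}$: monotonicity for $\subseteq$ (Fact~\ref{fct:fmapinc}), $f_{n-1}(T)\subseteq f_{n-1}(\quasi{\tau_{n-1}}(T))$ (Lemma~\ref{lem:fmapmu}), and $f_{n-1}(T)\cdot f_{n-1}(T')\subseteq f_{n-1}(TT')$ (Lemma~\ref{lem:fmapmult}). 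I also need the routine sub-multiplicativity property $\quasi{\tau_m}(T)\cdot\quasi{\tau_m}(T')\subseteq\quasi{\tau_m}(TT')$, which follows by a straightforward induction on $m$ from the explicit formula for $\quasi{\tau_m}$ in Lemma~\ref{lem:squasitame}.

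For a \emph{trivial element} $(D,q,U)=(\ctype{w},\rho(w),\{\rho(w)\})$, note $w\in L:=\rho_*\inv(q)\cap D$. I first establish the auxiliary fact that $v\in K$ implies $\rho(v)\in f_m(\tau_m(K))$ for every $m\in\nat$ and every language $K$: this is an induction on $m$, the base case being immediate from the definitions of $\tau_0$ and $f_0$, and the step consisting in picking a block $K'$ of an optimal \pol{\Cs}-cover of $\rho_*\inv(\rho(v))\cap K$ for $\tau_{m-1}$ with $v\in K'$, so that the induction hypothesis gives $\rho(v)\in f_{m-1}(\tau_{m-1}(K'))$ and $(\rho(v),\tau_{m-1}(K'))\in\tau_m(K)$. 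Then, choosing a block $K$ of an optimal \pol{\Cs}-cover \Kb of $L$ for $\tau_{n-1}$ with $w\in K$, the witness $T:=\tau_{n-1}(K)$ works, since $T\in\prin{\tau_{n-1}}{\Kb}=\opti{\pol{\Cs}}{L,\tau_{n-1}}$ and $\{\rho(w)\}\subseteq f_{n-1}(T)$ by the auxiliary fact. The \emph{extended downset} case is immediate: the witness $T$ for $(D,q,U_0)$ still works for $(D,q,V)$ with $V\subseteq\dclosr U_0$, because $f_{n-1}(T)$ is downward closed. For the \emph{multiplication} case $(D,q,U)=(C_1\cmult C_2,\,q_1q_2,\,U_1U_2)$, the induction hypothesis gives $T_i\in\opti{\pol{\Cs}}{\rho_*\inv(q_i)\cap C_i,\tau_{n-1}}$ with $U_i\subseteq f_{n-1}(T_i)$; since $\rho_*$ and $w\mapsto\ctype{w}$ are morphisms, $(\rho_*\inv(q_1)\cap C_1)\cdot(\rho_*\inv(q_2)\cap C_2)\subseteq\rho_*\inv(q_1q_2)\cap(C_1\cmult C_2)$, so Lemma~\ref{lem:closmult} (applied to the quasi-\mratm $\tau_{n-1}$ over the \pvari \pol{\Cs}) and Fact~\ref{fct:inclus2} give $T:=\quasi{\tau_{n-1}}(T_1T_2)\in\opti{\pol{\Cs}}{\rho_*\inv(q_1q_2)\cap(C_1\cmult C_2),\tau_{n-1}}$, while $U_1U_2\subseteq f_{n-1}(T_1)f_{n-1}(T_2)\subseteq f_{n-1}(T_1T_2)\subseteq f_{n-1}(T)$ by Lemmas~\ref{lem:fmapmult} and~\ref{lem:fmapmu}.

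The \emph{$S$-restricted closure} case is the heart of the argument and the place where Theorem~\ref{thm:half:mainpolc} is used. Here $D=E$, $q=f$ and $U=F\cdot S(E)\cdot F$ with $(E,f,F)\in\rbpols$ a triple of idempotents. Apply Theorem~\ref{thm:half:mainpolc} to the morphism $\beta=(\rho_*,\,w\mapsto\ctype{w}):A^*\to R\times\sclac$, which is \Cs-compatible since $\beta\inv((r,C))\subseteq C$ and $C$ is a \equc-class, and to the quasi-\mratm $\tau_{n-1}$: writing $S_{\mathrm{sat}}$ for the least \pol{\Cs}-saturated subset of $(R\times\sclac)\times Q_{n-1}$, the theorem identifies $\opti{\pol{\Cs}}{\rho_*\inv(f)\cap E,\tau_{n-1}}$ with the downset of $\{\quasi{\tau_{n-1}}(V)\mid((f,E),V)\in S_{\mathrm{sat}}\}$. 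The induction hypothesis gives $T_0$ in this imprint with $F\subseteq f_{n-1}(T_0)$, hence some $\tilde{T}$ with $((f,E),\tilde{T})\in S_{\mathrm{sat}}$ and $T_0\subseteq\quasi{\tau_{n-1}}(\tilde{T})$. As $(f,E)$ is a multiplicative idempotent and $Q_{n-1}$ is finite, closure of $S_{\mathrm{sat}}$ under multiplication gives $((f,E),g)\in S_{\mathrm{sat}}$ where $g:=\tilde{T}^{p}$ is an idempotent power, and then \pol{\Cs}-closure (using that the \Cs-class attached to $(f,E)$ is $E\in\Cs$) gives $((f,E),\,g\cdot\tau_{n-1}(E)\cdot g)\in S_{\mathrm{sat}}$. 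I take $T:=\quasi{\tau_{n-1}}(g\cdot\tau_{n-1}(E)\cdot g)$, which therefore lies in $\opti{\pol{\Cs}}{\rho_*\inv(f)\cap E,\tau_{n-1}}$. For $U\subseteq f_{n-1}(T)$: using $F$ idempotent, $F\subseteq f_{n-1}(T_0)$, Lemma~\ref{lem:fmapmult}, $T_0\subseteq\quasi{\tau_{n-1}}(\tilde{T})$ and sub-multiplicativity of $\quasi{\tau_{n-1}}$, one gets $F\subseteq f_{n-1}(T_0^{p})\subseteq f_{n-1}(\quasi{\tau_{n-1}}(\tilde{T})^{p})\subseteq f_{n-1}(\quasi{\tau_{n-1}}(g))$; next, $S(E)\subseteq f_{n-1}(\tau_{n-1}(E))$ is exactly Proposition~\ref{prop:sound2} for the index $n-1$ applied to $E\in\sclac$ (the one point where the outer induction on $n$ enters); finally, Axiom~\ref{itm:nest:qt1} applied twice gives $\quasi{\tau_{n-1}}(\quasi{\tau_{n-1}}(g)\cdot\tau_{n-1}(E)\cdot\quasi{\tau_{n-1}}(g))=\quasi{\tau_{n-1}}(g\cdot\tau_{n-1}(E)\cdot g)=T$, so by Lemma~\ref{lem:fmapmu}, $f_{n-1}(\quasi{\tau_{n-1}}(g)\cdot\tau_{n-1}(E)\cdot\quasi{\tau_{n-1}}(g))\subseteq f_{n-1}(T)$. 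Combining these three inputs with two more applications of Lemma~\ref{lem:fmapmult} yields $U=F\cdot S(E)\cdot F\subseteq f_{n-1}(T)$, as required.

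I expect the $S$-restricted closure case to be the main obstacle: once Theorem~\ref{thm:half:mainpolc} has replaced the optimal imprint by the downset of the $\quasi{\tau_{n-1}}$-image of the concrete saturated set $S_{\mathrm{sat}}$, one has to feed an honest idempotent into the \pol{\Cs}-closure rule and then transport the resulting element back through both $\quasi{\tau_{n-1}}$ and $f_{n-1}$ while keeping every inclusion coherent; this is precisely why the sub-multiplicativity of $\quasi{\tau_{n-1}}$, the identity $\quasi{\tau_{n-1}}(q\,\quasi{\tau_{n-1}}(r)\,s)=\quasi{\tau_{n-1}}(qrs)$, and the multiplicativity of $f_{n-1}$ all get used together. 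The other three cases, and the auxiliary fact for the trivial case, are routine inductions.
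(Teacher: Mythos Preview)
Your proof is correct, but the paper organizes the argument differently and this buys some simplification. The paper applies Theorem~\ref{thm:half:mainpolc} \emph{once, globally}: it introduces an auxiliary statement (Lemma~\ref{lem:bpol:subres}) asserting that for every $(D,q,U)\in\rbpols$ there exists $P\in Q_{n-1}$ with $(D,q,P)$ in the \emph{least \pol{\Cs}-saturated set} $X$ for $\alpha$ and $\tau_{n-1}$, and $U\subseteq f_{n-1}(P)$. All four inductive cases are then handled entirely inside $X$ using its closure properties (multiplication, \pol{\Cs}-closure), and only at the very end is $T:=\quasi{\tau_{n-1}}(P)$ pushed into the optimal imprint via Theorem~\ref{thm:half:mainpolc} and Lemma~\ref{lem:fmapmu}. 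You instead stay in the optimal imprint throughout: the trivial and multiplication cases are dealt with directly (for multiplication you use Lemma~\ref{lem:closmult} on the imprint rather than closure of $X$), and you invoke Theorem~\ref{thm:half:mainpolc} \emph{locally} only in the $S$-restricted closure case, where you must translate from the imprint back to $S_{\mathrm{sat}}$, manufacture an idempotent, and translate forward again. This is why you need the extra sub-multiplicativity fact $\quasi{\tau_{n-1}}(T)\cdot\quasi{\tau_{n-1}}(T')\subseteq\quasi{\tau_{n-1}}(TT')$ to bridge $T_0^{p}\subseteq\quasi{\tau_{n-1}}(\tilde T)^{p}\subseteq\quasi{\tau_{n-1}}(g)$; the paper never needs it because its witness $V$ already lives in $X$, so one takes an idempotent power of $V$ directly and applies Lemma~\ref{lem:fmapmult} with $F^{n}\subseteq f_{n-1}(V)^{n}\subseteq f_{n-1}(V^{n})$. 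Both routes reach the same destination; the paper's is tidier (one application of Theorem~\ref{thm:half:mainpolc}, no ad hoc lemma on $\quasi{\tau_{n-1}}$), while yours is more self-contained in the sense that three of the four cases never leave the language of optimal imprints.
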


We start by explaining how Lemma~\ref{lem:bpol:soundopti} can be applied to complete the main proof. Consider $D \in \sclac$. We show that $S(D) \subseteq f_n(\tau_n(D))$. Let $r \in S(D)$ (\emph{i.e.}, $(D,r) \in S$). We prove that $r \in f_n(\tau_n(D))$.

Since $S$ is \bpol{\Cs}-saturated, we get from~\eqref{eq:bpol:sat} that there exist $r_1,\dots,r_k \in R$ such that $r \leq r_1 + \cdots + r_k$ and $(D,r_i,\{r_1 + \cdots + r_k\}) \in \rbpols$ for every $i \leq k$. Lemma~\ref{lem:bpol:soundopti} then yields $T_i \in Q_{n-1}$ such that $T_i \in \opti{\pol{\Cs}}{\rho_*\inv(r_i) \cap D,\tau_{n-1}}$ and \mbox{$r_1 + \cdots + r_k \in f_{n-1}(T_i)$} for every $i \leq k$. By definition, $\tau_n$ is the \ratm $\lrataux{\pol{\Cs}}{\rho_*}{\tau_{n-1}}: 2^{A^*} \to 2^{R\times Q_{n-1}}$. Thus, if we unravel the definition, the hypothesis that $T_i \in \opti{\pol{\Cs}}{\rho_*\inv(r_i) \cap D,\tau_{n-1}}$ exactly says that,
\[
  (r_i,U_i) \in \tau_n(D).
\]
Since we also have $r \leq r_1 + \cdots + r_k$ and $r_1 + \cdots + r_k \in f_{n-1}(T_i)$ for every $i \leq k$, it is immediate by definition of $f_n$ that $r \in  f_n(\tau_n(D))$, finishing the proof of Proposition~\ref{prop:sound2}. It remains to prove Lemma~\ref{lem:bpol:soundopti}.

\newcommand{\poctau}{\popti{\pol{\Cs}}{\alpha}{\tau_{n-1}}}
\begin{proof}[Proof of Lemma~\ref{lem:bpol:soundopti}]
  We first apply Theorem~\ref{thm:half:mainpolc} and then use the result to prove the lemma. Clearly, the Cartesian product $(\sclac) \times R$ is a monoid when equipped with the componentwise multiplication. Let $\alpha: A^* \to (\sclac) \times R$ be the morphism defined by $\alpha(w) = (\ctype{w},\rho(w))$. Clearly, $\alpha$ is a \Cs-compatible morphism: for every pair $(D,r) \in (\sclac) \times R$, it suffices to define $\ctype{(D,r)} = D$. Consequently, since we also know that $\tau_{n-1}$ is quasi-\tame by Lemma~\ref{lem:squasitame}, we may apply Theorem~\ref{thm:half:mainpolc} to obtain a description of the set $\poctau \subseteq (\sclac) \times R \times Q_{n-1}$. Consider the least \pol{\Cs}-saturated subset $X$ of $(\sclac) \times R \times Q_{n-1}$ for $\alpha$ and $\tau_{n-1}$. Theorem~\ref{thm:half:mainpolc} yields that,
  \begin{equation} \label{eq:bpol:applipolcs}
    \poctau = \dclos_{Q_{n-1}} \big\{(D,q,\quasi{\tau_{n-1}}(P)) \mid (D,q,P) \in X\big\}.
  \end{equation}
  The proof of Lemma~\ref{lem:bpol:soundopti} is now based on the following lemma which we shall prove by induction on the construction of an element of \rbpols.

  \begin{lemma} \label{lem:bpol:subres}
    For every $(D,q,U) \in \rbpols$. There exists $P \in Q_{n-1}$ such that $(D,q,P) \in X$ and $U \subseteq f_{n-1}(P)$.
  \end{lemma}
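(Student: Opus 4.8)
The plan is to prove Lemma~\ref{lem:bpol:subres} by induction on the construction of $(D,q,U) \in \rbpols$, handling one case for each of the four rules defining $\rbpols$ (trivial elements, extended downset, multiplication, and $S$-restricted closure). The ingredients are: (i) that $X$, being the least \pol{\Cs}-saturated subset of $(\sclac) \times R \times Q_{n-1}$ for $\alpha$ and $\tau_{n-1}$, contains every trivial element $(\alpha(w),\tau_{n-1}(w)) = (\ctype{w},\rho(w),\tau_{n-1}(w))$ and is closed under downset, componentwise multiplication and \pol{\Cs}-closure; (ii) that $f_{n-1}$ is downward closed (by definition) and satisfies $f_{n-1}(T) \cdot f_{n-1}(T') \subseteq f_{n-1}(T \cdot T')$ by Lemma~\ref{lem:fmapmult}; and (iii) the auxiliary observation that $\rho(w) \in f_m(\tau_m(w))$ for every $m \in \nat$ and every $w \in A^*$ (an easy induction on $m$: the base case is immediate from the definitions of $\tau_0$ and $f_0$, and for $m \geq 1$ one has $(\rho(w),\tau_{m-1}(w)) \in \tau_m(w)$ since $\tau_{m-1}(w) \in \opti{\pol{\Cs}}{\{w\},\tau_{m-1}}$, after which the inductive hypothesis applies).

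The three routine cases go as follows. If $(D,q,U) = (\ctype{w},\rho(w),\{\rho(w)\})$ is a trivial element, take $P = \tau_{n-1}(w) \in X$; then $\{\rho(w)\} \subseteq f_{n-1}(\tau_{n-1}(w))$ by observation (iii). If $(D,q,U)$ comes from $(D,q,U') \in \rbpols$ with $U \subseteq \dclosr U'$, induction gives $P$ with $(D,q,P) \in X$ and $U' \subseteq f_{n-1}(P)$, and the same $P$ works because $U \subseteq \dclosr U' \subseteq \dclosr f_{n-1}(P) = f_{n-1}(P)$. If $(D,q,U) = (C \cmult D', qr, U_1U_2)$ comes from $(C,q,U_1),(D',r,U_2) \in \rbpols$, induction gives $(C,q,P_1),(D',r,P_2) \in X$ with $U_i \subseteq f_{n-1}(P_i)$; closure of $X$ under multiplication gives $(C \cmult D', qr, P_1 \cdot P_2) \in X$, and $U_1 U_2 \subseteq f_{n-1}(P_1) \cdot f_{n-1}(P_2) \subseteq f_{n-1}(P_1 \cdot P_2)$ by Lemma~\ref{lem:fmapmult}.

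The only delicate case — and the one I expect to be the main obstacle — is $S$-restricted closure, where $(D,q,U) = (E,f,F \cdot S(E) \cdot F)$ is obtained from an idempotent triple $(E,f,F) \in \rbpols$. Induction gives $P$ with $(E,f,P) \in X$ and $F \subseteq f_{n-1}(P)$. Since $Q_{n-1}$ is finite, I pick $m \geq 1$ with $g := P^m$ idempotent in $(Q_{n-1},\cdot)$; as $(E,f)$ is idempotent in $(\sclac) \times R$, closure of $X$ under multiplication gives $(E,f,g) \in X$, and then \pol{\Cs}-closure of $X$ applied to the pair of idempotents $((E,f),g)$ — recalling that $\ctype{(E,f)} = E$, so $\tau_{n-1}(\ctype{(E,f)}) = \tau_{n-1}(E)$ — yields $P'' := g \cdot \tau_{n-1}(E) \cdot g$ with $(E,f,P'') \in X$. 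The remaining task is to verify $F \cdot S(E) \cdot F \subseteq f_{n-1}(P'')$, which I obtain from three inclusions combined through Lemma~\ref{lem:fmapmult}: first, $F = F^m \subseteq f_{n-1}(P)^m \subseteq f_{n-1}(P^m) = f_{n-1}(g)$, using that $F$ is idempotent; second, $S(E) \subseteq f_{n-1}(\tau_{n-1}(E))$, which is exactly the enclosing induction hypothesis on $n$ in Proposition~\ref{prop:sound2} applied to the $\sim_{\Cs}$-class $E$; and third, $f_{n-1}(g) \cdot f_{n-1}(\tau_{n-1}(E)) \cdot f_{n-1}(g) \subseteq f_{n-1}(g \cdot \tau_{n-1}(E) \cdot g) = f_{n-1}(P'')$ by two applications of Lemma~\ref{lem:fmapmult}. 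The subtle points to get right are that one must first pass to an idempotent power $P^m$ before \pol{\Cs}-closure of $X$ becomes applicable, that this same power is precisely what lets Lemma~\ref{lem:fmapmult} absorb the idempotent $F$ into $f_{n-1}(g)$, and that the factor $S(E)$ is controlled not by the current structural induction but by the outer induction on $n$ in Proposition~\ref{prop:sound2}.
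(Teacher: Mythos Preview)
Your proof is correct and follows essentially the same approach as the paper: structural induction on the construction of $(D,q,U)\in\rbpols$, with the trivial case handled via the auxiliary fact $\rho(w)\in f_m(\tau_m(w))$, the downset and multiplication cases handled directly via closure of $X$ and Lemma~\ref{lem:fmapmult}, and the $S$-restricted closure case handled by passing to an idempotent power of $P$ in $Q_{n-1}$, applying \pol{\Cs}-closure of $X$, and invoking the outer induction on $n$ in Proposition~\ref{prop:sound2} to control $S(E)$. The only cosmetic slip is writing ``$P=\tau_{n-1}(w)\in X$'' when you mean $(\ctype{w},\rho(w),\tau_{n-1}(w))\in X$.
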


  Let us first use Lemma~\ref{lem:bpol:subres} to conclude the proof of Lemma~\ref{lem:bpol:soundopti}. Let $(D,q,U) \in \rbpols$. We have to exhibit $T \in Q_{n-1}$ such that $(D,q,T) \in \poctau$ and $U \subseteq f_{n-1}(T)$.

  Using Lemma~\ref{lem:bpol:subres}, we get $P \in Q_{n-1}$ such that $(D,q,P) \in X$ and $U \subseteq f_{n-1}(P)$. In view of~\eqref{eq:bpol:applipolcs}, this yields,
  \[
    (D,q,\quasi{\tau_{n-1}}(P))  \in \poctau.
  \]
  By definition of the set \poctau and of the morphism $\alpha$, this exactly says that,
  \[
    \quasi{\tau_{n-1}}(P) \in \opti{\pol{\Cs}}{\rho_*\inv(q) \cap D,\tau_{n-1}}.
  \]
  Moreover, Lemma~\ref{lem:fmapmu} yields that $U \subseteq f_{n-1}(P) \subseteq f_{n-1}(\quasi{\tau_{n-1}}(P))$. Hence, the desired property holds for $T = \quasi{\tau_{n-1}}(P)$ and we are finished.

  \medskip

  It remains to prove Lemma~\ref{lem:bpol:subres}. Consider $(D,q,U) \in \rbpols$. By definition of \rbpols, we know that $(D,q,U) \in \rbpols$ is built from trivial elements using three operations: extended downset, multiplication and $S$-restricted closure. We proceed by induction on this construction. There are four cases depending on the last operation used to build $(D,q,U) \in \rbpols$.

  \medskip
  \noindent
  {\bf Base case: trivial elements.} In that case, there exists $w \in A^*$ such that $D = \ctype{w}$, $q = \rho(w)$ and $U = \{\rho(w)\}$. Since $X$ is \pol{\Cs}-saturated for $\rho_*$ and $\tau_{n-1}$, we know that $(\ctype{w},\rho(w),\tau_{n-1}(w)) \in X$. Hence, it remains to prove that $U = \{\rho(w)\} \subseteq f_{n-1}(\tau_{n-1}(w))$. It will then be immediate that Lemma~\ref{lem:bpol:subres} holds for $P = \tau_{n-1}(w)$. This is immediate from the following fact.

  \begin{fct} \label{fct:bpol:fmaptriv}
    For every $m \in \nat$, we have $\rho(w) \in f_m(\tau_m(w))$.
  \end{fct}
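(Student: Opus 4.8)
The plan is to prove Fact~\ref{fct:bpol:fmaptriv} by induction on $m \in \nat$. For the base case $m = 0$, I would simply unfold the definitions: by definition of $\tau_0$ we have $\tau_0(w) = \tau_0(\{w\}) = \{\rho(w)\}$, and then taking the single summand $r_1 = \rho(w)$ in the definition of $f_0$ gives $\rho(w) \in f_0(\{\rho(w)\}) = f_0(\tau_0(w))$.

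For the inductive step, fix $m \geq 1$ and assume as induction hypothesis that $\rho(w) \in f_{m-1}(\tau_{m-1}(w))$. The crux is the elementary observation that $\tau_{m-1}(w) \in \opti{\pol{\Cs}}{\{w\},\tau_{m-1}}$. To see this, note that \pol{\Cs} is a lattice by Theorem~\ref{thm:polclos}, so Lemma~\ref{lem:bgen:opt} provides an optimal \pol{\Cs}-cover \Kb of $\{w\}$ for $\tau_{m-1}$, whence $\opti{\pol{\Cs}}{\{w\},\tau_{m-1}} = \prin{\tau_{m-1}}{\Kb}$. Since \Kb covers $\{w\}$, some $K \in \Kb$ contains $w$; as $\{w\} \subseteq K$ and \ratms are increasing (Fact~\ref{fct:increas}), we get $\tau_{m-1}(w) \leq \tau_{m-1}(K)$, and thus $\tau_{m-1}(w) \in \prin{\tau_{m-1}}{\Kb}$ by definition of \imprints. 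Now, because $\rho_*(w) = \rho(w)$, we have $\rho_*\inv(\rho(w)) \cap \{w\} = \{w\}$, so this observation reads $\tau_{m-1}(w) \in \opti{\pol{\Cs}}{\rho_*\inv(\rho(w)) \cap \{w\},\tau_{m-1}}$; unraveling the definition of $\tau_m = \lrataux{\pol{\Cs}}{\rho_*}{\tau_{m-1}}$, this is exactly $(\rho(w),\tau_{m-1}(w)) \in \tau_m(w)$. Finally I would apply the definition of $f_m$ with $k = 1$ and $(r_1,T_1) = (\rho(w),\tau_{m-1}(w))$: indeed $(r_1,T_1) \in \tau_m(w)$, and $r_1 = \rho(w) \in f_{m-1}(\tau_{m-1}(w)) = f_{m-1}(T_1)$ by the induction hypothesis, so $\rho(w) \in f_m(\tau_m(w))$, closing the induction.

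I do not expect any real obstacle here: the whole argument is a matter of unwinding the definitions of $\tau_m$ and $f_m$, and the systematic use of a single summand ($k = 1$) keeps every step trivial. The only ingredient that is not pure bookkeeping is the remark that a word's image under a \ratm always lies in the optimal \imprint on the corresponding singleton, which is immediate from the existence of optimal covers (Lemma~\ref{lem:bgen:opt}) together with the downward closure of \imprints.
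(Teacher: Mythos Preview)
Your proof is correct and follows essentially the same route as the paper: induction on $m$, with the inductive step hinging on the observation that $(\rho(w),\tau_{m-1}(w)) \in \tau_m(w)$ together with a single-summand application of the definition of $f_m$. The only difference is cosmetic: you spell out explicitly, via Lemma~\ref{lem:bgen:opt} and monotonicity of \ratms, why $\tau_{m-1}(w) \in \opti{\pol{\Cs}}{\{w\},\tau_{m-1}}$, whereas the paper leaves this step implicit.
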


  \begin{proof}
    This is a simple induction on $m$. When $m = 0$, we have $\rho(w) \in \tau_0(w)$ by definition of $\tau_0$. It follows that $\rho(w) \in f_0(\tau_0(w))$ by definition of $f_0$. Assume now that $m \geq 1$. Recall that $\tau_m$ is $\lrataux{\pol{\Cs}}{\rho_*}{\tau_{m-1}}: 2^{A^*} \to 2^{R \times Q_{m-1}}$ by definition. Therefore,
    \[
      \tau_m(w)(\rho(w)) = \opti{\pol{\Cs}}{\rho_*\inv(\rho(w)) \cap \{w\},\tau_{m-1}} = \opti{\pol{\Cs}}{\{w\},\tau_{m-1}}
    \]
    Thus, $\tau_{m-1}(w) \in \tau_m(w)(\rho(w))$. Moreover induction yields $\rho(w) \in f_{m-1}(\tau_{m-1}(w))$. Therefore, by definition of $f_m$, we have $\rho(w) \in f_m(\tau_m(w))$.
  \end{proof}
\noindent
{\bf Inductive case~1: extended downset.} We have $(D,r,U') \in \rbpols$ and such that $U \subseteq \dclosr U'$. By induction, we obtain $P \in Q_{n-1}$ such that $(D,r,P) \in X$ and $U' \subseteq  f_{n-1}(P)$. Clearly, this implies $U \subseteq  \dclosr f_{n-1}(P)$. Moreover, $\dclosr f_{n-1}(P) = f_{n-1}(P)$ by definition. Therefore, $U \subseteq  f_{n-1}(P)$ which concludes this case.

 \medskip
  \noindent
  {\bf Inductive case~2: multiplication.} We have $(D_1,r_1,U_1) \in \rbpols$ and $(D_2,r_2,U_2) \in \rbpols$ such that $D = D_1 \cmult D_2$, $q = r_1r_2$ and $U = U_1U_2$. By induction, we obtain $P_1,P_2 \in Q_{n-1}$ such that $(D_1,r_1,P_1),(D_2,r_2,P_2) \in X$, $U_1 \subseteq  f_{n-1}(P_1)$ and $U_2 \subseteq  f_{n-1}(P_2)$. Since $X$ is \pol{\Cs}-saturated, it is closed under multiplication and we get,
  \[
    (D,q,P_1P_2) =  (D_1 \cmult D_2,r_1r_2,P_1P_2) \in X
  \]
  Moreover, Lemma~\ref{lem:fmapmult} yields that,
  \[
    U = U_1U_2 \subseteq  f_{n-1}(P_1) \cdot f_{n-1}(P_2) \subseteq  f_{n-1}(P_1P_2)
  \]
  Altogether, it follows that Lemma~\ref{lem:bpol:subres} holds for $P = P_1P_2$.

  \medskip
  \noindent
  {\bf Inductive case~3: $S$-restricted closure.} In that case, we have idempotents $(E,f,F) \in \rbpols$ such that $D = E$, $q = f$ and $U = F \cdot S(E) \cdot F$.

  By induction, we get $V \in Q_{n-1}$ such that $(E,f,V) \in X$ and $F \subseteq f_{n-1}(V)$. Since $Q_{n-1}$ is a finite monoid, there exists a number $n \geq 1$ such that $V^n$ is a multiplicative idempotent of $Q_{n-1}$. Since $(E,f,V) \in X$, and $X$ is closed under multiplication (as it is \pol{\Cs}-saturated), we obtain $(E,f,V^n) \in X$. We may again use the hypothesis that $X$ is \pol{\Cs}-saturated to apply \pol{\Cs}-closure and obtain,
  \[
    (E,f,V^n \cdot \tau_{n-1}(E) \cdot V^n) \in X.
  \]
  Since $D = E$ and $q = f$, it now remains to show that $U = F \cdot S(E) \cdot F \subseteq f_{n-1}(V^n \cdot \tau_{n-1}(E) \cdot V^n)$. It will then be immediate that Lemma~\ref{lem:bpol:subres} holds for $P = V^n \cdot \tau_{n-1}(E) \cdot V^n$. It is immediate by induction in Proposition~\ref{prop:sound2} that,
  \[
    S(E) \subseteq f_{n-1}(\tau_{n-1}(E)).
  \]
  Moreover, since we already know that $F \subseteq f_{n-1}(V)$ is an idempotent by definition, it follows from Lemma~\ref{lem:fmapmult} that,
  \[
    F \cdot S(E) \cdot F =  F^n \cdot S(E) \cdot F^n \subseteq f_{n-1}(V^n \cdot \tau_{n-1}(E) \cdot V^n).
  \]
  This concludes the proof of Lemma~\ref{lem:bpol:subres}.
\end{proof}

\section{Completeness in Theorem~\ref{thm:bpol:carac}}
\label{sec:comp}
In this section, we prove the completeness direction in Theorem~\ref{thm:bpol:carac}. As for the soundness proof, the section is divided in two parts. We start with a preliminary result which applies to Boolean closure in general, \emph{i.e.}, to classes of the form $\bool{\Ds}$ when \Ds is an arbitrary lattice. We then use it in the special case $\Ds = \pol{\Cs}$ to prove the completeness direction of Theorem~\ref{thm:bpol:carac}.

\subsection{Preliminary result}

We fix a lattice \Ds. Moreover, we let $\rho: 2^{A^*} \to R$ be a \textbf{\nice} \ratm. Given a language $L \in \Ds$, we characterize the set $\opti{\bool{\Ds}}{L,\rho} \subseteq R$ using \Ds-optimal \bratauxbd-\imprints when  $\bratauxbd: 2^{A^*} \to 2^R$ is the \ratm introduced in Proposition~\ref{prop:areratms}.

\begin{proposition} \label{prop:comp1}
  Let $L \in \Ds$. For all $s \in \opti{\bpol{\Cs}}{L,\rho}$, we have $r_1,\dots,r_k \in R$ such that $s \leq r_1 + \cdots + r_k$ and for every $i \leq k$, we have $\{r_1+ \cdots +r_k\} \in \opti{\Ds}{L \cap \rho_*\inv(r_i),\bratauxbd}$.
\end{proposition}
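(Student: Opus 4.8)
The statement is the first half of the completeness direction; it says that any $s$ in the $\bool{\Ds}$-optimal imprint on $L$ can be reconstructed from a sum $r_1+\cdots+r_k$ which is in a strong sense ``optimal'' for the auxiliary \ratm $\bratauxbd$ on each of the layers $L\cap\rho_*\inv(r_i)$. The natural approach is to work with an optimal $\bool{\Ds}$-cover $\Kb$ of $L$ for $\rho$ and decompose its members, in exactly the way Fact~\ref{fct:bpol:refine} suggests: each $K\in\Kb$ may be taken of the form $K'\setminus K''$ with $K',K''\in\Ds$. Since $s\in\opti{\bool{\Ds}}{L,\rho}=\prin{\rho}{\Kb}$, there is a single $K=K'\setminus K''$ with $s\le\rho(K)$. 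Because $\rho$ is \nice, $\rho(K)=\rho(w_1)+\cdots+\rho(w_k)$ for finitely many words $w_1,\dots,w_k\in K$; set $r_i=\rho(w_i)$, so that $s\le r_1+\cdots+r_k$ and each $w_i\in K'\setminus K''\subseteq L$ with $\rho(w_i)=r_i$, hence $w_i\in L\cap\rho_*\inv(r_i)$.

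\medskip
\noindent\textbf{The core argument.} It remains to show $\{r_1+\cdots+r_k\}\in\opti{\Ds}{L\cap\rho_*\inv(r_i),\bratauxbd}$ for each fixed $i$, i.e.\ that for every $\Ds$-cover $\Fb$ of $L\cap\rho_*\inv(r_i)$ there is some $F\in\Fb$ with $r_1+\cdots+r_k\le\bratauxbd(F)=\opti{\bool{\Ds}}{F,\rho}$; equivalently, $r_1+\cdots+r_k$ lies in the $\bool{\Ds}$-optimal imprint on $F$. I would argue by contradiction: if no such $F$ exists, then for each $F\in\Fb$ one can pick an optimal $\bool{\Ds}$-cover $\Hb_F$ of $F$ with $r_1+\cdots+r_k\notin\prin{\rho}{\Hb_F}$, i.e.\ $r_1+\cdots+r_k\not\le\rho(H)$ for all $H\in\Hb_F$. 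Combining $\{\Hb_F\mid F\in\Fb\}$ over the finite cover $\Fb$ produces a $\bool{\Ds}$-cover $\Hb$ of $L\cap\rho_*\inv(r_i)$ in which no member $H$ satisfies $r_1+\cdots+r_k\le\rho(H)$. The remaining step is to upgrade $\Hb$ — a cover of the sublanguage $L\cap\rho_*\inv(r_i)$ — to a $\bool{\Ds}$-cover of all of $L$ that still beats $\Kb$, contradicting the optimality of $\Kb$. The trick is that $\rho_*\inv(r_i)$ need not be in $\Ds$, but one can intersect the original cover member $K=K'\setminus K''$ with the languages of $\Hb$ and keep the rest of $\Kb$: outside $\rho_*\inv(r_i)$, inclusion in $K$ already gives the bound, and inside, membership in some $H\in\Hb$ gives $\rho\le\rho(H)\not\ge r_1+\cdots+r_k$, so the element $r_1+\cdots+r_k$ is excluded from the imprint of the refined cover while all previously present elements of $\prin{\rho}{\Kb}$ survive. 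That contradicts $s\le r_1+\cdots+r_k$ being forced into $\opti{\bool{\Ds}}{L,\rho}$.

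\medskip
\noindent\textbf{The main obstacle.} The delicate point is precisely this last refinement: $\rho_*\inv(r_i)$ is generally not a language of $\Ds$ (nor even regular), so one cannot simply split $K$ as $(K\cap\rho_*\inv(r_i))\cup(K\setminus\rho_*\inv(r_i))$ inside $\bool{\Ds}$. I expect one has to be more careful — either replacing each $H\in\Hb$ by $H$ itself (which is in $\bool{\Ds}$) and checking directly that $\{H\cap K\mid H\in\Hb\}\cup(\Kb\setminus\{K\})$ is still a cover of $L$, or handling the words of $L$ not of $\rho$-value $r_i$ by a separate, easy argument. One must verify two things for the refined cover: (i) it covers $L$ (every word $w\in L$ either is covered by $\Kb\setminus\{K\}$, or lies in $K$, in which case if moreover $\rho(w)=r_i$ then $w\in L\cap\rho_*\inv(r_i)$ is covered by $\Hb$, hence by some $H\cap K$; if $\rho(w)\ne r_i$ then $w\in K$ still needs to be covered, which is where one may need to keep $K$ itself restricted suitably, or add the finitely many classes $\rho_*\inv(r)$ for $r\ne r_i$ — care is needed here), and (ii) its $\rho$-imprint does not contain $r_1+\cdots+r_k$. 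Getting (i) exactly right, given that $\rho_*\inv(r_i)\notin\Ds$, is the real content; everything else is bookkeeping with Fact~\ref{fct:inclus2}, Fact~\ref{fct:lattice:optunion} and the definition of \nice \ratms.
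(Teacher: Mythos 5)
Your opening step (pick $K$ in an optimal \bool{\Ds}-cover of $L$ with $s\le\rho(K)$ and decompose $\rho(K)$ by niceness) is fine modulo a small repair: members of a cover of $L$ need not be contained in $L$, so to get $w_i\in L$ you must first intersect the cover with $L\in\Ds$. The genuine gap is in the core step, establishing $\{r_1+\cdots+r_k\}\in\opti{\Ds}{L\cap\rho_*\inv(r_i),\bratauxbd}$, and it is exactly the point you flag yourself. Your contradiction argument must produce a \bool{\Ds}-cover of \emph{all} of $L$ whose $\rho$-imprint omits $r_1+\cdots+r_k=\rho(K)$ (this element does lie in $\opti{\bool{\Ds}}{L,\rho}$, hence in every cover's imprint, so such a cover would indeed be contradictory). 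The refinement $\{H\cap K\mid H\in\Hb\}\cup(\Kb\setminus\{K\})$ does not achieve this, for two reasons. First, coverage: a word of $L\cap K$ whose $\rho$-value differs from $r_i$ need not lie in any other member of \Kb nor in $L\cap\rho_*\inv(r_i)$, and the languages you would need to add ($K\setminus\rho_*\inv(r_i)$, or the classes $\rho_*\inv(r)$) are not in \bool{\Ds} and need not even be regular. Second, even granting coverage, nothing prevents a retained member of $\Kb\setminus\{K\}$ (or whatever covers the leftover words) from dominating $\rho(K)$, in which case the refined cover's imprint still contains $\rho(K)$ and no contradiction follows. So the argument does not go through as sketched, and it is not even clear that the claimed membership holds for the $r_i$ extracted from an arbitrary member $K$ of an arbitrary optimal cover.

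The paper avoids all of this by reversing the order of construction, which is the idea your sketch is missing. One first fixes, for every $q\in R$, an optimal \Ds-cover $\Hb_q$ of $L\cap\rho_*\inv(q)$ for \bratauxbd, and only then builds a \bool{\Ds}-cover \Kb of $L$: the partition of $L$ into classes of the equivalence ``belong to the same members of $\bigcup_q\Hb_q$''. This guarantees that every $K\in\Kb$ meeting $\rho_*\inv(q)$ is \emph{contained} in some $H\in\Hb_q$. One then picks $K$ with $s\in\opti{\bool{\Ds}}{K,\rho}$ (via Fact~\ref{fct:lattice:optunion}), takes $V\subseteq K$ in an optimal \bool{\Ds}-cover of $K$ with $s\le\rho(V)$, and applies niceness to $V$, so that $r_1+\cdots+r_k=\rho(V)\in\opti{\bool{\Ds}}{K,\rho}$; the containment $K\subseteq H$ for some $H\in\Hb_{r_i}$ and Fact~\ref{fct:inclus2} then give $\rho(V)\in\opti{\bool{\Ds}}{H,\rho}=\bratauxbd(H)$, that is, $\{r_1+\cdots+r_k\}\in\prin{\bratauxbd}{\Hb_{r_i}}=\opti{\Ds}{L\cap\rho_*\inv(r_i),\bratauxbd}$. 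Everything is monotone; no refinement of a previously fixed optimal cover of $L$ and no contradiction are needed. Note also that taking the $w_i$ inside such a $V$ rather than inside $K$ matters: the argument needs the sum to lie in $\opti{\bool{\Ds}}{K,\rho}$, and $\rho(K)$ itself need not belong to that set.
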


\begin{proof}
  We fix $L \in \Ds$ and $s \in \opti{\bpol{\Cs}}{L,\rho}$ for the proof.  For every $q \in R$, we let $\Hb_q$ be an optimal \Ds-cover of $L \cap \rho_*\inv(q)$ for $\tau$. We have the following fact.

  \begin{fact} \label{fct:bpol:thepart}
    There exists a \bool{\Ds}-cover \Kb of $L$ such that for every $K \in \Kb$ and  every $q \in R$ such that $K \cap \rho_*\inv(q)\neq\emptyset$, there exists $H \in \Hb_q$ such that $K \subseteq H$.
  \end{fact}

  \begin{proof}
    Let $\Hb = \bigcup_{q \in R} \Hb_q$ and consider the following equivalence $\sim$ defined on $L$. For every $u,v \in L$, we let $u \sim v$ when $u \in H \Leftrightarrow v \in H$ for every $H \in \Hb$. Let \Kb be the partition of $L$ into $\sim$-classes. By definition, \Kb is a cover of $L$. Moreover, it is a \bool{\Ds}-cover. Indeed, by definition, \Kb only contains Boolean combinations of $L \in \bool{\Ds}$ with languages in \Hb (which are in $\Ds\subseteq\bool{\Ds}$). It remains to show that \Kb satisfies the property of the fact.

    Let $q \in R$ and assume that there exists $w \in K \cap \rho_*\inv(q)$. By definition of \Kb, we have $K \subseteq L$ which means that $w \in L \cap \rho_*\inv(q)$. Therefore, since $\Hb_q$ is a cover of $L \cap \rho_*\inv(q)$ by definition, we have $H \in \Hb_q$ such that $w \in H$. Consequently, $K \cap H \neq \emptyset$. Finally, since $K$ is a $\sim$-class by definition of \Kb, it follows from the definition of $\sim$ that $K \subseteq H$.
  \end{proof}

  We let \Kb be the \bool{\Ds}-cover of $L$ described in Fact~\ref{fct:bpol:thepart}. By definition, we have $L \subseteq \bigcup_{K \in \Kb} K$. Hence, it follows from Lemma~\ref{fct:lattice:optunion} that,
  \[
  \opti{\bool{\Ds}}{L,\rho} \subseteq \bigcup_{K \in \Kb} \opti{\bool{\Ds}}{K,\rho}
  \]
  Consequently, since $s \in \opti{\bool{\Ds}}{L,\rho}$, we get some $K \in \Kb$ such that $s \in \opti{\bool{\Ds}}{K,\rho}$. We let \Vb be an optimal \bool{\Ds}-cover of $K$ for $\rho$. Since $K \in \bool{\Ds}$, we may choose \Vb such that $V \subseteq K$ for all $V \in \Vb$. By definition of \Vb, we have $\prin{\rho}{\Vb} = \opti{\bool{\Ds}}{K,\rho}$ and we obtain that $s \in \prin{\rho}{\Vb}$. Therefore, there exists $V \in \Vb$ such that $s \leq \rho(V)$.

  Since $\rho$ is \nice by hypothesis, we have $w_1,\dots,w_k \in V$ such that $\rho(V) = \rho(w_1) + \cdots + \rho(w_k)$. We let $r_i = \rho(w_i)$ for every $i \leq k$. by definition, we have $s \leq r_1 + \cdots + r_k$. Therefore, it remains to show that $\{r_1+ \cdots +r_k\} \in \opti{\Ds}{L \cap \rho_*\inv(r_i),\brataux{\bool{\Ds}}{\rho}}$ for every $i \leq k$.

  We fix $i \leq k$ for the proof. By definition, $\rho(w_i) = r_i$ and $w_i \in V \subseteq K$. Hence, $w_i \in K \cap \rho_*\inv(r_i)$ and it follows from the definition of \Kb in Fact~\ref{fct:bpol:thepart} that there exists a language $H \in \Hb_{r_i}$ such that $K \subseteq H$. Recall that $r_1 + \cdots + r_k = \rho(V)$. Moreover, $\rho(V) \in \prin{\rho}{\Vb}$ and we have $\prin{\rho}{\Vb} = \opti{\bool{\Ds}}{K,\rho}$. Consequently, by Fact~\ref{fct:inclus2},
  \[
  r_1 + \cdots + r_k \in \opti{\bool{\Ds}}{K,\rho} \subseteq \opti{\bool{\Ds}}{H,\rho}
  \]
  By definition of the \ratm \brataux{\bool{\Ds}}{\rho}, we have $\opti{\bool{\Ds}}{H,\rho} = \brataux{\bool{\Ds}}{\rho}(H)$. Thus, the above can be rephrased as follows:
  \[
  \{r_1+\cdots+r_k\} \subseteq \brataux{\bool{\Ds}}{\rho}(H)
  \]
  Consequently, since $H \in \Hb_{r_i}$, we have $\{r_1+\cdots+r_k\} \in \prin{\brataux{\bool{\Ds}}{\rho}}{\Hb_{r_i}}$. Recall that we defined $\Hb_{r_i}$ as an optimal \Ds-cover of $L \cap \rho_*\inv(r_i)$ for $\brataux{\bool{\Ds}}{\rho}$. Hence, we obtain,
  \[
  \{r_1 + \cdots + r_k\} \in \opti{\Ds}{L \cap \rho_*\inv(r_i),\brataux{\bool{\Ds}}{\rho}}
  \]
  This concludes the proof of Proposition~\ref{prop:comp1}.
\end{proof}

\subsection{Completeness proof for Theorem~\ref{thm:bpol:carac}}

We may now come back to our main objective: completeness in Theorem~\ref{thm:bpol:carac}. We fix a finite \vari \Cs and a \nice \mratm $\rho: 2^{A^*} \to R$. We prove that \cbpopti is \bpol{\Cs}-saturated for~$\rho$. Since we already showed in the previous section that every \bpol{\Cs}-saturated subset is included in \cbpopti, Theorem~\ref{thm:bpol:carac} will follow: \cbpopti is the greatest \bpol{\Cs}-saturated subset of $R$. For the sake of avoiding clutter, we write~$S$ for $\cbpopti$.

\begin{remark}
  Contrary to the soundness direction, we do need the hypothesis that $\rho$ is \nice. This is required for applying the above preliminary result, Proposition~\ref{prop:comp1}.
\end{remark}

By Theorem~\ref{thm:polclos}, \pol{\Cs} is a lattice. Therefore, we may instantiate Proposition~\ref{prop:comp1} for our \nice \mratm $\rho: 2^{A^*} \to R$ in the special case when $\Ds = \pol{\Cs}$. We complete it with another result specific to this special case. As for the soundness direction, its proof is based on Theorem~\ref{thm:half:mainpolc}: the characterization of \pol{\Cs}-optimal \imprints. Together, these two results imply that $S = \cbpopti$ is \bpol{\Cs}-saturated for $\rho$.

\begin{proposition} \label{prop:comp2}
  Let $D \in \sclac$ and $r \in R$. For all $U \in \opti{\pol{\Cs}}{D \cap \rho_*\inv(r),\bratauxbc}$, we have $(D,r,U) \in \rbpol{S}$.
\end{proposition}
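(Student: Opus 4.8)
The statement we need is the "local" completeness claim: fixing $S = \cbpopti$, for every $\sim_\Cs$-class $D$, every $r \in R$, and every $U \in \opti{\pol{\Cs}}{D \cap \rho_*\inv(r),\bratauxbc}$, we must have $(D,r,U) \in \rbpol{S}$. The plan is to prove this by induction on the construction of $U$ as an element of the least $\pol{\Cs}$-saturated set, using Theorem~\ref{thm:half:mainpolc} applied to the auxiliary \ratm $\tau = \bratauxbc$ (which, crucially, need not be \nice, but is at least quasi-\tame by Lemma~\ref{lem:nest:bpolrat}---this is exactly the scenario for which we generalized the $\pol{\Cs}$ characterization). First I would set up the bookkeeping: let $\alpha: A^* \to (\sclac) \times R$ be the \Cs-compatible morphism $\alpha(w) = (\ctype{w},\rho(w))$ (as in the proof of Lemma~\ref{lem:bpol:soundopti}), so that $D \cap \rho_*\inv(r) = \alpha\inv(D,r)$, and apply Theorem~\ref{thm:half:mainpolc} to $\alpha$ and the quasi-\mratm $\tau$: this expresses $\popti{\pol{\Cs}}{\alpha}{\tau}$ as $\dclos_{2^R}$ of $\{(D,r,\quasit(P)) \mid (D,r,P) \in X\}$ where $X$ is the least $\pol{\Cs}$-saturated subset of $(\sclac)\times R \times 2^R$ for $\alpha$ and $\tau$, and $\quasit$ is the downset operation $U \mapsto \dclosr U$. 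So it suffices to show that for every $(D,r,P) \in X$, we have $(D,r,P) \in \rbpol{S}$ (the extended-downset rule of $\rbpol{S}$ then absorbs the outer $\dclos_{2^R}$, since $\quasit(P) = \dclosr P \subseteq \dclosr P$).

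The core of the argument is then an induction on the construction of $(D,r,P) \in X$ from trivial elements via downset, multiplication, and $\pol{\Cs}$-closure, matched against the corresponding generation rules of $\rbpol{S}$. For \emph{trivial elements}, $(D,r,P) = (\alpha(w),\tau(w))$ for some $w$; here $\tau(w) = \bratauxbc(\{w\}) = \opti{\bpol{\Cs}}{\{w\},\rho} = \dclosr\{\rho(w)\}$, and since $(\ctype{w},\rho(w),\{\rho(w)\}) \in \rbpol{S}$ is a trivial element of $\rbpol{S}$, extended downset gives $(\ctype{w},\rho(w),\dclosr\{\rho(w)\}) \in \rbpol{S}$. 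For \emph{downset} and \emph{multiplication}, the matching rules of $\rbpol{S}$ are immediate (multiplication of $2^R$-components is the same lifted product in both, and $\cmult$ on $\sclac$ matches). The interesting case is \emph{$\pol{\Cs}$-closure}: from an idempotent $(E,f,F) \in X$ we get $(E,f,F \cdot \tau(\ctype{E}) \cdot F) \in X$, and we must produce the same triple in $\rbpol{S}$. By induction $(E,f,F) \in \rbpol{S}$, and here is where the $S$-restricted closure rule of $\rbpol{S}$ must be made to match the $\pol{\Cs}$-closure rule of $X$: the former produces $(E,f,F \cdot S(E) \cdot F)$ whereas the latter produces $(E,f,F \cdot \tau(\ctype E) \cdot F)$. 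So the crux is to show $\tau(\ctype{E}) \subseteq \dclosr(S(E))$, i.e.\ $\bratauxbc(\ctype E) = \opti{\bpol{\Cs}}{\ctype E,\rho} \subseteq \dclosr(\opti{\bpol{\Cs}}{\ctype E,\rho}) = S(E)$---and in fact $S(E) = \cbpopti(E) = \opti{\bpol{\Cs}}{\ctype E,\rho}$ is a downset, so this is an equality; then $(E,f,F\cdot\tau(\ctype E)\cdot F)$ is obtained from $(E,f,F\cdot S(E)\cdot F) \in \rbpol{S}$ together with the extended-downset rule (since $\tau(\ctype E) = S(E)$ exactly, even the downset is unnecessary). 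This identification $\bratauxbc(D) = \opti{\bpol{\Cs}}{D,\rho} = \cbpopti(D) = S(D)$ for a $\sim_\Cs$-class $D$ is precisely the definitional bridge between the two closure rules.

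The step I expect to be the real obstacle is verifying that Theorem~\ref{thm:half:mainpolc} genuinely applies here: namely that $\tau = \bratauxbc$ is quasi-\tame. By Lemma~\ref{lem:nest:bpolrat} we only know there is \emph{some} quasi-\mratm $\tau'$ with $\tau'(K) = \bratauxbc(K)$ for $K \in \pol{\Cs}$, and $\tau'$ need not equal $\bratauxbc$ on all languages; but since $\popti{\pol{\Cs}}{\alpha}{\tau}$ only depends on the values of $\tau$ on languages of the form $\alpha\inv(s) \cap K$ with $K \in \pol{\Cs}$---and since $\alpha$ is \Cs-compatible, $\alpha\inv(D,r) = D \cap \rho_*\inv(r)$ is an intersection of a language in $\Cs \subseteq \pol{\Cs}$ with $\rho_*\inv(r)$, which must be handled carefully---I would apply Theorem~\ref{thm:half:mainpolc} to the genuine quasi-\mratm $\tau'$ of Lemma~\ref{lem:nest:bpolrat} and argue that $\opti{\pol{\Cs}}{D \cap \rho_*\inv(r),\bratauxbc} = \opti{\pol{\Cs}}{D \cap \rho_*\inv(r),\tau'}$ because $\pol{\Cs}$-covers of $D \cap \rho_*\inv(r)$ can be refined to cover by languages inside $D \in \Cs$, on which $\tau'$ and $\bratauxbc$ agree. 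Getting this refinement right---and more generally threading the non-\nice auxiliary \ratm through the machinery---is the delicate part; the rest is a routine structural induction matching the four generation rules of $X$ to those of $\rbpol{S}$.
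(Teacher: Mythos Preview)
Your overall architecture matches the paper's: define $\alpha(w)=(\ctype{w},\rho(w))$, apply Theorem~\ref{thm:half:mainpolc} to $\alpha$ and the quasi-\mratm $\tau$ supplied by Lemma~\ref{lem:nest:bpolrat}, and then show that the least \pol{\Cs}-saturated set $X$ is contained in $\rbpol{S}$ (equivalently, that $\rbpol{S}$ is itself \pol{\Cs}-saturated). Your handling of downset, multiplication, and \pol{\Cs}-closure is correct and is exactly the paper's argument, including the key identification $\tau(\ctype{E})=S(E)$.

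The genuine gap is in your treatment of the \emph{trivial elements}. You assert that $\tau(w)=\bratauxbc(\{w\})=\opti{\bpol{\Cs}}{\{w\},\rho}=\dclosr\{\rho(w)\}$, but both equalities fail in general. First, $\tau$ (the quasi-\mratm of Lemma~\ref{lem:nest:bpolrat}) agrees with $\bratauxbc$ only on languages in \pol{\Cs}, and the singleton $\{w\}$ is typically \emph{not} in \pol{\Cs}. Second, and more importantly, $\opti{\bpol{\Cs}}{\{w\},\rho}$ equals $\dclosr\{\rho(w)\}$ only when \bpol{\Cs} contains $\{w\}$; for instance with $\Cs=\{\emptyset,A^*\}$ the class \bpol{\Cs} is the piecewise testable languages, which contains no singleton over a nontrivial alphabet. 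So this case is not routine at all.

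The paper's fix is to handle trivial elements \emph{last}, bootstrapping from the other three closure properties of $\rbpol{S}$ already established. For $w=\veps$, the triple $(\ctype{\veps},\rho(\veps),\{\rho(\veps)\})$ is a triple of idempotents in $\rbpol{S}$, so $S$-restricted closure (together with $S(\ctype{\veps})=\tau(\ctype{\veps})$) gives $(\ctype{\veps},\rho(\veps),\tau(\ctype{\veps}))\in\rbpol{S}$, and then extended downset gives $(\ctype{\veps},\rho(\veps),\tau(\veps))\in\rbpol{S}$. For a letter $a$, one combines this with the trivial element $(\ctype{a},\rho(a),\{\rho(a)\})$ via multiplication and then invokes Lemma~\ref{lem:nest:bpolm} (the marked-concatenation identity for $\bratauxbc$, which you did not use) to recognize the resulting third component as $\tau(\ctype{\veps}a\ctype{\veps})\supseteq\tau(a)$. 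For general $w=a_1\cdots a_n$ one multiplies the letter cases and uses the quasi-multiplicativity of $\tau$. In short, the missing idea is that $\tau(w)$ cannot be read off directly; one has to manufacture $(\ctype{w},\rho(w),\tau(w))$ inside $\rbpol{S}$ using $S$-restricted closure for $\veps$, Lemma~\ref{lem:nest:bpolm} for letters, and quasi-multiplicativity for words.
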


We first combine Proposition~\ref{prop:comp1} and Proposition~\ref{prop:comp2} to prove that $S$ is \bpol{\Cs}-saturated for $\rho$ and finish the completeness proof. We have to show that~\eqref{eq:bpol:sat} is satisfied. That is, given $(D,r) \in S$, we have to exhibit $r_1,\dots,r_k \in R$ such that $r \leq r_1 + \cdots + r_k$ and $(D,r_i,\{r_1+ \cdots +r_k\}) \in \rbpol{S}$ for every $i\leq k$.

Since $D \in \sclac$, we have $D \in \Cs \subseteq \pol{\Cs}$. Moreover, since  $S = \cbpopti$ and $(D,r) \in S$, we have $r \in \opti{\bpol{\Cs}}{D,\rho}$. Thus,  Proposition~\ref{prop:comp1} yields \mbox{$r_1,\dots,r_k \in R$} such that $r \leq r_1 + \cdots + r_k$ and $\{r_1+ \cdots +r_k\} \in \opti{\pol{\Cs}}{D \cap \rho_*\inv(r_i),\tau}$ for every $i \leq k$. Then, we obtain from Proposition~\ref{prop:comp2} that $(D,r,\{r_1+ \cdots +r_k\}) \in \rbpol{S}$ for every $i \leq k$. Altogether, this is exactly the property stated in~\eqref{eq:bpol:sat}. This concludes the completeness proof for Theorem~\ref{thm:bpol:carac}. It remains to prove Proposition~\ref{prop:comp2}.

\newcommand{\pocbrat}{\popti{\pol{\Cs}}{\alpha}{\tau}}
\begin{proof}[Proof of Proposition~\ref{prop:comp2}]
  The argument is based on Theorem~\ref{thm:half:mainpolc}. Clearly, the Cartesian product $(\sclac) \times R$ is a monoid when equipped with the componentwise multiplication. Let $\alpha: A^* \to (\sclac) \times R$ be the morphism defined by $\alpha(w) = (\ctype{w},\rho(w))$. Clearly, $\alpha$ is a \Cs-compatible morphism: for every pair $(D,r) \in (\sclac) \times R$, it suffices to define $\ctype{(D,r)} = D$.

  Moreover, \pol{\Cs} is a \pvari closed under concatenation and $\rho: 2^{A^*} \to R$ is a \mratm. Thus, Lemma~\ref{lem:nest:bpolrat} yields a \ratm $\tau: 2^{A^*} \to 2^R$ which is quasi-\tame for the endomorphism $\quasit: U\mapsto\dclosr U$ of $(2^R,\cup)$ and such that for every $K \in \pol{\Cs}$, we have $\tau(K) = \bratauxbc(K)$. We have the following key lemma which we need to apply Theorem~\ref{thm:half:mainpolc}.

  \begin{lemma} \label{lem:bpol:thecomp}
    The set $\rbpols \subseteq (\sclac) \times R \times 2^R$ is \pol{\Cs}-saturated for $\alpha$ and $\tau$.
  \end{lemma}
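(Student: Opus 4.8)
The plan is to verify directly that $\rbpols$ satisfies each of the four closure properties in the definition of \pol{\Cs}-saturated subsets of $(\sclac) \times R \times 2^R$, where the ambient monoid is $(\sclac) \times R$ (componentwise multiplication), the morphism is $\alpha(w) = (\ctype{w},\rho(w))$, and the \ratm is $\tau$. Note that the notion of \pol{\Cs}-saturation here is with respect to the \emph{quasi-}\tame \ratm $\tau$, so the \pol{\Cs}-closure axiom involves $\tau$, and the multiplication used on the rating set $2^R$ is the lifted multiplication of $R$. A key bookkeeping point to keep in mind throughout is that for any $(C,q,U) \in \rbpols$ there is a word $w$ with $C = \ctype{w}$, $q = \rho(w)$; hence whenever $\rbpols$ contains a triple, its first two coordinates are of the form $(\ctype{w},\rho(w)) = \alpha(w)$, which is exactly the shape required by the ``trivial elements'' condition and is compatible with the $S$-restricted closure operation of $\rbpols$.

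First I would handle the four conditions in turn. \emph{Trivial elements}: for $w \in A^*$ we must check $(\alpha(w),\tau(w)) = (\ctype{w},\rho(w),\tau(w)) \in \rbpols$; but $\tau(w) = \bratauxbc(w) = \opti{\bpol{\Cs}}{\{w\},\rho}$, and since $\{w\} \in \pol{\Cs}$ the singleton cover $\{\{w\}\}$ is optimal, so $\tau(w) = \dclosr\{\rho(w)\}$. By the ``trivial elements'' clause of $\rbpols$ we get $(\ctype{w},\rho(w),\{\rho(w)\}) \in \rbpols$, and then the ``extended downset'' clause gives $(\ctype{w},\rho(w),\dclosr\{\rho(w)\}) = (\alpha(w),\tau(w)) \in \rbpols$. \emph{Downset}: this is exactly the ``extended downset'' clause of $\rbpols$ (decreasing the third coordinate to any subset of its own downset; in particular, since $\dclosr U \supseteq V$ for $V$ a coordinatewise-smaller triple, one unwinds that $\dclosr{\rbpols} = \rbpols$). \emph{Multiplication}: given $(C_1,q_1,U_1), (C_2,q_2,U_2) \in \rbpols$, the ``multiplication'' clause of $\rbpols$ directly yields $(C_1\cmult C_2, q_1q_2, U_1U_2) \in \rbpols$, which is the required closure under the componentwise product in $(\sclac)\times R$ and the lifted product on $2^R$.

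The interesting case is \emph{\pol{\Cs}-closure}: given a multiplicative idempotent $(E,f,F) \in \rbpols$ (so $E\cmult E = E$, $ff = f$, and $FF = F$ as subsets of $R$), we must show $(E,f,F \cdot \tau(\ctype{(E,f)}) \cdot F) \in \rbpols$, where $\ctype{(E,f)} = E$ by the chosen \Cs-compatibility structure on $(\sclac)\times R$. So the goal is $(E,f,F\cdot\tau(E)\cdot F) \in \rbpols$. Here $\tau(E) = \bratauxbc(E) = \opti{\bpol{\Cs}}{E,\rho}$ since $E \in \Cs \subseteq \pol{\Cs}$. Recall $S = \cbpopti$, so $S(E) = \opti{\bpol{\Cs}}{E,\rho} = \tau(E)$. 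Therefore $F\cdot\tau(E)\cdot F = F\cdot S(E)\cdot F$, and the desired membership is precisely the conclusion of the ``$S$-restricted closure'' clause of $\rbpols$ applied to the idempotent triple $(E,f,F)$. I expect this last identification --- matching $\tau(E)$ against $S(E)$ via the definition of $\bratauxbc$ and the fact that $E$ is a \equc-class lying in $\Cs$ --- to be the conceptual crux, together with checking that the idempotency hypotheses transfer correctly (in particular that a multiplicative idempotent of the rating set $2^R$ is a set $F$ with $FF = F$, matching the ``triple of idempotents'' requirement in the $S$-restricted closure clause). Once all four conditions are verified, $\rbpols$ is \pol{\Cs}-saturated for $\alpha$ and $\tau$, proving Lemma~\ref{lem:bpol:thecomp}.
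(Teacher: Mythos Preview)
Your handling of downset, multiplication, and \pol{\Cs}-closure is correct and matches the paper. The gap is in the trivial elements, which you dispatch too quickly.

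You claim $\tau(w) = \bratauxbc(w)$ because $\{w\} \in \pol{\Cs}$, and then that $\bratauxbc(w) = \dclosr\{\rho(w)\}$ because $\{\{w\}\}$ is an optimal \bpol{\Cs}-cover. The second step is fine (singletons do lie in \bpol{\Cs}), but the first is not: $\{w\}$ is generally \emph{not} in \pol{\Cs}. For instance, with $\Cs = \{\emptyset,A^*\}$, every language in \pol{\Cs} containing $\veps$ must equal $A^*$, so $\{\veps\} \notin \pol{\Cs}$. The only thing Lemma~\ref{lem:nest:bpolrat} gives you is that $\tau$ and $\bratauxbc$ agree on languages in \pol{\Cs}; on $\{\veps\}$ the definition of $\tau$ in that lemma's proof yields $\tau(\veps) = \bigcap_{L \in \pol{\Cs},\,\veps \in L}\bratauxbc(L)$, which in the example above is $\bratauxbc(A^*) = \opti{\bpol{\Cs}}{\rho}$, typically much larger than $\dclosr\{1_R\}$. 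So the triple $(\ctype{\veps},\rho(\veps),\tau(\veps))$ cannot be reached from $(\ctype{\veps},\rho(\veps),\{\rho(\veps)\})$ by extended downset alone.

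The paper handles this by proving the other three properties \emph{first} and then bootstrapping the trivial elements: for $w=\veps$, the triple $(\ctype{\veps},\rho(\veps),\{\rho(\veps)\})$ consists of idempotents, so the already-established \pol{\Cs}-closure gives $(\ctype{\veps},\rho(\veps),\tau(\ctype{\veps}))$, and downset then yields $(\ctype{\veps},\rho(\veps),\tau(\veps))$. Single letters are obtained by sandwiching $(\ctype{a},\rho(a),\{\rho(a)\})$ between two copies of the $\veps$-triple and invoking Lemma~\ref{lem:nest:bpolm} on marked concatenation to identify the resulting third coordinate with $\tau(\ctype{\veps}a\ctype{\veps}) \supseteq \tau(a)$. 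General words then follow by multiplication and quasi-multiplicativity of $\tau$. The order of the argument matters: you need \pol{\Cs}-closure in hand before you can get the trivial elements.
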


  Let us first use the lemma to conclude the proof of Proposition~\ref{prop:comp2}. Let $D \in \sclac$ and $r \in R$. Consider $U \in \opti{\pol{\Cs}}{D \cap \rho_*\inv(r),\bratauxbc}$. We show that $(D,r,U) \in \rbpol{S}$.  Since $\tau(K) = \bratauxbc(K)$ for every $K \in \pol{\Cs}$, it is immediate that,
  \[
  \opti{\pol{\Cs}}{D \cap \rho_*\inv(r),\tau} = \opti{\pol{\Cs}}{D \cap \rho_*\inv(r),\bratauxbc}
  \]
  Thus $U \in \opti{\pol{\Cs}}{D \cap \rho_*\inv(r),\tau}$ which means that $(D,r,U) \in \pocbrat$ by definition. Moreover, \rbpols is \pol{\Cs}-saturated for $\alpha$ and $\tau$ by Lemma~\ref{lem:bpol:thecomp}. Hence, since $\tau: 2^{A^*} \to 2^R$ is a quasi-\mratm for $\quasit: U \mapsto\dclosr U$, it follows from Theorem~\ref{thm:half:mainpolc} that,
  \[
  \pocbrat \subseteq \dclosp{2^R} \big\{(D',r',\dclosr V) \mid (D',r',V) \in \rbpols\big\}.
  \]
  Altogether, we get $V \in 2^R$ such that $(D,r,V) \in \rbpols$ and $U \subseteq \dclosr V$.  By closure under extended downset in the definition of \rbpols, this yields $(D,r,U) \in \rbpol{S}$ as desired.

  \medskip

  It remains to prove Lemma~\ref{lem:bpol:thecomp}: \rbpols is \pol{\Cs}-saturated for $\alpha$ and $\tau$. We have four conditions to check. We leave the trivial elements for last as we need the other properties to handle them.  It is immediate that $\rbpols$ is closed under downset and multiplication as stated in the definition of \pol{\Cs}-saturated subsets. This is implied by the closure under extended downset and multiplication in the definition of \rbpols. We turn to \pol{\Cs}-closure. Let $(E,f,F) \in \rbpols$ be a triple of multiplicative idempotents. We prove that $(E,f,F\cdot \tau(E) \cdot F) \in \rbpols$. We may use $S$-restricted closure in the definition of \rbpols to obtain $(E,f,F \cdot S(E) \cdot F) \in \rbpols$. Moreover, $S = \cbpopti$ by definition and $\tau(E)= \bratauxbc(E) = \opti{\bpol{\Cs}}{E,\rho}$ since $E\in\Cs\subseteq\pol{\Cs}$. Thus, $S(E) = \tau(E)$. We obtain $(E,f,F \cdot \tau(E) \cdot F) \in \rbpols$ as desired.

    We turn to the trivial elements. For $w \in A^*$, we show that \mbox{$(\ctype{w},\rho(w),\tau(w))\in \rbpols$}. First, we consider the special case $w = \veps$. By definition of \rbpols, we have  $(\ctype{\veps},\rho(\veps),\{\rho(\veps)\}) \in \rbpols$. Since this is a triple of multiplicative idempotents and we already established that \rbpols satisfies \pol{\Cs}-closure, this implies $(\ctype{\veps},\rho(\veps),\tau(\ctype{\veps})) \in \rbpols$. It then follows from closure under downset that $(\ctype{\veps},\rho(\veps),\tau(\veps)) \in \rbpols$ since $\tau(\veps) \subseteq \tau(\ctype{\veps})$. We now consider the case when $w = a$ for some $a\in A$. We already established that $(\ctype{\veps},\rho(\veps),\tau(\ctype{\veps})) \in \rbpols$. Moreover, we know that $(\ctype{a},\rho(a),\{\rho(a)\}) \in \rbpols$ by definition of \rbpols. Using closure under multiplication and extended downset, we get,
   \[
    (\ctype{a},\rho(a),\dclosr \left(\tau(\ctype{\veps}) \cdot \{\rho(a)\} \cdot \tau(\ctype{\veps})\right)) \in \rbpols
    \]
    We know that \pol{\Cs} is a \pvari of regular languages closed under marked concatenation by Theorem~\ref{thm:polclos}. Therefore, since $\tau$ and \bratauxbc coincide over languages in \pol{\Cs}, we obtain from Lemma~\ref{lem:nest:bpolm} that,
 \[
 \tau(\ctype{\veps}a\ctype{\veps}) = \dclosr \left(\tau(\ctype{\veps}) \cdot \{\rho(a)\} \cdot \tau(\ctype{\veps})\right)
 \]
 Altogether, we obtain $(\ctype{a},\rho(a),\tau(\ctype{\veps}a\ctype{\veps})) \in \rbpols$. Moreover, we have $\tau(a)\subseteq\tau(\ctype{\veps}a\ctype{\veps})$. Thus, we obtain from closure under downset that $(\ctype{a},\rho(a),\tau(a))\in \rbpols$ as desired. Finally, assume that $w = a_1 \cdots a_n$ for $n \geq 2$ and $a_1,\dots,a_n \in A$. We already established that $(\ctype{a_i},\rho(a_i),\tau(a_i))\in \rbpols$ for every $i \leq n$. Therefore, closure under multiplication and extended downset yield,
 \[
 (\ctype{w},\rho(w),\dclosr \left(\tau(a_1) \cdots \tau(a_n)\right)) \in \rbpols
 \]
 Since $\tau$ is quasi-\tame for the endomorphism $\quasit: U\mapsto\dclosr U$ of $(2^R,\cup)$, this yields $(\ctype{w},\rho(w), \tau(w)) \in \rbpols$, concluding the proof.
\end{proof}

\section{Conclusion}
\label{sec:conc}
We proved that separation and covering are decidable for all classes of the form \bpol{\Cs} when \Cs is a finite \vari. This yields separation and covering algorithms for a whole family of classes. Arguably, the most important one is the level two in the Straubing-Thérien hierarchy (which corresponds to the logic \bswd). Additionally, this result can be lifted to depth-two using an effective reduction of~\cite{pzsucc2} to the level two in the Straubing-Thérien hierarchy.

An interesting consequence of our results is that since we proved the decidability of separation for the level two in the Straubing-Thérien hierarchy, the main theorem of~\cite{pzqalt} is an immediate corollary: membership for this level is decidable. However, the algorithm of~\cite{pzqalt} was actually based on a characterization theorem: languages of level two in the Straubing-Thérien hierarchy are characterized by a syntactic property of a canonical recognizer (\emph{i.e.}, their syntactic monoid). It turns out that one can also deduce this characterization theorem from our results (this does require some combinatorial work however). In fact, one may generalize it to all classes \bpol{\Cs} when \Cs is a finite \vari.

Finally, the main and most natural follow-up question is much harder: can our results be pushed to higher levels within concatenation hierarchies? For now, we know that given any finite \vari \Cs, \pol{\Cs}, \bpol{\Cs} and \pol{\bpol{\Cs}} have decidable covering (the former and the latter are results of~\cite{pseps3j}). Consequently, the next relevant levels are \bpol{\bpol{\Cs}} and \pol{\bpol{\bpol{\Cs}}}.

\bibliographystyle{alpha}
\bibliography{main}

\end{document}